\documentclass[11pt]{article}
\usepackage{amsmath,amsthm,amssymb,amsfonts}
\usepackage[colorlinks]{hyperref}
\hypersetup{linkcolor=blue, citecolor=blue}
\usepackage{cleveref}
\usepackage{graphicx}
\usepackage{fullpage}
\usepackage{float}
\usepackage{tikz}
\usepackage{forest}
\usetikzlibrary{quantikz2}
\usepackage{appendix}
\usepackage[normalem]{ulem}
\usepackage{thm-restate}
\usepackage{tcolorbox}
\usepackage{enumitem}

\allowdisplaybreaks

\Crefformat{equation}{(#2#1#3)}

\usepackage{complexity}
\newcommand{\QTC}{\ComplexityFont{QTC}}
\newcommand{\EQTC}{\ComplexityFont{EQTC}}
\newcommand{\BQTC}{\ComplexityFont{BQTC}}
\newcommand{\BQAC}{\ComplexityFont{BQAC}}
\newcommand{\EQAC}{\ComplexityFont{EQAC}}
\newcommand{\EQNC}{\ComplexityFont{EQNC}}

\newcommand{\Copy}{\mathsf{Copy}}

\newtheorem{theorem}{Theorem}

\newtheorem{claim}[theorem]{Claim}

\newtheorem{corollary}[theorem]{Corollary}

\newtheorem{lemma}[theorem]{Lemma}

\newtheorem{remark}[theorem]{Remark}

\newcommand{\AND}{\mathsf{AND}}
\newcommand{\OR}{\mathsf{OR}}

\renewcommand{\MOD}{\mathsf{MOD}}
% Boolean mod function, rather than the gate

\newcommand{\EX}{\mathsf{EX}}

\newcommand{\Maj}{\operatorname{Maj}}
\newcommand{\Ind}{\mathsf{Ind}}
\newcommand{\Sym}{\mathsf{Sym}}

\newcommand{\Fanout}{\mathsf F} % Daniel: I haven't been super happy with any of the notation we've used here, so I'm trying with a command that we can change later. Jack: I am perfectly happy using mathsf for these gates (or anything else you want to change it to) :P

\newcommand{\Threshold}{\mathsf{Th}} 
\newcommand{\Parity}{\mathsf{Parity}} 

\newcommand{\sket}[1]{\,|#1\rangle}

\renewcommand{\epsilon}{\varepsilon}

\begin{document}

\title{$\mathsf{QAC}^0$ Contains $\mathsf{TC}^0$ (with Many Copies of the Input)}

\author{
Daniel Grier\thanks{UCSD. Email: \texttt{dgrier@ucsd.edu}}
\and 
Jackson Morris\thanks{UCSD. Email: \texttt{jrm035@ucsd.edu}}
\and
Kewen Wu\thanks{Institute for Advanced Study. Email: \texttt{shlw\_kevin@hotmail.com}.}
}
\date{}
\maketitle

\begin{abstract}
$\mathsf{QAC}^0$ is the class of constant-depth polynomial-size quantum circuits constructed from arbitrary single-qubit gates and generalized Toffoli gates. 
It is arguably the smallest natural class of constant-depth quantum computation which has not been shown useful for computing \emph{any} non-trivial Boolean function. Despite this, many attempts to port classical $\mathsf{AC}^0$ lower bounds to $\mathsf{QAC}^0$ have failed. 

We give one possible explanation of this: $\mathsf{QAC}^0$ circuits are significantly more powerful than their classical counterparts.
We show the unconditional separation $\mathsf{QAC}^0\not\subset\mathsf{AC}^0[p]$ for \emph{decision} problems, which also resolves for the first time whether $\mathsf{AC}^0$ could be more powerful than $\mathsf{QAC}^0$.
Moreover, we prove that $\mathsf{QAC}^0$ circuits can compute a wide range of Boolean functions if given multiple copies of the input: $\mathsf{TC}^0 \subseteq \mathsf{QAC}^0 \circ \mathsf{NC}^0$. 
Along the way, we introduce an amplitude amplification technique that makes several approximate constant-depth constructions exact.
\end{abstract}

\tableofcontents
\newpage

\section{Introduction}\label{sec:intro}

Constant-depth quantum circuits have long played a central role in our understanding of how quantum computers can gain an advantage over their classical counterparts. Indeed, even constant-depth circuits consisting entirely of single- and two-qubit gates can provably outperform classical circuits at a variety of sampling and searching tasks \cite{bravyi2018quantum, watts2019exponential, grier2020interactive, watts2023unconditional,kane2024locality, sampqnc0}. However, as soon as we consider \emph{decision} problems (where to goal is to compute a single output bit of a Boolean function), the story changes dramatically. The small light cones of such constant-depth quantum circuits significantly constrain their behavior, making them no more powerful than constant-depth classical circuits. This phenomenon manifests as the complexity class equality $\QNC^0 = \NC^0$.

In other words, to witness the power of constant-depth quantum circuits for computing Boolean functions, the quantum circuit must have access to large entangling gates that can act on many qubits at once. One particularly important class of constant-depth polynomial-size quantum circuits is $\QAC^0$, where the circuit can apply arbitrary single-qubit gates as well as generalized Toffoli gates (i.e., the reversible $n$-bit $\AND$ function). $\QAC^0$ has garnered recent attention as a possible viable model from some near-term quantum hardware \cite{wang2001multibit}, and moreover, it has a long history of study as the natural analog to the famous classical circuit class $\AC^0$. While it was known that $\AC^0$ has quite limited computational power, it has been much harder to show limitations on the power of $\QAC^0$. 

Perhaps the most famous and illustrative example of this discrepancy is witnessed by the parity function. While there are several techniques for showing that parity is not computable in $\AC^0$, $\QAC^0$ has survived a long line of research seeking to prove the same result in the quantum world \cite{lb_fanout,depth2,rosenthal, pauli_spec,anshu_dong_ou_yao:2024_QAC0,bera_lb,depth3}. Critically, the same techniques that allow for lower bounds in the classical world, like random restrictions \cite{ajtai_parity, fss_ac0, hastad_thesis} or Fourier concentration \cite{lmn}, have failed to port over to the quantum world (at least if you do not restrict the use of ancillas).
Of course, one possible explanation of this phenomenon is that $\QAC^0$ are just significantly more powerful than previously assumed.

With this in mind, it is natural to search for $\QAC^0$ circuits which might exemplify this power. Until recently, this approach has received relatively little attention. The first nontrivial constant-depth $\QAC$ circuit\footnote{We will use $\QAC$ to refer to quantum circuits consisting of single-qubit and generalized Toffoli gates. That is, $\QAC^0$ is the class of $\QAC$ circuits of constant depth and polynomial size.} construction was given by Rosenthal \cite{rosenthal} where generalized Toffoli gates of exponential size are leveraged in constant depth to approximately compute parity and the quantum fanout\footnote{Quantum fanout is the following operation on classical basis states: $\ket{b, x_1, \ldots, x_n} \mapsto \ket{b, x_1 \oplus b, \ldots, x_n \oplus b}$.} gate.
More recently, it was shown that weak pseudorandom unitaries can be implemented with $\QAC^0$ circuits \cite{foxman2025random}. One ingredient in this construction involves shrinking the exponential-size circuit of \cite{rosenthal} to compute fanout on logarithmically many qubits using a $\QAC^0$ circuit of polynomial size. We will also make use of this technique in this paper.
Nevertheless, it is unclear how one might extend this construction to implement fanout on a larger number of qubits. This lack of large fanout serves as a potential barrier for certain circuit construction techniques.
For example, one can show that large fanout is necessary for classical $\AC^0$ circuits to compute even relatively simple Boolean functions such as the indexing function (see \Cref{sec:index_fanout} for a proof).

In light of this, it is natural to search for inherently quantum primitives that can be constructed with generalized Toffoli gates to compute non-trivial Boolean functions.
This search is exactly the focus of our work. 
In particular, we give a decision problem which can be solved by a $\QAC^0$ circuit, but requires exponential-size $\AC^0[p]$ circuits.

\begin{theorem}[See also \Cref{thm:qac0_sep}]\label{thm:intro_sep_thm}
    There exists a language $L$ which can be decided by a $\QAC^0$ circuit with perfect completeness and soundness $2^{-\poly(n)}$ on inputs of size $n$. However, $L$ requires $\AC^0[p]$ circuits of size $2^{\poly(n)}$ for all primes $p > 1$. Thus, $\BQAC^0 \not \subset \AC^0[p]$.
\end{theorem}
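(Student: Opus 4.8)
The plan is to exhibit a single Boolean function that is hard for $\AC^0[p]$ for \emph{every} prime $p$, and to fold the multiple copies of the input demanded by our $\QAC^0$ construction directly into the input encoding of the language, so that no $\NC^0$ preprocessing is needed and we land in $\BQAC^0$ proper. For the hard function I would take $f=\Maj_k$, the majority of $k$ bits (the function $\MOD_6$ would serve equally well). Two facts drive the argument. First, by Smolensky's theorem $\MOD_6$ has no $\AC^0[p]$ circuit for any prime $p$ — since $6$ always has a prime factor distinct from $p$ — and the same holds for $\Maj$ via the standard constant‑depth reduction from $\MOD_6$ to majority; moreover the bound is exponential, i.e.\ $\Maj_k$ requires $\AC^0[p]$ circuits of size $2^{k^{\Omega(1)}}$ at every constant depth, for every prime $p$. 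Second, $\Maj\in\TC^0$, so by our containment $\TC^0\subseteq\QAC^0\circ\NC^0$ there is a $\QAC^0$ circuit $Q$ that, fed $m=\poly(k)$ copies $x^{\otimes m}$ of the $k$‑bit input, outputs $\Maj(x)$ with perfect completeness and soundness error $2^{-\poly(k)}$ — perfect completeness being exactly what the amplitude‑amplification step buys us.

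Next I would define the language $L$: on inputs of length $n=mk$, declare $w\in L$ iff $w=x^{\otimes m}$ consists of $m$ identical $k$‑bit blocks with $\Maj(x)=1$; on input lengths not of this form, $L$ is empty. To place $L$ in $\BQAC^0$, the deciding circuit would do three things: (i) verify reversibly into a flag qubit that $w$ consists of $m$ identical blocks, which is the constant‑depth $\AC^0$ predicate $\bigwedge_{j=2}^{m}\bigwedge_{i=1}^{k}\bigl(w^{(1)}_i=w^{(j)}_i\bigr)$ and hence available in $\QAC^0$; (ii) run $Q$ on $w$, obtaining an output qubit that, whenever $w=x^{\otimes m}$, carries $\Maj(x)$ up to the error above; and (iii) $\AND$ the flag qubit with $Q$'s output qubit into a fresh ancilla qubit using one generalized Toffoli, then measure that ancilla. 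If $w\in L$, the flag is deterministically $1$ and $Q$ accepts with probability $1$, so we accept with probability $1$; if $w\notin L$, then either $w$ fails the block test (flag deterministically $0$, so we reject with probability $1$) or $w=x^{\otimes m}$ with $\Maj(x)=0$ (so $Q$, hence the circuit, rejects with probability $\ge 1-2^{-\poly(n)}$). Every piece of logic added on top of $Q$ is error‑free, so $L$ is decided with perfect completeness and soundness $2^{-\poly(n)}$.

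For the lower bound I would argue by contradiction. Given any $\AC^0[p]$ circuit $C$ of size $s$ and depth $d$ deciding $L$ on $n$‑bit inputs, substituting $x_i$ for the $i$‑th position of every one of the $m$ blocks turns $C$ into an $\AC^0[p]$ circuit of the same size $s$ and depth $d$ on $k$ inputs that computes $\Maj_k$, since $x^{\otimes m}\in L\iff\Maj(x)=1$ by definition. Because $n=mk=\poly(k)$, the exponential hardness of $\Maj_k$ forces $s\ge 2^{k^{\Omega(1)}}=2^{n^{\Omega(1)}}$, so $L$ has no subexponential $\AC^0[p]$ circuit, for any prime $p$. This is exactly $\BQAC^0\not\subset\AC^0[p]$.

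I expect the real difficulty to lie not in this separation but in the ingredient it rests on: the containment $\TC^0\subseteq\QAC^0\circ\NC^0$, and within it the amplitude‑amplification technique that turns the approximate constant‑depth constructions — built on the logarithmic‑fanout $\QAC^0$ gadget of \cite{foxman2025random} — into exact ones, which is what yields perfect completeness. The only genuine subtleties in the reduction itself are (i) making sure the $\QAC^0$ circuit for $L$ has a well‑defined accept/reject on \emph{all} inputs and not merely on well‑formed $x^{\otimes m}$ (handled by the verification‑flag $\AND$ gadget), and (ii) checking that the hardness of $\Maj$, equivalently of $\MOD_6$, holds simultaneously for every prime $p$ with an exponential size bound — this is precisely the non‑prime‑power case of Smolensky's theorem composed with the $\AC^0$ reduction from $\MOD_6$ to majority.
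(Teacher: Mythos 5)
Your proposal is correct and follows essentially the same route as the paper: the language is the copy-lifted majority function, membership is decided by a coordinate-wise equality check with generalized Toffoli gates ANDed with the multi-copy $\QAC^0$ circuit for $\Maj$ (which has one-sided error, giving perfect completeness), and the lower bound follows by substituting identical copies into any $\AC^0[p]$ circuit to recover a circuit for $\Maj$, contradicting Razborov--Smolensky. This matches the paper's \Cref{lem:copy_lift}, \Cref{cor:tc_0_copy_in_qac0}, and \Cref{thm:qac0_sep}.
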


We note that $\AC^0[p]$ strictly contains $\AC^0$ \cite{ajtai_parity, fss_ac0, hastad_thesis}, so \Cref{thm:intro_sep_thm} immediately implies the novel separation $\QAC^0\not\subset\AC^0$. In other words, prior to our result, it was conceivable that $\AC^0$ was \emph{strictly more powerful} than $\QAC^0$ for solving decision problems. We refute this possibility, showing that constant-depth quantum circuits can implement hard-to-compute Boolean functions, even without large fanout.

It turns out that \Cref{thm:intro_sep_thm} follows from what is perhaps an even more surprising aspect of $\QAC^0$ circuits. Namely, we show that providing $\QAC^0$ circuits with polynomially many copies of the input string allows them to simulate arbitrary $\TC^0$ computations.\footnote{Recall that $\TC^0$ is the set of languages that can be computed with constant-depth polynomial-size threshold circuits. $\TC^0$ strictly contains $\AC^0[p]$ \cite{razborov,smolensky}.}

\begin{theorem}[See also \Cref{thm:tc0_in_qac0_copy}]\label{thm:poly_copies}
     Any $\TC^0$-computable function can be decided in $\QAC^0$ with bounded error and polynomially many copies of the input. Equivalently, $\TC^0 \subseteq \BQAC^0 \circ \NC^0$.
\end{theorem}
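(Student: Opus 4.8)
The plan is to simulate an arbitrary $\TC^0$ computation in constant depth by reducing it to a single counting primitive that a polynomial-size $\QAC^0$ circuit can realize once it is handed many copies of the bits it has to count. First I would reduce all of $\TC^0$ to evaluating a \emph{single} threshold gate. Take a depth-$d$, size-$s$ threshold circuit for the target $f$; since $d=O(1)$, unfold it into a tree whose value at each node is a threshold of the values at its $\le s$ children and whose leaves are input literals. The tree has $s^{O(d)}=\poly(n)$ nodes, and --- crucially --- each internal node now feeds exactly one parent, so the only wire-duplication left is at the leaves, which read the input. Hence it suffices to (i) supply $\poly(n)$ copies of each input bit (exactly what an $\NC^0$ layer does, each of its output wires depending on one input bit --- this is also why $\QAC^0$-with-copies equals $\QAC^0\circ\NC^0$, since a $\QAC^0$ circuit can redo any $O(1)$-local $\NC^0$ map from the copies), and (ii) give a constant-depth, polynomial-size $\QAC^0$ circuit performing $\sket{z}\sket{0}\mapsto\sket{z}\sket{\Threshold(z)}$ on $m=\poly(n)$ qubits; evaluating the tree bottom-up with one such gadget per node then computes $f$.

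Next I would reduce this threshold gadget to fan-out, equivalently to parity. The construction of H{\o}yer and {\v S}palek shows that $\QAC^0$ augmented with arbitrary-arity fan-out gates computes $\Threshold$ --- indeed any $\TC^0$ function --- exactly in constant depth: prepare a uniform superposition over a counting register of size $N>m$ a power of two, apply the controlled phase $\sket{k}\sket{z}\mapsto e^{2\pi i k|z|/N}\sket{k}\sket{z}$ (which is where the register gets fanned out across all $m$ positions), apply an inverse QFT, and read $|z|$ exactly, after which thresholding is pure classical post-processing. Since fan-out and parity are interreducible in constant depth and size (conjugate the fan-out targets by Hadamards), what remains is a constant-depth, polynomial-size $\QAC^0$ circuit that applies fan-out --- or computes parity --- on $m=\poly(n)$ qubits.

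That last gadget is the heart of the proof and the step I expect to be hardest. Rosenthal's cat-state construction computes parity/fan-out approximately in constant depth but at superpolynomial size, and it is known to shrink to polynomial size only for $O(\log n)$ qubits; the plan is to push the parameter up to $\poly(n)$ qubits by exploiting the abundance of copies --- e.g.\ copies let one avoid fanning out the inputs \emph{inside} Rosenthal's recursion (a source of the blow-up), and let one run several independent attempts and cross-check them to obtain a version whose failure branch carries a detectable flag. Given such a flagged approximate gadget, I would apply the amplitude-amplification technique introduced in this paper to boost it to $2^{-\poly(n)}$ error, and in fact to perfect completeness: a constant-depth circuit cannot iterate Grover-style, so the gadget must be arranged so that a single (or $O(1)$) rounds of an \emph{exact} amplitude-amplification primitive suffice --- for instance by tuning the success amplitude to a known value and using the spare copies to carry out the required reflections in parallel across the attempts.

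With an (essentially) exact parity/fan-out gadget in hand, every node gadget in the tree of the first two paragraphs acts as an exact classical-reversible map on computational-basis inputs, so composing the $\poly(n)$ of them on the input copies carries $\sket{x}^{\otimes\poly(n)}$ to a state whose designated output qubit is $\sket{f(x)}$ up to $2^{-\poly(n)}$ error, with no accumulation of garbage; re-evaluating lower subtrees from fresh input copies wherever a gadget internally needs several copies of a child's value keeps the total size $s^{O(d)}=\poly(n)$. This yields $\TC^0\subseteq\BQAC^0\circ\NC^0$. The points still to be checked are comparatively routine: a union bound showing the $\poly(n)$-gadget composition does not blow up the error, that the exact amplitude-amplification primitive is genuinely constant depth when supplied enough copies, and that the final measurement of the root qubit reads $f(x)$ with the stated bias --- whereas the real difficulty, as noted, is obtaining the polynomial-size constant-depth parity gadget on $\poly(n)$ qubits and making it exact.
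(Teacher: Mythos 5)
There is a genuine gap, and it sits exactly where you flag it: the ``heart of the proof,'' a constant-depth polynomial-size $\QAC$ circuit computing parity/fanout on $\poly(n)$ qubits, is not something you can construct by the means you sketch --- it is essentially the open problem (is $\Parity\in\QAC^0$? is $\QAC^0=\QAC^0_{\mathrm{wf}}$?) that the paper is deliberately engineered to avoid. The known shrinking of Rosenthal's construction gives poly-size fanout only on $O(\polylog(n))$ qubits, and the blow-up is in the number of fanned-out \emph{target} qubits, not in any duplication of the classical input; so ``abundance of copies'' does not remove it. Worse, copies of the input are useless precisely where your reduction needs fanout: in the H{\o}yer--\v{S}palek threshold gadget the fanout acts on a counting register held \emph{in superposition}, and at internal tree nodes it would act on computed intermediate values --- neither can be replaced by classical copies of $x$ (that would amount to cloning). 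The amplitude-amplification patch also does not deliver what you ask of it: in constant depth you can afford only $O(1)$ Grover rounds, so the warm start must already have constant overlap with the target, and exact amplitude amplification makes a \emph{state preparation} exact; it does not convert an approximate $\poly(n)$-qubit fanout into an exact one, nor does it by itself yield $2^{-\poly(n)}$ soundness for a decision procedure.

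The paper's route is different and avoids any large fanout. It exactly prepares the $n$-qubit $\ket W$ state in $\QAC^0$ (product-state warm start, flagged by the $\EX_1$ function, which is a symmetric $\AC^0$ function computable exactly in $\QAC^0$ using only $\polylog$-size fanout, then one application of exact amplitude amplification with constant initial overlap). Running the $W$-state unitary and its inverse around controlled-$Z$'s on the input --- Moore's parity/cat-state circuit with $\ket W$ in place of the cat state --- gives a \emph{one-sided-error} test for $\EX_{n/2}$ (and, by padding, any $\EX_k$): perfect completeness, soundness $1-1/n^2$. The one-sidedness is the key: soundness is then amplified to $2^{-\poly(n)}$ by taking the $\AND$ (which $\QAC^0$ has natively) of $\Theta(rn^2)$ independent runs, each consuming a fresh classical copy of the input --- this is where the copies are actually spent, and it is a mechanism your proposal lacks. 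Since every threshold/symmetric gate is an $\OR$ of $\EX_k$'s, a $\TC^0$ circuit is simulated gate by gate, expanding the needed copies top-down through the constant-depth circuit, keeping the total at $\poly(n)$. If you want to salvage your write-up, replace the parity/fanout gadget with this one-sided $W$ test plus $\AND$-amplification; the tree-unfolding and $\NC^0$-copy framing in your first paragraph is fine and matches the paper's use of copies.
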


While \Cref{thm:intro_sep_thm} gives the existence of a single language separating $\QAC^0$ circuits from $\AC^0[p]$ circuits, \Cref{thm:poly_copies} shows that this separation comes from the fact that $\QAC^0$ are generically powerful. As one example, since $\TC^0$ circuits can multiply $n$-bit integers \cite{hesse2002uniform}, then so can $\QAC^0$ circuits (at least when provided multiple copies of the input). This resolves the question of whether multiple classical copies enable $\QAC^0$ to compute non-trivial Boolean functions, a question posed by Rosenthal in his thesis \cite{rosenthal_thesis}.

The circuit constructions leading to the two theorems above follow from two key primitives: the  ``W test'' for Hamming weight detection (see \Cref{sec:overview}) and ``exact amplitude amplification''. The latter method immediately allows us to remove the approximation error present in several prior constant-depth constructions:

\begin{corollary}[See also \Cref{cor:exact_qac0}]\label{cor:qac0_exact_intro}
Parity can be exactly computed by constant-depth exponential-size $\QAC$ circuits.
\end{corollary}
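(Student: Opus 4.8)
The plan is to feed Rosenthal's approximate parity circuit into the exact amplitude amplification machinery established earlier in the paper. Recall from \cite{rosenthal} that there is a constant-depth $\QAC$ circuit $U$ — built from generalized Toffoli gates of exponential size — that, on input $\sket{x}\sket{0^m}$, produces a state of the form $U\sket{x}\sket{0^m} = \sket{x}\otimes\bigl(\sqrt{1-\epsilon}\,\sket{\Parity(x)}\,\sket{0}\,\sket{\mathrm{flag}=0} + \sqrt{\epsilon}\,\sket{\mathrm{err}_x}\bigr)$, where the failure branch $\sket{\mathrm{err}_x}$ is orthogonal to the success branch because it leaves a dedicated flag register in a nonzero state, and $\epsilon$ is a fixed constant bounded away from $1$ (Rosenthal's construction in fact pushes $\epsilon \to 0$, but any constant below $1/2$ would do here). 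So we start from a constant-depth, exponential-size $\QAC$ circuit whose success amplitude $\sqrt{1-\epsilon}$ is a known constant bounded away from $0$.

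Next I would apply exact amplitude amplification (the lemma behind \Cref{cor:exact_qac0}). Since the success amplitude is a known constant bounded away from $0$, only $O(1)$ amplification rounds are needed — each round being one call to $U$, one call to $U^\dagger$, a reflection $I - 2\,\sketbra{0^m}{0^m}$ about the all-zero ancilla, and a reflection about the flag-zero subspace — followed by a final partial-rotation step whose phase is tuned to the known amplitude so that the good amplitude lands exactly on $1$ (this is the usual fix for the fact that an integer number of Grover-style iterations generically overshoots the target angle $\pi/2$). The feature specific to the class $\QAC$ is that every reflection here is a multi-qubit controlled-phase gate on $\poly(n)$ qubits — i.e.\ a generalized Toffoli gate conjugated by single-qubit gates — hence of constant $\QAC$-depth and polynomial size; composing $O(1)$ copies of $U$ and $U^\dagger$ (each constant-depth, exponential-size) with these constant-depth reflections keeps the depth constant and the overall size $2^{O(n)}$. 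The resulting circuit maps $\sket{x}\sket{0}$ to $\sket{x}\sket{\Parity(x)}\sket{0}$ exactly (clean ancillas may be uncomputed), so measuring the designated output qubit returns $\Parity(x)$ with certainty.

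The one genuine obstacle — and the reason plain amplitude amplification does not suffice — is the reflection about the ``good'' subspace: we cannot reflect about ``the output qubit equals $\Parity(x)$,'' since that bit is exactly what we are trying to compute. The reflection must instead be taken about the internal success flag of the approximate construction, and I would need to verify that the failure branch is genuinely orthogonal to the success branch in that flagged decomposition, so that the standard two-dimensional amplitude-amplification picture applies verbatim. The secondary point is keeping the number of amplification rounds (and hence the depth) constant while still hitting amplitude exactly $1$; this is precisely what the exact/partial-rotation variant of amplitude amplification buys us, and checking that this variant stays inside constant-depth $\QAC$ is what ties the argument together. (An alternative route avoiding the reliance on \cite{rosenthal} is to start instead from the ``W test'' of \Cref{sec:overview} applied to Hamming weight modulo $2$, which already exposes such a success flag; the amplitude-amplification step is then identical.)
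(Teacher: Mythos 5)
There is a genuine gap, and it is exactly the one you flag at the end of your own write-up: the flagged decomposition of Rosenthal's approximate parity circuit that your argument starts from is not something the cited construction provides, and the paper never tries to obtain it. Rosenthal's circuit only guarantees that its output is close in $\ell_2$ distance to $\ket{x}\ket{\Parity(x)}\ket{\mathrm{junk}}$; there is no dedicated flag register whose value exactly separates a ``success'' branch carrying $\Parity(x)$ from an orthogonal ``error'' branch, so there is no $\QAC$-implementable reflection about the good subspace to drive amplitude amplification. Moreover, even granting such a decomposition, exact amplitude amplification in the form used here (\Cref{thm:flag_grover}) needs the good amplitude to be \emph{exactly known and the same for every input $x$}, since the number of Grover rounds (and any final tuned rotation) is fixed once and for all; for a circuit computing a function, the error amplitude will in general depend on $x$, so one fixed schedule cannot make the computation exact on all inputs simultaneously. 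Your fallback via the ``W test'' suffers from this in a sharper form: there the good amplitude is $(n-2|x|)/n$, manifestly input-dependent (and the W test detects weight $n/2$, not parity mod $2$).

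The paper sidesteps both issues by amplifying a \emph{state-preparation} task rather than the computation itself. It takes the approximate nekomata $a\ket{0^n}\ket{\psi_0}+b\ket{1^n}\ket{\psi_1}+c\ket{\omega}$ of \cite{rosenthal,grier_morris}; here the good subspace is input-independent and recognizable, being flagged by the $\QAC^0$-computable predicate ``the first $n+2$ qubits are all $0$ or all $1$.'' A two-qubit gate on fresh ancillas tunes the good amplitude to exactly $\sin(\pi/10)$ (this plays the role of your ``final partial rotation''; \Cref{thm:flag_grover} has no partial-rotation variant and instead demands $\theta=\pi/(4k+2)$ on the nose), after which $k=2$ Grover iterations produce an exact nekomata (\Cref{cor:exact_qac0}). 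Exact parity then follows from the known equivalence, cited in \Cref{sec:prelim}, between preparing nekomata and computing $\Parity_n$/$\Fanout_n$ with constant-depth $\QAC$ circuits --- this reduction is the step your proposal is missing. To repair your route, reduce to the state-preparation problem first rather than amplifying the parity computation directly.
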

\begin{corollary}[See also \Cref{cor:exact_qtc0}]\label{cor:qtc0_exact_intro}
 Parity can be exactly computed by constant-depth polynomial-size $\QTC$ circuits. Consequently, $\QTC^0 = \QNC^0_\mathrm{wf} = \QAC^0_\mathrm{wf}$.
\end{corollary}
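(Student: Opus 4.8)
The plan is to prove the parity statement first and then read the three class equalities off as a consequence. For the parity statement I would not build a $\QTC$ circuit from scratch but instead combine two pieces already set up in the paper. First, the ``W test'' lets one prepare, in constant depth and polynomial size, a good approximation to the cat-type helper state underlying Rosenthal's parity/fanout construction \cite{rosenthal} and its logarithmic-fanout refinement \cite{foxman2025random}; the key point is that threshold gates compress what costs exponential-size generalized Toffoli gates in \cite{rosenthal} down to polynomial size. Feeding this helper state into the usual fanout/parity gadget gives a polynomial-size constant-depth $\QTC$ circuit that computes $\XOR$ up to small error while exposing a ``success'' flag qubit. I would then apply the paper's exact amplitude amplification to this flagged, one-sided circuit: the exact variant rotates the success amplitude to exactly $1$ in constant depth, leaving a $\QTC$ circuit that implements the parity unitary $\sket{x}\sket{t}\sket{0^a}\mapsto\sket{x}\sket{t\oplus\bigoplus_i x_i}\sket{0^a}$ exactly. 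I expect this combination --- shrinking the approximate construction to polynomial size via threshold gates and then making it exact via amplitude amplification --- to be the main obstacle, as it is where all the genuinely new ideas are used; everything below is bookkeeping on top of classical constant-depth folklore.

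Given an exact polynomial-size constant-depth $\QTC$ realization of the parity (equivalently, fanout) unitary, I would close the cycle $\QNC^0_{\mathrm{wf}}\subseteq\QAC^0_{\mathrm{wf}}\subseteq\QTC^0\subseteq\QNC^0_{\mathrm{wf}}$. The first inclusion is immediate, since $\QNC^0\subseteq\QAC^0$ (any bounded-arity unitary compiles into $O(1)$ single-qubit and $\CNOT$ gates, and $\CNOT$ is a generalized Toffoli gate) and both classes contain the fanout gate. For $\QAC^0_{\mathrm{wf}}\subseteq\QTC^0$: a $k$-bit generalized Toffoli gate is exactly the threshold gate on $k$ bits with threshold $k$; single-qubit gates are in $\QTC$ by definition; and the fanout gate is available because conjugating the exact parity unitary from the previous step by Hadamards on its non-ancilla wires produces the fanout unitary. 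For $\QTC^0\subseteq\QNC^0_{\mathrm{wf}}$, I would simulate one threshold gate $\Threshold_k$ on $n$ bits reversibly: broadcast the input with a layer of fanout gates; compute the Hamming weight $\lvert x\rvert$ into an $O(\log n)$-qubit ancilla register via the standard fanout-based counting routine (compute residues of $\lvert x\rvert$ modulo several small primes in parallel with mod-$p$ gadgets, then reconstruct with a brute-force constant-depth circuit on the $O(\log n)$ residue bits); compare the register to $k$ with another brute-force constant-depth circuit; copy the comparison bit onto the target; and run the whole thing in reverse to reset every ancilla. Each step is exact, constant depth, and polynomial size, so the cycle closes and the three classes coincide. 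This last direction is essentially the fanout-based threshold simulation of H\o yer and \v{S}palek, so nothing new is needed there.

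Finally, it is worth stressing why ``exact'' is the operative word: an approximate fanout construction would only give the bounded-error versions of these equalities, and it is precisely the exactification via amplitude amplification that upgrades them to equalities of classes of exactly implementable operations. The same remark accounts for \Cref{cor:qac0_exact_intro}, where the underlying approximate construction is Rosenthal's exponential-size $\QAC$ circuit rather than a polynomial-size $\QTC$ one but the amplification step is identical.
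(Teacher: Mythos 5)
Your overall strategy---start from an approximate constant-depth construction, flag the bad component, exactify with amplitude amplification, then close the cycle of inclusions using Toffoli-as-threshold and the fanout-based exact threshold simulation of \cite{hs_fanout,tt_fanout}---is the same as the paper's, and your second paragraph on the class equalities is essentially the paper's reasoning. However, the first paragraph has a genuine gap precisely where the new content lies, plus a misattribution: the $W$ test has nothing to do with preparing the cat-type/nekomata helper state (it concerns $\ket W$ and $\EX_{n/2}$); the polynomial-size $\QTC$ approximate nekomata you need is the construction of \cite{grier_morris} (the exponential-size $\QAC$ one being \cite{rosenthal}). That slip is minor since the ingredient exists, but the next step is not. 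You propose to apply exact amplitude amplification ``to this flagged, one-sided circuit'' and assert that it ``rotates the success amplitude to exactly $1$.'' \Cref{thm:flag_grover} only applies when the initial good amplitude is \emph{exactly} $\sin\frac{\pi}{4k+2}$ for an integer $k$ and when the flag qubit \emph{exactly} separates the good and bad components; neither holds automatically for the circuit you describe, and this tuning is exactly what makes the exactification nontrivial.

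The paper's proof of \Cref{cor:exact_qtc0} resolves both issues by amplifying at the level of the \emph{state}, not the computation: the bad part of the approximate nekomata is supported off $\{0^n,1^n\}$, so the function $f$ with $f(x)=0$ iff $x\in\{0^{n+2},1^{n+2}\}$ is an exact $\QAC^0$-computable discriminator, and the amplitudes of the $\ket{0^n}$ and $\ket{1^n}$ components are deliberately damped---via two extra ancillas and a tuning unitary $Q$ chosen so that $ap=bq=\frac{1}{\sqrt2}\sin\frac{\pi}{10}$---so that the good amplitude is exactly $\sin\frac{\pi}{10}$ and $k=2$ amplification lands exactly on the nekomata. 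Your variant, amplifying the parity computation directly, would additionally require the success amplitude to be the same exactly known value for \emph{every} input $x$ and an exactly computable success flag for the computation; neither is established (and for a parity gadget driven by an approximate cat state it is not clear such a flag exists). The repair is to do what the paper does: exactify the nekomata state first, then invoke the known exact equivalence between nekomata, fanout, and parity to obtain the exact parity unitary in $\QTC^0$, after which your inclusion cycle $\QNC^0_{\mathrm{wf}}\subseteq\QAC^0_{\mathrm{wf}}\subseteq\QTC^0\subseteq\QNC^0_{\mathrm{wf}}$ goes through as you wrote it.
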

 
\Cref{cor:qac0_exact_intro} and \Cref{cor:qtc0_exact_intro} resolve open questions of \cite{rosenthal_thesis} and \cite{grier_morris} respectively.
In fact, exact amplitude amplification also allows us to refine a construction of \cite{anshu_dong_ou_yao:2024_QAC0} to produce nice states with long-range entanglement in $\QAC^0$. In particular, let
\[
    \ket W = \frac{1}{\sqrt n} \sum_{|x| = 1} \ket{x}
\]
be the uniform superposition of $n$-bit strings of Hamming weight exactly one.
 
\begin{theorem}[See also \Cref{thm:exact_w_1}]\label{thm:exact_w}
There exists a $\QAC^0$ circuit $U$ such that $U\ket{0^n}\ket{0^a}=\ket W\ket{0^a}$ where $a=\poly(n)$.
\end{theorem}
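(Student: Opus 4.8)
The plan is to assemble $\ket W$ from two ingredients developed earlier: a constant-depth circuit that places a constant fraction of its amplitude on $\ket W$ in a cleanly \emph{flagged} way, and \emph{exact} amplitude amplification to push that fraction up to exactly $1$ using only $O(1)$ rounds, so that the whole circuit stays constant depth. The only place that needs real care is making every piece exact rather than merely approximate.

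\emph{Preparing a flagged candidate.} First, with one layer of single-qubit gates, prepare on $n$ data qubits the product state $\bigotimes_{i=1}^{n}\!\bigl(\cos\tfrac\gamma2\,\ket0+\sin\tfrac\gamma2\,\ket1\bigr)$. Every Hamming-weight-one basis string gets the same amplitude, so the normalized projection of this state onto the weight-one subspace is \emph{exactly} $\ket W$, while the total mass on that subspace equals $n\cos^{2(n-1)}(\gamma/2)\sin^2(\gamma/2)$, an absolute positive constant as soon as $\sin^2(\gamma/2)=\Theta(1/n)$. (One could equally start from the approximate $\ket W$ preparation of \cite{anshu_dong_ou_yao:2024_QAC0}; all that is needed is a constant-depth circuit with $\Omega(1)$ overlap with $\ket W$ that is symmetric enough that conditioning on weight one returns $\ket W$ exactly.) Now run the W test of \Cref{sec:overview} to compute the predicate ``is the Hamming weight exactly one?'' into a fresh qubit, copy that bit out with a \CNOT, and uncompute the W test; since the W test has depth $O(1)$ this compute--copy--uncompute is still constant depth and restores the $\poly(n)$ scratch qubits to $\ket0$. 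On basis states the net map is $\ket x\mapsto\ket x\,\ket{[\,|x|=1\,]}\,\ket{0^{w}}$, so on the state above it produces, exactly,
\[
 A\ket0=\sin\theta\,\ket W\ket1\ket{0^w}+\cos\theta\,\ket\perp\ket0\ket{0^w},
\]
where $\ket0$ abbreviates the all-zero state on data, flag, and scratch, $\sin^2\theta$ is the constant weight-one mass, and $\ket\perp$ is supported on strings of weight $\neq1$ and so orthogonal to $\ket W$; write $A$ for this constant-depth circuit.

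\emph{Exact amplification.} Next run $m$ rounds of $Q=-A\,R_0\,A^{-1}\,R_{\mathrm{flag}}$, where $R_{\mathrm{flag}}$ is a $Z$ on the flag qubit and $R_0$ is the reflection about the all-zero input (a multiply-controlled $Z$, i.e.\ one generalized Toffoli conjugated by single-qubit gates). Because the flag-one part of $A\ket0$ is exactly the single vector $\ket W\ket1\ket{0^w}$ while $\ket\perp\ket0\ket{0^w}$ carries flag $0$, the plane $\operatorname{span}\{\ket W\ket1\ket{0^w},\,\ket\perp\ket0\ket{0^w}\}$ is $Q$-invariant and $Q$ acts on it as rotation by $2\theta$, so after $m$ rounds the amplitude on $\ket W\ket1\ket{0^w}$ is $\sin((2m+1)\theta)$. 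The role of \emph{exact} amplitude amplification is that $\gamma$ is chosen \emph{after} fixing $m$: the achievable weight-one mass fills an interval $(0,c]$ for an absolute constant $c>0$, so we may pick any constant $m$ with $\sin^2\!\bigl(\tfrac{\pi}{2(2m+1)}\bigr)\le c$ and then solve for $\gamma$ so that $(2m+1)\theta=\tfrac\pi2$ holds exactly (or, equivalently, dilute with one extra ancilla in $\cos\tfrac\beta2\ket0+\sin\tfrac\beta2\ket1$, fold ``that ancilla is $0$'' into the flag, and use $\beta$ to hit the target mass). Then $Q^{m}A\ket0=\ket W\ket1\ket{0^w}$; an $X$ on the flag gives $\ket W\ket{0^{a}}$ with $a=1+w=\poly(n)$, and as $m=O(1)$ and $A$ has constant depth, $Q^{m}A$ is a $\QAC^0$ circuit.

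\emph{Main obstacle.} Given the W test and exact amplitude amplification, the assembly above is routine; the real work is upstream — realizing the W test as an \emph{exact} constant-depth $\QAC$ circuit for ``Hamming weight exactly one'' (a single wrong verdict on a basis string would break the two-dimensional invariant structure that makes the amplification land perfectly on $\ket W$) and establishing that exact amplitude amplification is indeed exact and constant depth. Within this proof, the only subtlety is to keep $m=O(1)$ — which is precisely why the starting state is a \emph{biased} product state with $\Omega(1)$ weight-one mass rather than a uniform superposition — and to set the initial angle to exactly $\pi/(2(2m+1))$ so that no residual amplitude survives on $\ket\perp$.
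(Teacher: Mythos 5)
The overall skeleton of your argument is the paper's: prepare a biased product state carrying constant mass on the Hamming-weight-one subspace, flag that subspace, tune the initial angle so that a constant number of rounds of exact amplitude amplification lands precisely on $\ket W$ (the paper picks $\delta$ so the good amplitude equals $\sin\frac{\pi}{6}$ and invokes \Cref{thm:flag_grover} with $k=1$). The genuine gap is the flag. You propose to compute the predicate $|x|=1$ by ``running the W test of \Cref{sec:overview},'' but the W test is defined \emph{in terms of} a unitary that prepares $\ket W$ --- the very object this theorem is constructing --- so that step is circular. Worse, even granting such a unitary, the W test only \emph{weakly} decides an exact-weight predicate: it accepts the target weight with certainty but rejects other inputs only with probability $1/n^2$ (\Cref{lem:w_test}, \Cref{cor:general_w_test}). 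It is therefore not the exact map $\ket x\mapsto\ket x\,\ket{1_{|x|=1}}$ that your compute--copy--uncompute step needs; with an inexact flag the flagged state is not of the clean form $\sin\theta\,\ket W\ket1+\cos\theta\,\ket\perp\ket0$, the two-dimensional invariant-subspace picture you rely on breaks, and the amplification does not terminate exactly on $\ket W$. You do concede in your final paragraph that an exact constant-depth circuit for ``weight exactly one'' is the real work, but you supply no construction, and the W test cannot serve as one.

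The paper closes exactly this hole by a different route: the flag is the \emph{exact} $\EQAC^0$ circuit for $\EX_1$ coming from $\Sym\AC^0\subseteq\EQAC^0$ (\Cref{cor:sym_ac0_in_qac0}), which in turn is built from exact nekomata and polylogarithmic fanout (\Cref{cor:exact_qac0}, \Cref{cor:poly-log-fan}) together with the small-threshold construction (\Cref{lem:polylog_threshold_in_qac0}), all themselves powered by the same exact amplitude amplification of \Cref{thm:flag_grover}. Your amplification step itself (the rotation-by-$2\theta$ analysis, choosing the angle so $(2m+1)\theta=\pi/2$ with $m=O(1)$) is fine and equivalent to \Cref{thm:flag_grover}; until you replace the W-test flag with an exact constant-depth $\QAC$ implementation of $\EX_1$ (or an equivalent exact flag), the proof is incomplete.
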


\subsection{Technical Overview}\label{sec:overview}

We give an overview of our techniques.

\paragraph{The $W$ Test.} 
The primary technical tool underlying \Cref{thm:intro_sep_thm} and \Cref{thm:poly_copies} is a quantum primitive we call the \emph{$W$ test}.

\begin{center}
\begin{tcolorbox}[standard jigsaw, coltitle=black, colbacktitle=black!05, arc=5pt, boxrule=.7pt, opacityback=0, title={The $W$ test for detecting Hamming weight $n/2$.}]
\begin{tabular}{rl}
\textbf{Required:} & Unitary $U$ for preparing the $\ket W$ state: $U\ket{0^n} = \ket{W}$ (ancillas ommited)\\ [4pt]
\textbf{Goal:} & For input $x \in \{0,1\}^n$, compute $\EX_{n/2}(x)$, i.e., check if $|x| = n/2$. \\ [4pt]
\textbf{Error:} & If $|x| = n/2$, accept with certainty. \\
                & If $|x| \neq n/2$, reject with probability $1/n^2$.
\end{tabular}
\end{tcolorbox}
\end{center}

Roughly, the $W$ test makes uses of the $W$ state preparation unitary to ``weakly'' compute the exact function $\EX_{n/2}$, which is $1$ iff the Hamming weight of the input is exactly $n/2$.
Specifically, if $U$ is some $n$-qubit unitary which prepares the $\ket W$ state (we will come back to this state preparation task soon), then the $W$ test uses $U$ and its inverse in constant depth as depicted in \Cref{fig:w-test} below.

\begin{figure}[H]
\centering
\scalebox{0.8}{\input{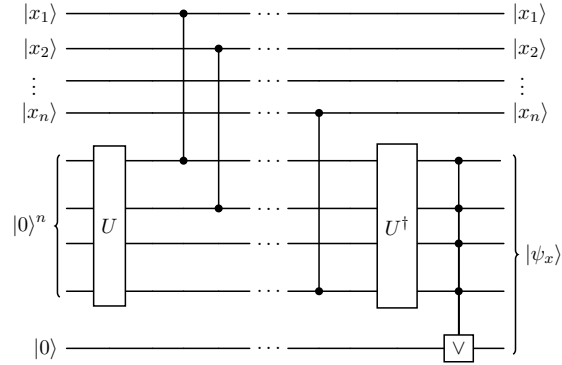}}
\caption{The $W$ test circuit which weakly computes $\EX_{n/2}$, where the two-qubit gates in the second layer above are controlled $Z$-gates.}\label{fig:w-test}
\end{figure}

The circuit in \Cref{fig:w-test} is in fact identical to Moore's construction \cite{moore:1999} except in our case $U$ prepares the $\ket W$ state rather than the cat state. Moore's circuit shows that the cat state and the parity function are dual: the cat state may be used to compute parity with low overhead and vis-a-versa. While approximate, the $W$ test suggests that a similar duality exists between $\ket{W}$ and the $\EX_{n/2}$ function. 

The analysis of the $W$ test is a straightforward calculation (see \Cref{lem:w_test} for a formal proof): 
$$
\ket{\psi_x}=\frac{n-2|x|}n\ket{0^n}\ket0+\sqrt{1-\left(\frac{n-2|x|}n\right)^2}\ket{*_x}\ket1,
$$
where $\ket{*_x}$ is a normalized state depending on $x$.
This immediately implies that for $|x|=n/2$, the last qubit of $\ket{\psi_x}$ is always $\ket1$; whereas if $|x|\ne n/2$, the last qubit of $\ket{\psi_x}$ measures to $\ket0$ with probability at least $1/n^2$.

\paragraph{Quantum-Classical Separations.}
To obtain \Cref{thm:intro_sep_thm} and \Cref{thm:poly_copies}, we note an important feature of the $W$ test: one-sided error. 
Indeed, if $\EX_{n/2}(x)=1$, then the $W$ test is always correct; otherwise $\EX_{n/2}(x)=0$, and it is correct with probability at least $1/n^2$.
This means we can perform error reduction using the $\AND$ function (which notably is in $\QAC^0$), instead of using the majority function (which is not known to be in $\QAC^0$).

More formally, let us take the $\AND$ of $c\approx n^2$ parallel runs of the $W$ tests.
If $\EX_{n/2}(x)=1$, then the $\AND$ outcome is always $1$; otherwise the outcome is $0$ with probability $0.99$.
This is \emph{almost} a proof of \Cref{thm:intro_sep_thm} as $\EX_{n/2}$ is not in $\AC^0[p]$ \cite{razborov,smolensky}.

The caveat here is, $c$ parallel runs of the $W$ tests require $c$ disjoint copies of the input (see \Cref{sec:better_copy_complexity} for partial progress to bypass this) and it is unclear how to make this many copies of the input in $\QAC^0$.
Fortunately, to prove separation results, we do not have to stick to the vanilla $\EX_{n/2}$ function. We can use a variant.
In particular, we define the $c$-copy version of $\EX_{n/2}$ as $\Copy\EX_{n/2}\colon \{0,1\}^{n \times c} \to\{0,1\}$:
$$
\Copy\EX_{n/2}(X_1,\ldots,X_c)=\begin{cases}
\EX_{n/2}(X_1) & X_1=\cdots=X_c,\\
0 & \text{otherwise,}
\end{cases}
\quad\text{for every $X_1,\ldots,X_c\in\{0,1\}^n$.}
$$
The benefit of using $\Copy\EX_{n/2}$ is two-fold.
\begin{itemize}
\item Regarding classical lower bounds, $\AC^0[p]$ circuits can freely make copies of the input as unbounded fanout is permitted. Hence $\Copy\EX_{n/2}\notin\AC^0[p]$ follows immediately from the known lower bound $\EX_{n/2}\notin\AC^0[p]$. 
\item Regarding quantum upper bounds, $\QAC^0$ circuits can now use the $c$ provided input copies to execute parallel runs of the $W$ test. In addition, the consistency check among input copies can be done using generalized Toffoli gates in parallel for all $n$ coordinates. This implies that $\Copy\EX_{n/2}\in\QAC^0$.
\end{itemize}
This completes the proof of \Cref{thm:intro_sep_thm}.
For \Cref{thm:poly_copies}, it suffices to show how to compute all threshold functions with $\QAC^0$ circuits given many copies of the input string. This is a standard padding argument that reduces threshold functions to $\EX_{n/2}$ and then use the circuit for $\Copy\EX_{n/2}$.
See \Cref{sec:sep_qac0_lightweight} for details.

\paragraph{Exact Preparation of $\ket W$ and Nekomata.}
Now we come back to the construction of the $\ket W$ state (\Cref{thm:exact_w}), as required in \Cref{fig:w-test} and the separations above.
To exactly prepare $\ket W$ with a $\QAC^0$ circuit, we rely on \emph{exact amplitude amplification} \cite{grover1998quantum,brassard2002quantum}. 
In general, amplitude amplification cannot be performed in constant depth \cite{zalka1999grover}. The restricted setting considered here consists of (1) a ``warm-start'' state which non-trivially approximates the target state and (2) a method to ``flag'' the target state. 
Standard amplitude amplification can then be performed to obtain the \emph{exact} target state (see \Cref{thm:flag_grover} for more details).
While this technique has been previously considered for state preparation tasks \cite{t_count, rosenthal2024efficient}, they have not yet been applied to the study of $\QAC^0$ circuits.

For the $\ket W$ state, the ``warm-start'' state turns out to be the product state $(\delta\ket1+\sqrt{1-\delta^2}\ket0)^{\otimes n}$ where $\delta\approx1/\sqrt n$.
In addition, the ``flag'' procedure corresponds to computing the exact threshold function $\EX_1$ to filter out strings of Hamming weight exactly one.
We note that, at this point, it should \emph{not} be obvious how to compute $\EX_1$, and we will address this shortly.

Aside from the $\ket W$ state, the above protocol also makes exact several approximate constructions of \emph{the nekomata state}, which is a family of states of significant importance to constant-depth circuits \cite{qacc,rosenthal}.
A quantum state $\ket{\psi}$ is a nekomata if $\ket{\psi}=\frac1{\sqrt2}\ket{0^n}\ket{\phi_0} + \frac1{\sqrt2}\ket{1^n}\ket{\phi_1}$ for some normalized states $\ket{\phi_0}$ and $\ket{\phi_1}$. 
To obtain exact nekomata, the ``warm-start'' states are highly nontrivial but fortunately provided by \cite{rosenthal_thesis,grier_morris}; and the ``flag'' procedure turns out to be a simple $\QAC^0$ filtering of all-zero $\ket{0^n}$ and all-one $\ket{1^n}$ strings. See \Cref{sec:exact_aa} for details.

\paragraph{Exact Computation in $\QAC^0$.}
It is known that computing parity and implementing quantum fanout are equivalent to constructing nekomata states \cite{rosenthal_thesis,grier_morris}.
Hence the above \emph{exact} preparation of nekomata leads to an \emph{exact} computation of the parity function.
In the context of $\QAC^0$, this proves \Cref{cor:qac0_exact_intro}; and in the context of $\QTC^0$, this proves \Cref{cor:qtc0_exact_intro}.

By shrinking the construction of \Cref{cor:qac0_exact_intro} and the equivalence between parity and fanout, the exact fanout gate of $\polylog(n)$ size can be synthesized by constant-depth $\poly(n)$-size $\QAC$ circuits.
This immediately allows $\QAC^0$ to simulate $\AC^0$ circuits of polylogarithmic fanout in an exact way.
In fact, we show that a richer class of Boolean functions can be computed by $\QAC^0$.

\begin{theorem}[See also \Cref{cor:sym_ac0_in_qac0}]\label{thm:symac0}
Every symmetric function in $\AC^0$ can be exactly computed in $\QAC^0$.
\end{theorem}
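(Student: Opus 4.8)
The plan is to combine the classical structure theory of symmetric $\AC^0$ functions with exact low-index Hamming-weight detection in $\QAC^0$. By the classical characterization of which symmetric functions lie in $\AC^0$ (see, e.g., Fagin--Klawe--Pippenger--Stockmeyer), for a symmetric $f \in \AC^0$ there are a threshold $t = (\log n)^{O(1)}$ and a bit $b$ with $f(x) = b$ whenever $t \le |x| \le n-t$, while $f$ is an arbitrary function of $|x|$ when $|x| < t$ and an arbitrary function of $n-|x|$ when $|x| > n-t$. Thus $f$ is determined by the $2t$ bits $\Threshold_1(x), \ldots, \Threshold_t(x)$ and $\Threshold_1(\bar x), \ldots, \Threshold_t(\bar x)$, which encode $\min(|x|,t)$ and $\min(n-|x|,t)$ in unary. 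Since each value of a truncated weight is a $2$-junta of these monotone bits, a constant-depth size-$\polylog(n)$ circuit of constant fanout reads off the exact weight in the low or high regime and otherwise outputs $b$, recovering $f(x)$. And as $\Threshold_j(\bar x) = \neg\,\Threshold_{n-j+1}(x)$, every bit needed is a threshold $\Threshold_j$ with $j \le (\log n)^{O(1)}$ applied to $x$ or to its bitwise complement.

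To assemble these in $\QAC^0$ we must run $\Theta(t)$ threshold sub-circuits that each read the input, which is impossible from one copy in constant depth; so I would first fan $x$ out to $\Theta(t) = \polylog(n)$ copies. The exact $\polylog(n)$-size fanout gate available in $\QAC^0$ (established just before this theorem), applied to all $n$ input wires in parallel, does this in constant depth and polynomial size, and negating one copy bitwise supplies $\bar x$. Each needed threshold is computed reversibly on its own copy, all in parallel; the $2t$ outcomes feed the small assembly circuit above; and every register but the output is uncomputed, so the computation is exact. The theorem therefore reduces to one lemma: \emph{for every $j \le (\log n)^{O(1)}$, $\Threshold_j$ is exactly computable in $\QAC^0$} --- equivalently, so are $\EX_0, \ldots, \EX_{j-1}$.

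I would prove this lemma by induction on $j$ via exact amplitude amplification (\Cref{thm:flag_grover}). The base cases are at hand: $\EX_0$ is one generalized Toffoli on the negated inputs, and $\EX_1$ is exactly $\QAC^0$-computable, as used in the exact $\ket{W}$ construction of \Cref{thm:exact_w}. For the step, assume $\EX_0, \ldots, \EX_{j-1}$, hence $\Threshold_1, \ldots, \Threshold_j$, are available, and build a $\QAC^0$ circuit that, conditioned on $\Threshold_j(x) = 1$, decides $|x| = j$ versus $|x| > j$. The idea is to run exact amplitude amplification on a warm-start procedure whose \emph{only} flagged outcome --- under a flag assembled from the already-constructed lower thresholds, applied to a suitably reduced sub-instance --- is the event ``$|x| = j$'', with flagged amplitude a fixed constant $\Omega(1)$ (independent of which weight-$j$ string $x$ is) when $|x| = j$ and exactly $0$ when $|x| > j$. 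Exact amplitude amplification then rotates to ``accept'' with certainty in the first case and is the identity in the second, giving $\EX_j$; plugging these into $\Threshold_j = \neg(\EX_0 \vee \cdots \vee \EX_{j-1})$ closes the induction.

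The main obstacle is concentrated in this inductive step and is two-fold. First there is a circularity: the natural flag isolating ``weight exactly $j$'' is $\EX_j$ itself, so one must separate the ``$\ge j$'' test (handled by the available $\Threshold_j$) from a ``$\le j$'' test run on a reduced instance where only lower-index thresholds are needed as the flag. Second, and more delicate, exact amplitude amplification uses $\Theta(1/\theta)$ rounds of the warm-start unitary, with $\theta$ the initial flagged amplitude, so constant depth forces the warm start to overlap the target by $\Omega(1)$; but the bare ``weight $j$ among $n$ qubits'' subspace has vanishing overlap with any low-support warm start once $j$ exceeds a constant, and even weight-$j$ detection on a $\polylog(n)$-size block is costly for large polylog $j$. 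Keeping the construction constant-depth thus appears to require a recursive, block-structured amplitude amplification in which every amplitude along the way is bounded below by a constant while the recursion depth is absorbed into the constant circuit depth; making that recursion simultaneously exact and constant-depth is the heart of the argument.
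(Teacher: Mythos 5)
Your reduction of the theorem to the single lemma ``$\Threshold_{\ge j}$ is exactly computable in $\QAC^0$ for all $j\le\polylog(n)$'' matches the paper's structure (the paper's \Cref{lem:polylog_threshold_in_qac0} plus the Moran/Brustman--Wegner characterization, fanout from \Cref{cor:poly-log-fan}, and $\EX_k=\Threshold_{\ge k}\land\neg\Threshold_{\ge k+1}$). But the lemma itself is exactly where your argument stops being a proof. Your proposed induction on $j$ via exact amplitude amplification has the two defects you yourself identify --- the natural flag for ``$|x|=j$'' is $\EX_j$, which is what you are trying to build, and constant depth forces the warm start to have $\Omega(1)$ overlap with the weight-$j$ subspace uniformly over all weight-$j$ inputs, which you do not know how to arrange once $j$ grows --- and you offer no resolution, only the remark that resolving it ``is the heart of the argument.'' Flagging the heart of the argument as open is a genuine gap, not a proof. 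Moreover, your base case is circular within this paper's logic: $\EX_1\in\EQAC^0$ is not ``at hand''; in the paper it is a \emph{consequence} of \Cref{cor:sym_ac0_in_qac0} (the very statement being proved), and the exact $\ket W$ construction of \Cref{thm:exact_w_1} \emph{uses} that fact rather than establishing it. Indeed the paper stresses that before this work it was unclear even whether $\Threshold_{\ge 2}$ lies in $\QAC^0$.

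The paper closes this gap without any per-threshold amplitude amplification. It uses the deterministic modular-hashing trick of H{\aa}stad--Wegener--Wurm--Yi (\Cref{fct:mod_det_hash}): $\Threshold_{\ge k}(x)$ on $n$ bits equals an $\OR$ over $m=k,\ldots,O(k^2\log n)$ of $\Threshold_{\ge k}$ applied to the $m\le\polylog(n)$ block-$\OR$s $\OR(x\mid A^m_\ell)$, because for a witness set $S$ of size $k$ some modulus $m$ separates its elements into distinct residue classes, while the blocks always partition $[n]$ so the right-hand side never overcounts. The inner thresholds now act on only $\polylog(n)$ bits, where \emph{arbitrary} symmetric functions are exactly computable by the H{\o}yer--{\v S}palek / Takahashi--Tani phase construction (\Cref{fct:polylog_sym_in_ac0_polylog}), which needs only fanout of polylogarithmic arity --- and that exact fanout is exactly what \Cref{cor:poly-log-fan} supplies from the exact nekomata. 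Amplitude amplification appears only once, upstream, to make the nekomata (and later the $\ket W$ state) exact; it is never iterated per weight level, so no warm-start overlap or recursion-depth issue arises. If you want to salvage your route, you would need to replace the inductive amplitude-amplification step with something like this hashing-plus-small-block argument, or otherwise exhibit a warm start with input-independent constant overlap for every $j\le\polylog(n)$, which is precisely what you have not done.
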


Recall that the exact preparation of the $\ket W$ state requires the $\EX_1$ function to be computed in $\QAC^0$. Since $\EX_1$ is a symmetric function in $\AC^0$, this follows from \Cref{thm:symac0}.
We emphasize that, prior to our work, it was even unknown whether $\EX_1$ can be approximately computed in $\QAC^0$; and \Cref{thm:symac0} answers an open question of Rosenthal \cite{rosenthal_thesis} in the affirmative.

The proof of \Cref{thm:symac0} relies on a classical result of Håstad, Wegner, Wurm, and Yi \cite{symac0} for computing symmetric $\AC^0$ functions with very few wires. 
We observe that their construction can be implemented with $\polylog(n)$-size fanout (which follows from the shrunk version of \Cref{cor:qac0_exact_intro}) and $\polylog(n)$-bit symmetric functions (which follows from the $\QNC^0_\mathrm{wf}$ constructions of \cite{hs_fanout, tt_fanout}). See \Cref{sec:symac0} for details.

\subsection{Future Directions}

We show that having polynomially many classical copies of the input enables $\QAC^0$ to compute highly non-trivial Boolean functions beyond $\AC^0[p]$.
It is reasonable to ask more fine-grained questions about the actual number of copies needed.
Indeed, if this ``copy complexity'' can be reduced to $\polylog(n)$, then $\QAC^0 = \QAC^0_\mathrm{wf}$. 
Towards this question, we make some partial progress in \Cref{sec:better_copy_complexity}. In particular, we show that $\QAC^0$ circuits can compute parity with $O(n^{3/2})$ copies copies of the input and more generally any symmetric function with $O(n^2)$ copies.

Regarding the separation we achieve in \Cref{thm:intro_sep_thm}, it is natural to ask if the soundness be reduced from inverse exponential to zero. This would lead to a separation between \emph{exact} $\QAC^0$ and $\AC^0[p]$. Such a class of decision problems has previously been called $\EQAC^0$ \cite{qacc}. Similarly, another natural question is whether we can exhibit a total function in $\QAC^0$ that is hard against $\AC^0[p]$ circuits on average. 
Note that by the blocky nature of $\Copy\EX_{n/2}$, a random input string is a no instance with high probability; hence this function has high correlation with the constant zero function.  

The techniques introduced in this work do not seem to directly address the question of whether $\AC^0 \subset \QAC^0$. One may view \Cref{thm:symac0} as some partial progress; however, simple $\AC^0$ functions which require large fanout, such as the indexing function, still seem out of reach with the primitives developed thus far.

Another interesting question is to reproduce our result using a finite gate set. Note that the approximate nekomata construction of \cite{grier_morris} only uses Hadamard and Boolean function gates, whereas our approach heavily relies on the exact amplitude amplification, for which arbitrary single-qubit gates seem necessary.

Finally, we review the questions left open in Moore's original work \cite{moore:1999}:
\begin{enumerate}[label=(\arabic*)]
\item Is $\QAC^0 = \QAC^0_\mathrm{wf}$?
\item Is $\QAC^0[p] = \QACC^0$ for arbitrary primes $p>1$?
\item Is $\QAC^0_\mathrm{wf} = \QTC^0$?
\end{enumerate}
For the above questions, Moore remarks:
\begin{center}
\noindent\emph{We conjecture that the answer to all these questions is `no,' but quantum circuits can be surprising}
\end{center}
In \cite{qacc}, it was shown that the answer to (2) is in fact ``yes''. 
Our \Cref{cor:qtc0_exact_intro}, which builds on \cite{grier_morris}, shows that (3) is also true. 
While (1) is still open, we believe our separation and simulation results (\Cref{thm:intro_sep_thm} and \Cref{thm:poly_copies}) exhibit further ways in which quantum circuits can be surprising, suggesting that perhaps the answer to (1) is also ``yes''.

\section{Preliminaries}\label{sec:prelim}

For every integer $n\ge1$, we use $[n]$ to denote the set $\{1,2,\ldots,n\}$.
For every binary string $x\in\{0,1\}^n$, we use $|x|=x_1+\cdots+x_n$ to denote its Hamming weight.
For binary strings $x,y$ of equal length, we use $x\oplus y$ to denote the binary string of their bit-wise XOR.
We use $1_{\mathcal{E}}$ to denote the indicator function of event $\mathcal{E}$, i.e., $1_{\mathcal{E}}=1$ if $\mathcal{E}$ happens and $1_{\mathcal{E}}=0$ if otherwise.
We use $\mathbb{I}_n$ to denote the identity matrix on $n$ qubits. 

\paragraph{Symmetric Boolean Functions.}
We only refer to many-to-one functions when we use the term ``Boolean function''.
For a Boolean function $f\colon\{0,1\}^n\to\{0,1\}$, we say $f$ is symmetric if $f(x)$ depends only on $|x|$.

For integer $k\ge0$, we use $\EX_k$ to denote the exact threshold function of Hamming weight $k$, defined by $\EX_k(x)=1_{|x|=k}$; and use $\Threshold_{\ge k}$ to denote the threshold function of Hamming weight $k$, defined by $\Threshold_{\ge k}(x)=1_{|x|\ge k}$.

For each integer $m\ge2$, we define $\MOD_m\colon\{0,1\}^n\to\{0,1\}$ by $\MOD_m(x)=1_{|x|\equiv1\pmod m}$.
We also reserve $\Parity=\MOD_2$ for the parity function and sometimes use $\Parity_n$ to highlight the input length is $n$.

\paragraph{Classical Circuit Complexity.}
In a Boolean circuit, every gate evaluates some Boolean function on the input wires and forwards the outcome through its output wires.
For a Boolean circuit, 
\begin{itemize} 
\item \emph{depth} is the maximal length from an input bit to output and \emph{size} is the total number of gates;
\item \emph{fanin} is the maximal number of input wires of each gate, \emph{fanout} is the maximal number of output wires of each gate, and \emph{gate set} is the set of different gates.
\end{itemize}
We use $\neg$ to denote the negation gate; and use $\AND$ (resp., $\OR$) to denote the AND (resp., OR) gate/function. Sometimes we use $\AND_n,\OR_n$ to highlight that the function takes $n$ bit input and it will be clear from the context when we omit it.

We will need the following standard classical circuit classes. Here we give informal description and refer readers to textbooks \cite{arora2009computational,jukna2012boolean} for formal definitions.
\begin{itemize}
\item $\NC^0$ is the set of languages that can be exactly decided by constant-depth constant-fanin Boolean circuits of gate set $\{\neg,\AND,\OR\}$.
\item $\AC^0$ is the set of languages that can be exactly decided by constant-depth polynomial-size Boolean circuits of gate set $\{\neg,\AND,\OR\}$.
\item $\AC^0[m]$ is the set of languages that can be exactly decided by constant-depth polynomial-size Boolean circuits of gate set $\{\neg,\AND,\OR,\MOD_m\}$.
$\ACC^0$ is the union of $\AC^0[m]$ for all $m\ge2$.
\item $\TC^0$ is the set of languages that can be exactly decided by constant-depth polynomial-size Boolean circuits of gate set $\{\neg,(\Threshold_{\ge k})_{k\ge0}\}$.
\end{itemize}
We emphasize that we focus on circuit classes for decision problems and it is known that $\NC^0\subsetneq\AC^0\subsetneq\AC^0[p]\subsetneq\TC^0$ for every constant prime $p$ \cite{ajtai_parity,fss_ac0,hastad_thesis,razborov,smolensky}.

\paragraph{(Multi-Qubit) Quantum Gates.}
The quantum fanout gate $\Fanout_n$ is an $(n+1)$-qubit unitary defined by
$$
\Fanout_n\colon\ket b\ket{x_1,\ldots,x_n}\to\ket b\ket{x_1\oplus b,\ldots,x_n\oplus b}
\quad\text{for every $x_1,\ldots,x_n,b\in\{0,1\}$.}
$$
Every Boolean function $f\colon\{0,1\}^n\to\{0,1\}$ naturally induces an $(n+1)$-qubit unitary $U_f$ by
$$
U_f\colon\ket{x_1,\ldots,x_n}\ket b\to\ket{x_1,\ldots,x_n}\ket{b\oplus f(x_1,\ldots,x_n)}
\quad\text{for every $x_1,\ldots,x_n,b\in\{0,1\}$.}
$$
The generalized Toffoli gates correspond to $U_{\AND_n}$ for all $n\ge1$.
When clear, we sometimes use $\AND,\OR$ to denote $U_\AND,U_\OR$.

\paragraph{Quantum Circuits.}
A quantum circuit $C$ is a product of quantum gates.
\begin{itemize}
\item We say $C$ has \emph{depth} $d$ if $C=M_dM_{d-1}\cdots M_1$ where each $M_i$ is a product of gates operating on disjoint sets of qubits. The \emph{size} of $C$ is the total number of gates in $C$ and qubits that $C$ operates on.
\item Let $S$ be a subset of unitaries. We say $C$ has \emph{gate set} $S$ if every gate of $C$ belongs to $S$.
\end{itemize}

Let $f\colon\{0,1\}^n\to\{0,1\}$.
We say $C$ decides $f$ with $a$ ancillas and $\epsilon$ error if for every $x\in\{0,1\}^n$, the last qubit of $C\ket{x}\ket{0^a}\ket0$ measures, in the computational basis, to $\ket{f(x)}$ with probability at least $1-\epsilon$.
In addition, we say 
\begin{itemize} 
\item it has completeness $c$ if for every $x\in f^{-1}(1)$, the last qubit measures to $\ket{1}$ with probability at least $c$; 
\item and has soundness $s$ if for every $x\in f^{-1}(0)$, the last qubit measures to $\ket{1}$ with probability at most $s$.
\end{itemize}
If $C$ decides $f$ with zero error, then we say $C$ exactly decides $f$ and we can assume without loss of generality $C\ket x\ket{0^a}\ket0=\ket x\ket{0^a}\ket{f(x)}$ by standard uncomputation, which incurs an insignificant constant blowup in the depth and size of the circuit.

\paragraph{$\QAC$ and $\QAC^0$.}
We primarily work with $\QAC$ circuits and the corresponding quantum circuit class $\QAC^0$.
See \cite{nielsen2010quantum,bera2010quantum} for a more comprehensive introduction.

A quantum circuit is a $\QAC$ circuit if it only uses single-qubit gates and generalized Toffoli gates.
We distinguish $\QAC^0$ as the following finer classes $\EQAC^0$ and $\BQAC^0$.
\begin{itemize}
\item $\EQAC^0$ is the set of languages that can be exactly decided by constant-depth polynomial-size $\QAC$ circuits.
\item $\BQAC^0$ is the set of languages that can be decided with error\footnote{This $1/3$ is not essential and can be boosted to arbitrarily small constant in a black-box way. If necessary, we sometimes also give more precise error bound in terms of completeness and soundness.} at most $1/3$ by constant-depth polynomial-size $\QAC$ circuits.
\end{itemize}

We emphasize that we allow \emph{arbitrary} single-qubit gates, instead of a finite number of them. We allow ancillary qubits in our quantum circuits, the number of which will be upper bounded by the circuit size. We also remark that the quantum $\OR$ gate $U_\OR\colon\ket{x}\ket b\to\ket x\ket{b\oplus\OR(x)}$ is constant-depth and constant-size in $\QAC$.

\paragraph{Other Quantum Circuit Classes.}
We will occasionally discuss $\QNC^0,\QNC_\mathrm{wf}^0,\QTC^0$ that we briefly describe here.
Let $C$ be a quantum circuit.
\begin{itemize}
\item We say $C$ is a $\QNC$ circuit if it only uses single- and two-qubit gates. 

$\EQNC^0$ (resp., $\BQNC^0$) correspond to languages that can be decided with zero error (resp., $1/3$ error) by constant-depth $\QNC$ circuits.
\item We say $C$ is a $\QNC_\mathrm{wf}$ circuit if it uses single- and two-qubit gates, as well as fanout gates $(\Fanout_n)_{n\ge1}$.

$\EQNC_\mathrm{wf}^0$ (resp., $\BQNC_\mathrm{wf}^0$) correspond to languages that can be decided with zero error (resp., $1/3$ error) by constant-depth polynomial-size $\QNC_\mathrm{wf}$ circuits.
\item We say $C$ is a $\QTC$ circuit if it uses single- and two-qubit gates, as well as quantum threshold gates $(U_{\Threshold_{\ge k}})_{k\ge0}$.

$\EQTC^0$ (resp., $\BQTC^0$) correspond to languages that can be decided with zero error (resp., $1/3$ error) by constant-depth polynomial-size $\QTC$ circuits.
\end{itemize}

\paragraph{Quantum States.}
Let $\ket\psi$ be an $n$-qubit quantum state. We say quantum circuit $C$ prepares $\ket\psi$ with $a$ ancillas if $C\ket{0^n}\ket{0^a}=\ket\psi\ket{0^a}$.
In addition, we say $C$ prepares an $\epsilon$-approximation of $\ket\psi$ if $\|C\ket{0^n}\ket{0^a}-\ket\psi\ket{0^a}\|_2\le\epsilon$.

Let $\ket\phi$ be an $(n+m)$-qubit quantum state. We say $\ket\phi$ is an $n$-qubit \emph{nekomata} if, up to changing ordering of the qubits, $\ket\phi$ equals $\frac1{\sqrt2}\cdot\ket{0^n}\ket{\phi_0}+\frac1{\sqrt2}\ket{1^n}\ket{\phi_1}$ for some normalized $m$-qubit states $\ket{\phi_0},\ket{\phi_1}$.
In \cite{qacc} and \cite{rosenthal}, it was shown that the tasks of computing parity function $\Parity_n$ and preparing fanout unitary $\Fanout_n$ by constant-depth $\QAC$ circuits is equivalent to constructing constant-depth $\QAC$ circuits preparing $n$-qubit nekomata.

\paragraph{(Non-)Uniformity.}
All our circuit upper bounds (i.e., constructions) are \emph{uniform} circuits that can be efficiently extracted from our proofs and descriptions. 
All our circuit lower bounds hold with respect to \emph{non-uniform} circuits.
The (non-)uniformity is not our focus and we do not discuss it in detail.

\section{Exact Amplitude Amplification}\label{sec:exact_aa}

In this section we highlight how previously known amplitude amplification techniques can be adapted in the setting of constant-depth circuits to improve the fanout constructions in \cite{grier_morris} and \cite{rosenthal}. 
In particular, these techniques enable us to \emph{completely remove} any approximation error from previously known constructions. 

This method of amplitude amplification is implicit in Grover's original work \cite{grover1998quantum} (see also \cite[Theorem 2]{brassard2002quantum}) and has been similarly used in recent quantum state preparation works \cite{rosenthal2024efficient,t_count}.
For completeness we include a proof in \Cref{app:sec:exact-aa}.

\begin{restatable}[\cite{grover1998quantum,brassard2002quantum}]{theorem}{thmflaggrover}\label{thm:flag_grover}
Assume $V$ is a depth-$d$ $s$-size $\QAC$ circuit satisfying
$$
V\ket{0^n}\ket0=\sin\theta\ket{\psi_0}\ket0+\cos\theta\ket{\psi_1}\ket1
\quad\text{for $\theta=\frac\pi{4k+2}$ and $k\in\mathbb N$.}
$$
Then there exists a depth-$O(dk)$ $O(sk)$-size $\QAC$ circuit $C$ such that $C\ket{0^n}\ket0 = \ket{\psi_0}\ket0$.
\end{restatable}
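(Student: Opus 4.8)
The plan is to run the standard amplitude amplification procedure (implicit in Grover's original work), taking care that every primitive used inside a Grover iterate is realizable in $\QAC$ without inflating the depth or size beyond constant factors. Write $\sket{\mathrm{good}}:=\sket{\psi_0}\sket0$ and $\sket{\mathrm{bad}}:=\sket{\psi_1}\sket1$; the flag qubit makes these orthonormal, and by hypothesis $V\sket{0^n}\sket0=\sin\theta\sket{\mathrm{good}}+\cos\theta\sket{\mathrm{bad}}$ lies in the two-dimensional subspace $\mathcal H_0:=\mathrm{span}(\sket{\mathrm{good}},\sket{\mathrm{bad}})$. Let $S_\chi$ be the single-qubit gate $\mathrm{diag}(-1,1)$ acting on the flag qubit, so that $S_\chi$ negates exactly the states whose flag is $\sket0$, and let $S_0:=\mathbb I_{n+1}-2\sketbra{0^{n+1}}{0^{n+1}}$ be the reflection about the all-zeros state. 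Define the Grover iterate $Q:=-VS_0V^{\dag}S_\chi$ and set $C:=Q^kV$.

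The correctness analysis is the usual two-dimensional rotation argument. Since $VS_0V^{\dag}=\mathbb I-2(V\sket{0^{n+1}})(\sbra{0^{n+1}}V^{\dag})$ is the reflection about $V\sket{0^{n+1}}\in\mathcal H_0$, both $VS_0V^{\dag}$ and $S_\chi$ preserve $\mathcal H_0$ and act on it as reflections, so (up to the harmless global sign) $Q$ acts on $\mathcal H_0$ as a rotation by $2\theta$ in the direction that increases the weight on $\sket{\mathrm{good}}$. A one-line induction then yields $Q^jV\sket{0^n}\sket0=\sin((2j+1)\theta)\sket{\mathrm{good}}+\cos((2j+1)\theta)\sket{\mathrm{bad}}$ for all $j\ge0$. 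Taking $j=k$ and using $\theta=\tfrac{\pi}{4k+2}$ gives $(2k+1)\theta=\tfrac\pi2$, hence $C\sket{0^n}\sket0=\sket{\mathrm{good}}=\sket{\psi_0}\sket0$ exactly, as required.

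It then remains to bound the resources of $C$. The circuit $V^{\dag}$ is obtained from $V$ by reversing the layers and inverting each gate; single-qubit gates invert to single-qubit gates and the generalized Toffoli gate is self-inverse, so $V^{\dag}$ is again a depth-$d$, size-$s$ $\QAC$ circuit. The gate $S_\chi$ is a single-qubit gate. The reflection $S_0$ is implementable in constant depth and $O(n)$ size in $\QAC$: apply $\NOT$ to all $n+1$ qubits, apply a multiply-controlled $Z$ on all $n+1$ qubits (a Hadamard on one qubit, a generalized Toffoli controlled on the remaining $n$ qubits and targeting it, and a Hadamard again), then apply $\NOT$ to all $n+1$ qubits; this equals $-S_0$, and the extra global sign is again harmless. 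Because the size convention also counts qubits we have $s\ge n$, so each iterate $Q$ costs depth $O(d)$ and size $O(s)$; composing $k$ of them with a single initial copy of $V$ gives a $\QAC$ circuit of depth $O(dk)$ and size $O(sk)$, and the accumulated $(-1)^k$ global phase is irrelevant.

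The only mildly delicate point is the one flagged in the second paragraph: verifying that $Q$ really acts as a clean rotation by exactly $2\theta$ on $\mathcal H_0$, i.e., that $VS_0V^{\dag}$ and $S_\chi$ are genuine reflections of $\mathcal H_0$ whose mirror axes subtend the correct angle. This is precisely the classical amplitude amplification lemma; it needs only the orthonormality of $\sket{\mathrm{good}}$ and $\sket{\mathrm{bad}}$ together with a short computation of a product of two $2\times2$ reflection matrices, and everything else is bookkeeping.
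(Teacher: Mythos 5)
Your proposal is correct and takes essentially the same route as the paper: build the Grover iterate from a flag-qubit phase, the reflection $V(\mathbb{I}_{n+1}-2\sketbra{0^{n+1}}{0^{n+1}})V^{\dag}$ (with the middle reflection realized by an $X$-conjugated generalized-Toffoli-based controlled-$Z$), and iterate $k$ times so that $(2k+1)\theta=\pi/2$. The only difference is cosmetic sign bookkeeping — the paper applies $Z$ to the flag (negating the bad component) so no global phase arises, while you mark the good component and absorb the resulting harmless signs, and your parenthetical that the $X$-conjugated multi-controlled $Z$ equals $-S_0$ rather than $S_0$ is a slip that does not affect anything since global phases are absorbable into a single-qubit gate.
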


We use \Cref{thm:flag_grover} to remove the approximation error in the nekomata constructions of \cite{rosenthal} and \cite{grier_morris}. These consequences are made precise in the following corollaries, from which we immediately derive \Cref{cor:qac0_exact_intro} and \Cref{cor:qtc0_exact_intro}.

\begin{corollary}\label{cor:exact_qac0}
    There exists a constant-depth $O(2^n)$-size $\QAC$ circuit which exactly prepares a nekomata.
\end{corollary}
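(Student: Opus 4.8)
The plan is to take the approximate nekomata construction of Rosenthal \cite{rosenthal,rosenthal_thesis} as a ``warm start'' and purify it into an exact nekomata via \Cref{thm:flag_grover}. Let $V_0$ be the constant-depth $O(2^n)$-size $\QAC$ circuit with $V_0\ket{0^n}\ket{0^a}=\ket\chi$ and $\|\ket\chi-\ket\nu\|_2\le\epsilon$ for some $n$-qubit nekomata $\ket\nu$ and a small absolute constant $\epsilon$ (any $\epsilon<1/3$ suffices). Let $G$ be the span of the computational basis states whose first $n$ qubits equal $0^n$ or $1^n$, with orthogonal projector $\Pi_G$; since $\ket\nu\in G$ we get $\|\Pi_G\ket\chi\|\ge 1-\epsilon$. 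The ``flag'' is the constant-depth $\QAC$ operation that writes $1-\mathbf 1_{x\in\{0^n,1^n\}}$ into a fresh qubit, namely two generalized Toffoli gates (one with all controls, one with all controls negated) both targeting that qubit, followed by a $\NOT$; this is $O(1)$ gates on $O(n)$ qubits. Running $V_0$ and then this flag produces $\sin\theta_0\ket g\ket0+\cos\theta_0\ket b\ket1$ with $\sin\theta_0=\|\Pi_G\ket\chi\|$ and $\ket g=\Pi_G\ket\chi/\sin\theta_0$; morally we now want \Cref{thm:flag_grover} to erase the $\ket1$ branch.

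Two obstructions remain, both handled because the relevant numbers are explicit (the warm start is a fixed uniform circuit). \textbf{(i) Hitting the right angle.} $\theta_0$ is close to $\pi/2$, not of the form $\pi/(4k+2)$ for a positive integer $k$. Fix by \emph{dilution}: append an ancilla $d$, rotate it to $\cos\phi\ket0+\sin\phi\ket1$ with a single-qubit gate, and add ``$d=1$'' as an extra flag trigger. The flagged amplitude becomes $\sin\theta_0\cos\phi$; since $\sin\theta_0\in[1-\epsilon,1]$ is explicitly computable, choose $\phi$ with $\sin\theta_0\cos\phi=\tfrac12=\sin(\pi/6)$, so the flagged amplitude is $\sin\theta$ for $\theta=\pi/(4\cdot 1+2)$. \textbf{(ii) Landing on an exactly balanced nekomata.} The state $\ket g$ has the form $a_0\ket{0^n}\ket{\psi_0}+a_1\ket{1^n}\ket{\psi_1}$ with $\ket{\psi_0},\ket{\psi_1}$ unit and $a_0,a_1$ only \emph{close} to $1/\sqrt2$, hence $\ket g$ need not literally be a nekomata. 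The amplitudes $a_0,a_1$ are explicit (up to the common factor $\sin\theta_0$ they are $\|(\langle 0^n|\otimes I)\ket\chi\|$ and $\|(\langle 1^n|\otimes I)\ket\chi\|$); assume WLOG, relabeling $0\leftrightarrow1$ on the first $n$ qubits if needed, that $a_1\le a_0$, and set $r=a_1/a_0\le1$. Append a fresh qubit $g'$ in $\ket0$ and, controlled on the first $n$ qubits being $0^n$, apply to $g'$ the single-qubit $X$-rotation with $\cos\beta=r$ (in $\QAC$: one generalized Toffoli with negated controls to compute the control bit, a constant-size controlled single-qubit rotation, an uncomputing Toffoli); also make ``$g'=1$'' a flag trigger. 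A short calculation shows the normalized flagged branch then equals exactly $\tfrac1{\sqrt2}\ket{0^n}\ket{\phi_0}+\tfrac1{\sqrt2}\ket{1^n}\ket{\phi_1}$ for unit $\ket{\phi_0},\ket{\phi_1}$ — the rotation bleeds the heavy branch's weight onto $g'=1$ until it matches the light branch, after which both weigh exactly $1/2$ — while the new flagged amplitude is $\sqrt2\,a_1\cos\phi$, a known constant that $d$ tunes to $\tfrac12$.

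Assembling: let $V$ be $V_0$, followed by the rebalancing rotation on $g'$, the dilution rotation on $d$, and a flag qubit set to the $\OR$ of the three triggers. Then $V$ is a constant-depth $O(2^n)$-size $\QAC$ circuit with $V\ket{0\cdots0}=\tfrac12\ket{\mathrm{nek}}\ket0+\tfrac{\sqrt3}2\ket{\mathrm{junk}}\ket1$, where $\ket{\mathrm{nek}}$ is an $n$-qubit nekomata tensored with ancillas $g',d$ and the intermediate flag bits, all in $\ket0$. Applying \Cref{thm:flag_grover} with $k=1$ yields a constant-depth $O(2^n)$-size $\QAC$ circuit $C$ with $C\ket{0\cdots0}=\ket{\mathrm{nek}}\ket0$, an exact preparation of a nekomata.

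The hard part is obstruction (ii): landing on an \emph{exact} nekomata rather than an approximately balanced state, which is why we compute the warm start's branch weights and rebalance before amplifying. (If the warm start is invariant under $X^{\otimes n}$ on the first $n$ qubits — natural for cat-state-type constructions — then $a_0=a_1$ and the rebalancing step can be dropped.) A secondary point is to verify that every added operation is genuinely constant-depth $\QAC$; the only non-obvious ones are the multi-controlled rotation and the $\OR$, each of which reduces to $O(1)$ generalized Toffoli and single-qubit gates (and $U_\OR\in\QAC^0$ is noted in the preliminaries).
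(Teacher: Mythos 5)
Your proposal is correct and takes essentially the same route as the paper: start from the approximate nekomata of \cite{rosenthal,grier_morris}, flag the components whose cat register is not $0^n$ or $1^n$, rebalance the two cat branches and tune the unflagged amplitude to exactly $\sin\frac{\pi}{4k+2}$, then apply \Cref{thm:flag_grover}. The only differences are implementation details (the paper merges your rebalancing and dilution gadgets into a single two-qubit gate whose two ancillas are absorbed into an $(n+2)$-qubit nekomata, targeting $\theta=\pi/10$, $k=2$, rather than your $\pi/6$, $k=1$), plus one harmless bookkeeping slip on your side: the final good amplitude should also carry the factor $\sin\theta_0$ (equivalently, use the unnormalized weight of the $\ket{1^n}$ branch of $\ket\chi$), which changes nothing since it is still an explicitly computable constant at least $1/2$ that your dilution angle $\phi$ brings to exactly $1/2$.
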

\begin{corollary}\label{cor:exact_qtc0}
    There exists a constant-depth polynomial-size $\QTC$ circuit which exactly prepares a nekomata. As a consequence, $\BQTC^0 = \BQNC^0_\mathrm{wf}$ and $\EQTC^0 = \EQNC^0_\mathrm{wf}$.
\end{corollary}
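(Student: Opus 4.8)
The plan is to follow the proof of \Cref{cor:exact_qac0} but feed exact amplitude amplification the \emph{polynomial-size} approximate nekomata of \cite{grier_morris} (see also \cite{rosenthal_thesis}) rather than Rosenthal's exponential-size one. Two observations make this routine modulo a single delicate point. First, \Cref{thm:flag_grover} has a verbatim $\QTC$ analogue: its amplification operator only interleaves the warm-start circuit $V$, its inverse, the single-qubit phase flip on the flag qubit, and the reflection about the all-zeros state, and this reflection is — up to single-qubit conjugations and a global phase — a multiply-controlled $Z$, hence already in $\QAC$ and a fortiori in $\QTC$; so the amplified circuit stays constant-depth polynomial-size $\QTC$ whenever $V$ is and the number of rounds is $O(1)$. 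Second, the ``flag'' predicate ``the first register equals $0^n$ or $1^n$'' is computable in constant-depth $\QTC$ using two generalized Toffoli gates.

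Concretely, let $V_\chi$ be the constant-depth polynomial-size $\QTC$ circuit of \cite{grier_morris} with $V_\chi\ket{0}=\ket\chi$ and $\|\,\ket\chi-\ket\nu\,\|\le\epsilon$ for some nekomata $\ket\nu$ and a sufficiently small constant $\epsilon$. Let $\mathcal N$ be the subspace on which the first register is $0^n$ or $1^n$. Since $\ket\nu\in\mathcal N$, the projection has $p:=\|P_{\mathcal N}\ket\chi\|^2\ge 1-\epsilon^2$, and the normalization $a\ket{0^n}\ket{\mu_0}+b\ket{1^n}\ket{\mu_1}$ of $P_{\mathcal N}\ket\chi$ has $|a|,|b|$ within $O(\epsilon)$ of $\tfrac1{\sqrt2}$; moreover $a,b,p$ are known exactly. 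I would append to $V_\chi$ a constant-depth $\QTC$ circuit that (i) writes the $\mathcal N$-membership flag to a fresh qubit; (ii) conditioned on the flag being set and on the first register equalling the heavier of the two branches, rotates a fresh ancilla by a fixed angle so that the $\ket{0^n}$- and $\ket{1^n}$-amplitudes become equal; and (iii) applies one further controlled rotation of a fresh ancilla that scales the overall ``success'' amplitude — ``success'' meaning the flag is set and both dampening ancillas are $\ket0$ — down to exactly $\sin\!\bigl(\tfrac\pi6\bigr)$. On the success branch the data register is then a genuine \emph{balanced} nekomata and the amplitude is $\sin\!\bigl(\tfrac\pi{4\cdot1+2}\bigr)$, so the $\QTC$ version of \Cref{thm:flag_grover} with $k=1$ outputs this nekomata exactly.

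The only delicate point — and the reason \Cref{cor:exact_qac0} needs Rosenthal's high-fidelity (if exponential-size) construction — is the rigid requirement in \Cref{thm:flag_grover} that the success amplitude be exactly $\sin\!\bigl(\tfrac\pi{4k+2}\bigr)$. This forces us to scale the warm start's amplitude \emph{down} to such a value while keeping $k$ constant; scaling down to $\sin(\pi/6)=\tfrac12$ needs the pre-scaling success amplitude to be at least $\tfrac12$, which is exactly the ``warmth'' provided by \cite{grier_morris}: after rebalancing this amplitude is $\sqrt{p}\cdot\sqrt2\cdot\min(|a|,|b|)\ge 1-O(\epsilon)$. A warm start with only $1/\poly(n)$ overlap with the nekomata subspace would force $k=\poly(n)$ and destroy the constant depth. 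One should also verify the dampening rotations in (ii)--(iii) are well defined — $a,b\neq0$ and the post-rebalancing amplitude still exceeds $\tfrac12$ — which holds for $\epsilon$ a small enough constant; and step (ii) can be skipped if the \cite{grier_morris} state is exactly symmetric under $0\leftrightarrow 1$ on the first register.

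For the stated consequences, invoke the equivalence of \cite{qacc,rosenthal} among preparing an $n$-qubit nekomata, computing $\Parity_n$, and implementing $\Fanout_n$ in constant depth; its reductions use only single-qubit gates, $\CNOT$s, and the nekomata circuit, so they apply over $\QTC$ as well. Hence the exact $\QTC$ nekomata above yields an exact constant-depth polynomial-size $\QTC$ implementation of $\Fanout_n$, and substituting it for each fanout gate shows $\EQNC^0_\mathrm{wf}\subseteq\EQTC^0$ and $\BQNC^0_\mathrm{wf}\subseteq\BQTC^0$. Conversely, a quantum threshold gate $U_{\Threshold_{\ge k}}$ can be implemented exactly in constant depth using fanout \cite{hs_fanout,tt_fanout}, and substituting for each threshold gate gives $\EQTC^0\subseteq\EQNC^0_\mathrm{wf}$ and $\BQTC^0\subseteq\BQNC^0_\mathrm{wf}$. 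Combining the inclusions yields $\EQTC^0=\EQNC^0_\mathrm{wf}$ and $\BQTC^0=\BQNC^0_\mathrm{wf}$.
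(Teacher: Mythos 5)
Your proposal is correct and follows essentially the same route as the paper, whose single proof of \Cref{cor:exact_qac0,cor:exact_qtc0} also applies \Cref{thm:flag_grover} to the approximate nekomata warm start (Grier--Morris in the $\QTC$ case), flags the $\{0^n,1^n\}$-prefix subspace with a $U_f$ gate, and tunes the success amplitude with ancilla rotations --- the paper just packages your steps (ii)--(iii) into one two-qubit gate $Q$ on two appended ancillas so that $ap=bq=\tfrac1{\sqrt2}\sin\tfrac{\pi}{10}$ (hence $k=2$ rather than your $k=1$). The only small inaccuracy is your aside that \Cref{cor:exact_qac0} uses Rosenthal's construction because of its fidelity: both warm starts have error $O(1/\sqrt n)$, and Rosenthal's is needed there simply because it is a $\QAC$ circuit; this does not affect your argument.
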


\begin{proof}[Proof of \Cref{cor:exact_qac0,cor:exact_qtc0}]
     Suppose our circuit produces an $\epsilon$-approximate nekomata $\ket\psi$ on the first $n$ qubits, where $\epsilon=O(1/\sqrt n)$ in \cite{rosenthal,grier_morris}. We write
     \begin{align*}
         \ket{\psi} = a\ket{0^n}\ket{\psi_0} + b\ket{1^n}\ket{\psi_1} + c\ket{\omega},
     \end{align*}
      where $a^2+b^2+c^2=1$ and $\ket{\omega}$ is not supported on basis states of the form $\ket{x}\ket{y}$ for $x\in \{0^n, 1^n\}$ and arbitrary $y$. 
      
      If we now add two ancilla qubits to our state, we can apply some two-qubit unitary $Q$ on them and obtain a state with weights on the $\ket{0^n},\ket{1^n}$ components which are just right for \Cref{thm:flag_grover}. 
      Precisely, take a two-qubit unitary $Q$ which satisfies 
      \begin{align*}
          Q\ket{00} = p\ket{00} + q\ket{11} + \sqrt{1 - p^2 - q^2}\ket{01},
      \end{align*}
      where $p^2+q^2\le1$ will be set later.
      It is unimportant how $Q$ acts on the other two-qubit computational basis states. Applying $Q$ to the new ancilla qubits, the result is
      \begin{align*}
          \ket{\phi'} = ap\ket{00}\ket{0^n}\ket{\psi_0} + bq\ket{11}\ket{1^n}\ket{\psi_1} + c'\ket{\omega'}
      \end{align*}
      where $a^2p^2 + b^2q^2 + {c'}^2 = 1$ and $\ket{\omega'}$ is unsupported on $\ket{x}\ket{y}$ for $x \in \left\{0^{n+2}, 1^{n+2}\right\}$ and arbitrary $y$. 
      
      Since $\epsilon=O(1/\sqrt n)$, for $n$ sufficiently large we have $a^2, b^2 \in (1/3, 2/3)$ and hence we can pick $p, q \in (-1, 1)$ such that $ap = bq=\frac1{\sqrt2}\sin\frac{\pi}{10}\approx .219$ and $p^2 + q^2 \leq 1$. 
      
      Finally define $f:\{0, 1\}^{n+2} \to \{0, 1\}$ as 
      \begin{align*}
          f(x) = \begin{cases}
              0 & \text{ if } x \in \left\{0^{n+2}, 1^{n+2}\right\},\\
              1 & \text{ otherwise,}
          \end{cases}
      \end{align*}
      and observe that $U_f$, which maps $\ket{u}\ket{v}$ to $\ket{u}\ket{v\oplus f(u)}$ for $u\in\{0,1\}^{n+2}$ and $v\in\{0,1\}$, is in $\QAC^0$.
      
      Combining the above circuits with $U_f$ and a flag qubit, we obtain a constant-depth circuit $V$ for preparing
      \[
      \ket{\phi''} = \sin\frac\pi{10}\left(\frac1{\sqrt2}\ket{0^{n+2}}\ket{\psi_0}+\frac1{\sqrt2}\ket{1^{n+2}}\ket{\psi_1}\right)\ket0+\cos\frac\pi{10}\ket{\omega''}\ket1
      \]
      for some arbitrary state $\ket{\omega''}$ which is again unsupported on $\sket{0^{n + 2}}$ and $\sket{1^{n + 2}}$.
      Then by \Cref{thm:flag_grover}, we obtain an exact $(n+2)$-qubit nekomata $\frac1{\sqrt2}\ket{0^{n+2}}\ket{\psi_0}+\frac1{\sqrt2}\ket{1^{n+2}}\ket{\psi_1}$. This concludes the proof of both corollaries.
\end{proof}

A simple consequence of \Cref{cor:exact_qac0} is that we may shrink the scale into a $\poly(n)$-size $\QAC$ circuit which exactly prepares a $\log(n)$-qubit nekomata and hence $\Fanout_{\log(n)}$.
By iterating this, we can implement fanout of polylogarithmic size.

\begin{corollary}\label{cor:poly-log-fan}
For any constant $d\ge1$, there exists a constant-depth $\poly(n)$-size $\QAC$ circuit which exactly computes $\Fanout_{\log^d(n)}$.
\end{corollary}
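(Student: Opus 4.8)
The plan is to bootstrap from Corollary~\ref{cor:exact_qac0} by a standard ``shrink and iterate'' argument. First I would use the fact, recalled in the preliminaries, that exactly preparing an $m$-qubit nekomata with a constant-depth $\QAC$ circuit is equivalent (up to constant blow-up in depth and size) to exactly implementing $\Fanout_m$ and to exactly computing $\Parity_m$. Corollary~\ref{cor:exact_qac0} gives a constant-depth $O(2^m)$-size $\QAC$ circuit preparing an $m$-qubit nekomata; choosing $m = c\log n$ for a suitable constant $c$ makes this circuit have size $O(2^{c\log n}) = O(n^c) = \poly(n)$. Hence there is a constant-depth $\poly(n)$-size $\QAC$ circuit exactly computing $\Fanout_{c\log n}$, and in particular $\Fanout_{\log n}$ (an $m$-qubit fanout gate trivially yields an $m'$-qubit fanout gate for any $m' \le m$ by acting as the identity on the extra targets).

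Next I would iterate this gadget to amplify the fanout width. Suppose we have a constant-depth $\poly(n)$-size $\QAC$ circuit for $\Fanout_{t}$ where $t = \log^i(n)$ for some $i \ge 1$. A $\Fanout_{t^2}$ gate on $1 + t^2$ qubits can be built from $O(t)$ copies of $\Fanout_t$ arranged in two layers: organize the $t^2$ target qubits into $t$ blocks of $t$ qubits each, use one $\Fanout_t$ gate controlled on the global control bit to copy it into $t$ ``block-leader'' qubits (this is itself a $\Fanout_t$ acting on the control plus the $t$ leaders), then in parallel apply one $\Fanout_t$ gate per block with the block-leader as control and the $t$ block members as targets. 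This is depth $O(1)$ in terms of $\Fanout_t$ gates and uses $t+1 = \poly(n)$ of them, each of which expands to a constant-depth $\poly(n)$-size $\QAC$ subcircuit; the total is still constant depth and $\poly(n)$ size. Starting from $\Fanout_{\log n}$ and applying this doubling-of-the-exponent step a constant number $\lceil \log_2 d \rceil$ of times yields $\Fanout_{\log^{2^{\lceil \log_2 d\rceil}}(n)}$, which covers $\Fanout_{\log^d(n)}$ since fanout on more targets subsumes fanout on fewer; because $d$ is a fixed constant, only a constant number of iterations occur, so depth and size remain $O(1)$ and $\poly(n)$ respectively.

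The main point to be careful about — though it is routine rather than deep — is bookkeeping the ancilla qubits and uncomputation across the iteration: each $\Fanout_t$ subcircuit obtained from the nekomata construction comes with its own poly-size ancilla register that must be returned to $\ket{0}$, and when these subcircuits are composed to build $\Fanout_{t^2}$ one must ensure the intermediate block-leader qubits (which temporarily hold copies of the control bit) are disentangled at the end; this is handled by running a second layer of $\Fanout_t$ gates in reverse to clear the leaders, or equivalently by noting $\Fanout_t$ is its own inverse. Since the number of iterations is constant and each step multiplies size by only a $\poly(n)$ factor and adds only $O(1)$ depth, the final circuit is constant-depth and polynomial-size, completing the proof.
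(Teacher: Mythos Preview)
Your proposal is correct and follows essentially the same strategy as the paper: shrink Corollary~\ref{cor:exact_qac0} to obtain $\Fanout_{\log n}$ in $\QAC^0$, then iterate a two-layer tree-of-fanouts construction a constant number of times to square the width and reach $\log^d(n)$. Your write-up is in fact more explicit than the paper's sketch about the block-leader ancillas and their uncomputation, which is the right level of care; the paper compresses this into a single sentence.
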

\begin{proof}
Abbreviate $k=\log(n)$ and we proceed by induction on $d$.
The base case $d=1$ is precisely \Cref{cor:exact_qac0}.
For $d\ge2$, observe that $\Fanout_{k^d}$ can be achieved by first making a fanout of size $k^{d/2}$ then applying $k^{d/2}-1$ many $\Fanout_{k^{d/2}}$ in parallel.
Since $d$ is constant, this has constant depth and polynomial size, which completes the proof.
\end{proof}

\begin{remark}\label{rmk:PRU-in-QAC0}
In \cite{foxman2025random}, it is shown that approximate pseudorandom unitaries and $t$-designs (for constant $t$) can be constructed in $\QAC^0$. They rely on the approximate nekomata constructions of \cite{rosenthal, grier_morris} and mention that if the error could be removed from these nekomata constructions, then the parameters of their random unitary constructions could be improved. We expect our \Cref{cor:exact_qac0} and \Cref{cor:exact_qtc0} to be helpful for their purposes.
\end{remark}

\subsection{Threshold Functions with Polylogarithmic Weight}

Now we show that \Cref{cor:poly-log-fan} allows us to compute ``small'' threshold functions. For each $k$, define  $\Threshold_{\ge k}\colon\{0,1\}^n\to\{0,1\}$ to be the threshold function with weight $k$, which outputs $1$ iff the input string has Hamming weight at least $k$.
In \Cref{lem:polylog_threshold_in_qac0}, we will show $\Threshold_{\ge k}\in\QAC^0$ for all $k\le\polylog(n)$. Note that $\OR_n$ is equivalent to $\Threshold_{\geq 1}$, so it can trivially be computed in $\QAC^0$.
We remark that until our work, it was not even clear whether $\Threshold_{\ge2}$ is in $\QAC^0$.

\begin{lemma}\label{lem:polylog_threshold_in_qac0}
If $k\le\polylog(n)$, then $\Threshold_{\ge k}$ is in $\EQAC^0$.
\end{lemma}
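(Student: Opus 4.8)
The plan is to build $\Threshold_{\ge k}$ from the polylog-size exact fanout gates provided by \Cref{cor:poly-log-fan}, together with small-weight symmetric functions (like $\MOD$-counting or exact thresholds on a polylog number of bits) that can be computed in $\QNC^0_\mathrm{wf}$. The key structural idea is a divide-and-conquer / block-counting argument: split the $n$ input bits into $n/B$ blocks of size $B = \polylog(n)$, compute the Hamming weight of each block as a small integer (of $O(\log B) = O(\log\log n)$ bits), and then sum these $n/B$ small integers. Since $k \le \polylog(n)$, we may truncate every partial sum at $k+1$ (treating any value $\ge k+1$ as a saturated symbol), so all intermediate counts live in an alphabet of size $O(k) = \polylog(n)$, i.e., $O(\log k) = O(\log\log n)$ bits.

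First I would observe that computing the weight of a single block of $B = \polylog(n)$ bits is just a symmetric function on $\polylog(n)$ bits; by the $\QNC^0_\mathrm{wf}$ constructions of \cite{hs_fanout, tt_fanout} (which compute arbitrary symmetric functions on $m$ bits in constant depth using fanout gates of size $\poly(m)$), and using \Cref{cor:poly-log-fan} to supply those $\polylog(n)$-size fanout gates inside $\QAC^0$, this can be done in constant depth and polynomial size, in parallel across all $n/B$ blocks. Second, I would iterate the same trick: group the block-counts into super-blocks, each super-block holding $\polylog(n)$ counts each of $O(\log\log n)$ bits, so the total input to a super-block is again $\polylog(n)$ bits; computing the (saturated-at-$k+1$) sum within a super-block is a function on $\polylog(n)$ bits — not symmetric in general, but still computable in $\QNC^0_\mathrm{wf}$ since any Boolean function on $O(\log n)$ bits is in $\NC^0_\mathrm{wf}$ (indeed in $\QAC^0$ via a DNF of polynomial size, since each clause has $O(\log n)$ literals). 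After $O(1)$ rounds of this halving-in-the-exponent recursion (exactly as in the proof of \Cref{cor:poly-log-fan}), the number of remaining counts is $O(1)$, and we read off whether the total saturated sum is $\ge k$. Finally, uncompute the scratch registers in the standard way to land in $\EQAC^0$ with the answer on the output qubit; everything is exact, so we get membership in $\EQAC^0$ rather than just $\BQAC^0$.

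The main obstacle is managing the fanout budget through the recursion: each of the $O(1)$ levels requires routing block-counts into the inputs of small symmetric/Boolean subcircuits, and those subcircuits themselves (via \cite{hs_fanout, tt_fanout}) demand fanout gates whose size is polynomial in their input length $\polylog(n)$ — hence still $\polylog(n)$ — which is exactly the regime \Cref{cor:poly-log-fan} covers; one must check that the composition does not blow the fanout requirement past $\polylog(n)$ at any level, and that the block size $B$ can be chosen (e.g. $B = \log n$, giving $O(\log n / \log\log n)$ levels unless we use a larger $B = 2^{\sqrt{\log n}}$ or simply $B=\polylog(n)$ with a constant number of levels) so the recursion terminates in $O(1)$ rounds while keeping all intermediate register widths $O(\log n)$. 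A secondary point to verify is that truncating/saturating partial sums at $k+1$ is consistent across levels — i.e., that $\Threshold_{\ge k}$ of the whole string equals the saturated sum of saturated block-sums — which is immediate since $\min(a+b, k+1) = \min(\min(a,k+1)+\min(b,k+1), k+1)$, so no information needed to decide the $\ge k$ question is lost.
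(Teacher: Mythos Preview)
Your block-counting recursion cannot terminate in $O(1)$ rounds. After computing block weights you hold $n/B$ counts with $B=\polylog(n)$, so there are still $n/\polylog(n)$ items --- polynomially many in $n$. Each subsequent round aggregates at most $\polylog(n)$ counts into one (the super-block sub-circuit takes at most $\polylog(n)$ input bits), so the number of items shrinks only by a $\polylog(n)$ factor per round and the tree has depth $\Theta(\log n/\log\log n)$, not $O(1)$. The analogy with \Cref{cor:poly-log-fan} is misplaced: there the target size $\log^d(n)$ is already polylogarithmic and each round halves the exponent $d$; here the number of blocks starts out polynomial in $n$, and no constant number of $\polylog(n)$-factor reductions gets you down to $O(1)$. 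A separate problem is your claim that an arbitrary function on $O(\log n)$ bits lies in $\QAC^0$ via its DNF: that DNF has $\poly(n)$ clauses, and routing each input bit to all of them needs fanout of size $\poly(n)$, which is exactly what is unavailable (you also slide between ``$\polylog(n)$ bits'' and ``$O(\log n)$ bits''; for the former the DNF is already superpolynomial). One can dodge this second issue by encoding block counts in unary so the aggregation becomes a symmetric function, but that does nothing for the depth.

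The paper's proof avoids recursion entirely. A modular hashing fact --- for any $k$-subset $S\subseteq[n]$ there is a modulus $m\le O(k^2\log n)$ under which all elements of $S$ are distinct --- yields the identity
\[
\Threshold_{\ge k}(x)=\bigvee_{m=k}^{L}\Threshold_{\ge k}\bigl(\OR(x\mid A_0^m),\ldots,\OR(x\mid A_{m-1}^m)\bigr),
\qquad L=O(k^2\log n)=\polylog(n),
\]
where $A_\ell^m=\{i:i\equiv\ell\pmod m\}$. Each inner threshold acts on at most $L=\polylog(n)$ bits and is handled directly by \Cref{fct:polylog_sym_in_ac0_polylog}; only $L$ copies of $x$ are needed, supplied by \Cref{cor:poly-log-fan}. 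This one-shot collapse from $n$ bits to $\polylog(n)$ bits is the missing idea in your scheme.
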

\begin{proof}
The overall circuit construction is implicit in \cite{symac0,ragde_wigderson}, with certain classical components replaced by their (non-trivial) quantum implementations \cite{hs_fanout,tt_fanout}.

We will need the following number theoretic fact due to \cite{symac0}. Its proof is provided in \Cref{app:sec:exact-aa} for completeness.

\begin{restatable}[{\cite[Lemma 1]{symac0}}]{fact}{fctmoddethash}\label{fct:mod_det_hash}
Let $S\subseteq [n]$. There exists some integer $|S|\le m \leq O\left(|S|^2\log(n)\right)$ such that $i\not\equiv j\pmod m$ for all distinct $i, j \in S$.
\end{restatable}

For each $m\ge k$ and $\ell=0,1,\ldots,m-1$, define
$$
A^m_\ell=\{i\in[n]\ \mid\ i\equiv\ell\pmod m\}.
$$
Now observe that $\Threshold_{\ge k}(x)=1$ iff there exists $S\subseteq[n]$ of size $k$ such that $x_i=1$ for all $i\in S$.
By \Cref{fct:mod_det_hash}, there exists some $k\le m\le L=O(k^2\log(n))$ such that each $i\in S$ lies in $A_\ell^m$ for a different $\ell=0,1,\ldots,m-1$.
This means
\begin{equation}\label{eq:lem:polylog_ex_in_ac0_polylog_fanout_1}
\Threshold_{\ge k}(x)
\le\bigvee_{m=k}^L
\Threshold_{\ge k}\left(\OR(x\mid A_0^m),\OR(x\mid A_1^m),\ldots,\OR(x\mid A_{m-1}^m)\right),
\end{equation}
where $\OR(x\mid S)$ applies the $\OR$ function on $x$'s bits in $S\subseteq[n]$.

On the other hand, notice that $\{A^m_\ell\}_\ell$ partitions $[n]$ into $m$ disjoint sets.
Hence the RHS of \Cref{eq:lem:polylog_ex_in_ac0_polylog_fanout_1} is also a lower bound of the LHS of \Cref{eq:lem:polylog_ex_in_ac0_polylog_fanout_1}.
Hence
\begin{equation}\label{eq:lem:polylog_ex_in_ac0_polylog_fanout_2}
\Threshold_{\ge k}(x)=\bigvee_{m=k}^L
\Threshold_{\ge k}\left(\OR(x\mid A_0^m),\OR(x\mid A_1^m),\ldots,\OR(x\mid A_{m-1}^m)\right).
\end{equation}

Now we convert \Cref{eq:lem:polylog_ex_in_ac0_polylog_fanout_2} into the following desired $\QAC$ circuit.
\begin{itemize}
\item We first make $L=\polylog(n)$ copies of $x$ using $\Fanout_L$ fanout in parallel for each bit of $x$.
By \Cref{cor:poly-log-fan}, this is constant depth and polynomial size.
\item Then for each $k\le m\le L$ and $\ell=0,1,\ldots,m-1$, we compute $y^m_\ell=\OR(x\mid A_\ell^m)$ in parallel separately using the copies in the previous step. This uses a layer of $\OR$ gates.
\item Now for each $k\le m\le L$, we compute $z_m=\Threshold_{\ge k}(y_0^m,y_1^m,\ldots,y_{m-1}^m)$ in parallel separately. This relies on the following fact due to \cite{hs_fanout,tt_fanout}. For completeness we also provide a self-contained proof in \Cref{app:sec:exact-aa}.

\begin{restatable}[\cite{hs_fanout,tt_fanout}]{fact}{fctpolylogsyminacpolylog}\label{fct:polylog_sym_in_ac0_polylog}
Let $f\colon\{0,1\}^m\to\{0,1\}$ be symmetric and $m\le\polylog(n)$.
Then $f$ can be computed exactly by a constant-depth $\poly(n)$-size $\QAC$ circuit.
\end{restatable}

\item Finally we obtain $\Threshold_{\ge k}(x)=\OR(z_k,z_{k+1},\ldots,z_L)$ with an $\OR$ gate.
\qedhere
\end{itemize}
\end{proof}

\subsection{Symmetric Functions in \texorpdfstring{$\AC^0$}{AC0}}\label{sec:symac0}

The small threshold functions from \Cref{lem:polylog_threshold_in_qac0} actually form a complete basis for symmetric functions in $\AC^0$ \cite{moran_symac0,BW_symac0}. This allows us to put $\Sym\AC^0$ inside $\QAC^0$, where $\Sym\AC^0$ is the class of symmetric functions computable in $\AC^0$.

The following known characterization of $\Sym\AC^0$ was proved independently by Moran \cite{moran_symac0} and Brustman and Wegner \cite{BW_symac0}.

\begin{theorem}[\cite{moran_symac0,BW_symac0}]\label{thm:symAC0characterization}
Let $f\colon\{0,1\}^n\to\{0,1\}$ be symmetric such that $f(x)=v_k$ for all $k=0,1,\ldots,n$ and $x\in\{0,1\}^n$ with Hamming weight $k$. 
Then $f\in\AC^0$ iff $v_k=v_{k+1}=\cdots=v_{n-k}$ for some $k=\polylog(n)$.
\end{theorem}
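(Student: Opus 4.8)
The plan is to prove both directions of the equivalence separately.

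\medskip
\noindent\textbf{Easy direction ($\Leftarrow$): a symmetric function which is eventually constant in the ``middle'' lies in $\AC^0$.} Suppose $v_k = v_{k+1} = \cdots = v_{n-k}$ for some $k = \polylog(n)$. Then on input $x$, the value $f(x)$ is determined as follows: if $|x| < k$ then the output is $v_{|x|}$, if $|x| > n-k$ then the output is $v_{|x|}$, and otherwise the output is the common middle value $v_k$. I would implement this by a disjunction over the ``low'' and ``high'' regimes. For the low regime, for each $j < k$ the predicate $|x| = j$ equals $\Threshold_{\ge j}(x) \wedge \neg \Threshold_{\ge j+1}(x)$, and since $j+1 \le k = \polylog(n)$, Lemma~\ref{lem:polylog_threshold_in_qac0} — wait, that is a quantum statement; classically this is the elementary fact that $\Threshold_{\ge j}$ for $j \le \polylog(n)$ is in $\AC^0$ via the standard $\binom{n}{j} = n^{O(\polylog n)}$-size DNF (OR over all size-$j$ subsets of an AND of those coordinates), which is quasipolynomial, not polynomial. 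Here I need to be slightly more careful: the cleanest route is to invoke the number-theoretic Fact~\ref{fct:mod_det_hash} exactly as in the proof of Lemma~\ref{lem:polylog_threshold_in_qac0}, which gives a genuinely polynomial-size $\AC^0$ circuit for $\Threshold_{\ge k}$ when $k = \polylog(n)$; the AND/OR-collapsing trick over the $O(k^2 \log n)$ moduli makes the circuit polynomial-size and constant-depth. For the high regime, apply the same to $\bar x$ and the thresholds $\Threshold_{\ge j}(\bar x) = \Threshold_{\ge\, n-j}(x)$ for $j < k$. Finally $f(x)$ is a constant-depth combination (a small DNF over at most $2k+1 = \polylog(n)$ cases, with the default middle case handled by negating the indicators of all low/high cases) of these $\AC^0$ predicates and the constants $v_0,\dots,v_{k-1},v_k,v_{n-k+1},\dots,v_n$, so $f \in \AC^0$.

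\medskip
\noindent\textbf{Hard direction ($\Rightarrow$): if a symmetric $f \in \AC^0$ then its value vector is constant on $[k, n-k]$ for some $k = \polylog(n)$.} This is the substantive content and I expect it to be the main obstacle. The natural approach is via a Håstad-style switching-lemma / random-restriction argument: if $f$ is computed by a depth-$d$ polynomial-size $\AC^0$ circuit, then applying a random restriction that leaves roughly $n / \polylog(n)$ variables alive collapses $f$ to a small decision tree with high probability. But $f$ is symmetric, so on the restricted subcube it is still (essentially) a symmetric function of the surviving coordinates determined by the Hamming weight; a symmetric function with a small decision tree must be a ``junta in Hamming weight'', i.e., its value depends only on whether the weight of the free part is below some small threshold $t$ or above $m - t$ where $m$ is the number of free variables and $t$ is $\polylog$-bounded (the decision-tree depth). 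Translating this back through the restriction — which fixed some coordinates to $0$ and some to $1$ — and choosing the restriction to fix roughly half to $0$ and half to $1$, one concludes that the original value vector $(v_0,\dots,v_n)$ is constant on an interval around $n/2$ of width $n - \polylog(n)$, i.e., constant on $[k, n-k]$ with $k = \polylog(n)$. The delicate points are: (i) getting the quantitative switching lemma bounds right so that the ``dead zone'' shrinks only to $\polylog(n)$ and not, say, $n^{\varepsilon}$ — this requires the strong form of Håstad's lemma (iterated $d$ times, with the depth-$d$ savings appearing in the exponent of the $\log$); (ii) handling the fact that a restriction does not literally preserve symmetry but preserves it up to a relabeling that is irrelevant for a symmetric function; and (iii) ensuring one can simultaneously force enough $0$'s and enough $1$'s so that the surviving interval is genuinely centered, rather than pushed to one end.

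\medskip
\noindent\textbf{Expected main obstacle and fallback.} The hard direction is where all the work is; rather than reprove Håstad's switching lemma, I would cite the known quantitative statement (e.g., from Håstad's thesis or a standard textbook such as \cite{jukna2012boolean}) in the form: a depth-$d$ size-$s$ circuit, hit by a $p$-random restriction with $p = \Theta\big(1/\log^{d-1}(s)\big)$, becomes a decision tree of depth $t = O(\log s)$ except with probability $o(1)$. The translation lemma — ``a symmetric function of $m$ bits computable by a depth-$t$ decision tree is constant on Hamming weights in $[t, m-t]$'' — is an easy combinatorial observation (a depth-$t$ decision tree cannot distinguish two inputs that agree on the queried coordinates, and among inputs of a fixed ``interior'' weight one can always find such a pair). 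The only genuinely paper-specific care needed is bookkeeping: since the circuit for $f$ is promised to be polynomial-size, $\log s = O(\log n)$, so $t = O(\log n)$ and $p^{-1} = \polylog(n)$, hence the live set has size $m = pn = n/\polylog(n)$ and the dead zone on each side, after accounting for the coordinates fixed to $1$ in the restriction, comes out to $\polylog(n)$. I would present the hard direction at this level of citation-plus-glue and spell out only the decision-tree-to-interval lemma in full.
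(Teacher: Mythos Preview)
The paper does not prove \Cref{thm:symAC0characterization}: it is quoted as a known classical result of Moran and of Brustmann--Wegener and invoked as a black box in the proof of \Cref{cor:sym_ac0_in_qac0}. There is therefore no in-paper argument to compare your proposal against.

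On the merits of your plan: the easy direction is along the right lines, and indeed \Cref{lem:polylog_threshold_in_qac0} is essentially a quantized version of the classical construction you sketch via \Cref{fct:mod_det_hash} (though pushing the size down from quasipolynomial to genuinely polynomial for $\Threshold_{\ge k}$ with $k=\polylog(n)$ takes more care than a single pass of hashing; you would need to consult the cited references for the full construction). Your hard direction, however, has a real quantitative gap. A random restriction with survival rate $p = 1/\polylog(n)$ leaves only $m \approx pn = n/\polylog(n)$ variables alive, and the switching lemma gives decision-tree depth $t = O(\log n)$ on those $m$ variables. Your translation lemma then yields constancy of the restricted value vector on an interval of width $m - 2t \approx n/\polylog(n)$, \emph{not} width $n - \polylog(n)$ as you assert: a single ``half-$0$, half-$1$'' restriction pins down $v_j$ only for $j$ in a window of width $\approx n/\polylog(n)$ around $n/2$. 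To obtain constancy on all of $[k, n-k]$ you must repeat the argument with $\Theta(\polylog(n))$ restrictions in which the number of dead variables set to $1$ ranges over roughly $0, m/2, m, \ldots, n-m$, and then chain the resulting overlapping intervals of constancy. Your point (iii) senses a difficulty here but misdiagnoses it as a centering problem rather than a width problem.
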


Now we prove \Cref{thm:symac0} as the following \Cref{cor:sym_ac0_in_qac0}.

\begin{corollary}\label{cor:sym_ac0_in_qac0}
$\Sym\AC^0 \subseteq \EQAC^0$.
\end{corollary}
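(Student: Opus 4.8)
The plan is to combine the structural characterization of symmetric $\AC^0$ functions from \Cref{thm:symAC0characterization} with the small-threshold construction of \Cref{lem:polylog_threshold_in_qac0} and the polylogarithmic symmetric-function construction of \Cref{fct:polylog_sym_in_ac0_polylog}. Let $f\colon\{0,1\}^n\to\{0,1\}$ be a symmetric function in $\AC^0$, with value $v_k$ on inputs of Hamming weight $k$. By \Cref{thm:symAC0characterization} there is some $k=\polylog(n)$ such that $v_k=v_{k+1}=\cdots=v_{n-k}$; call this common ``middle value'' $v_{\mathrm{mid}}$. The idea is to split $f$ into three regimes of the Hamming weight: the ``low'' regime $|x|<k$, the ``high'' regime $|x|>n-k$, and the ``middle'' regime $k\le|x|\le n-k$ where $f$ is constant.

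First I would compute the indicator $\ell(x)=\Threshold_{\ge k}(x)$ of the event $|x|\ge k$, and the indicator $h(x)=\Threshold_{\ge n-k+1}(x)$ of the event $|x|\ge n-k+1$, i.e.\ $|x|>n-k$; equivalently one can compute $\Threshold_{\ge k}$ applied to the bitwise complement $\bar x$ (which is free in $\QAC^0$ since $\NOT$ is a single-qubit gate). Both of these are threshold functions of weight $\polylog(n)$, so by \Cref{lem:polylog_threshold_in_qac0} they lie in $\EQAC^0$. In the middle regime, characterized by $\ell(x)=1$ and $h(x)=0$, the output is simply the constant $v_{\mathrm{mid}}$. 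It remains to handle the two boundary regimes. For the low regime, conditioned on $\ell(x)=0$ (so $|x|\le k-1$), the value of $f$ is determined by $|x|\in\{0,1,\dots,k-1\}$; I would compute $|x|$ restricted to this range — for instance by computing the bits $\Threshold_{\ge 1}(x),\Threshold_{\ge 2}(x),\dots,\Threshold_{\ge k-1}(x)$, each again a small threshold in $\EQAC^0$ — and then apply a $\polylog(n)$-input function (a lookup table on the at most $k$ possible values) via \Cref{fct:polylog_sym_in_ac0_polylog}. Symmetrically for the high regime using $\bar x$. Finally I would stitch the three cases together: the output is $v_{\mathrm{mid}}$ when $\ell(x)\wedge\neg h(x)$, the low-regime value when $\neg\ell(x)$, and the high-regime value when $h(x)$; this selection is a constant-size $\QAC$ computation (a few $\AND$, $\OR$, $\NOT$ gates, with uncomputation of ancillas as noted in the preliminaries).

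The main obstacle, and the step requiring the most care, is making sure every sub-function invoked really does have input size (or threshold weight) that is $\polylog(n)$, so that \Cref{lem:polylog_threshold_in_qac0} and \Cref{fct:polylog_sym_in_ac0_polylog} apply; this is exactly where \Cref{thm:symAC0characterization} is essential, since it guarantees $k=\polylog(n)$. A secondary bookkeeping point is that computing all the threshold bits $\Threshold_{\ge 1},\dots,\Threshold_{\ge k-1}$ in parallel, as well as $\Threshold_{\ge k}$ and $\Threshold_{\ge n-k+1}$, requires several copies of the input, which is obtained for free by a single layer of $\Fanout_{\polylog(n)}$ gates via \Cref{cor:poly-log-fan} — this keeps the construction constant-depth and polynomial-size. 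Everything else is routine constant-depth glue, and standard uncomputation returns the ancillas to $\ket{0}$, giving an exact $\QAC^0$ circuit and hence $\Sym\AC^0\subseteq\EQAC^0$.
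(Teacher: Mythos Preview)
Your proposal is correct and follows essentially the same approach as the paper: both invoke \Cref{thm:symAC0characterization}, compute the relevant small and large thresholds via \Cref{lem:polylog_threshold_in_qac0} (using the complement trick $\Threshold_{\ge n-k}(x)=\neg\Threshold_{\ge k+1}(\neg x)$ for the high end), and rely on \Cref{cor:poly-log-fan} for the needed input copies. The paper's version is organized slightly more directly---it writes $f$ (or $\neg f$) as an $\OR$ of $\polylog(n)$ many $\EX_k=\Threshold_{\ge k}\wedge\neg\Threshold_{\ge k+1}$ with $k\le\polylog(n)$ or $k\ge n-\polylog(n)$, thereby avoiding your three-regime split and the separate lookup-table step via \Cref{fct:polylog_sym_in_ac0_polylog}.
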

\begin{proof}
We first observe that $\Threshold_{\ge n-k}(x)=\neg\Threshold_{\ge k+1}(\neg x)$, where $\neg x$ is the bitwise negation of $x$.
Hence by \Cref{lem:polylog_threshold_in_qac0}, $\EQAC^0$ computes $\Threshold_{\ge k}$ for all $k\le\polylog(n)$ and $k\ge n-\polylog(n)$.

For each $k$, define $\EX_k\colon\{0,1\}^n\to\{0,1\}$ as the exact Hamming weight function with weight $k$, which outputs $1$ iff the input string has Hamming weight exactly $k$. Since $\EX_k(x)=\Threshold_{\ge k}(x)\land\neg\Threshold_{\ge k+1}(x)$, we know that $\EX_k\in\QAC^0$ for all $k\le\polylog(n)$ and $k\ge n-\polylog(n)$.

By \Cref{thm:symAC0characterization}, $f\in\Sym\AC^0$ iff it is an $\OR$ (or $\neg\OR$) of $\polylog(n)$ many $\EX_k$ where $k\le\polylog(n)$ or $k\ge n-\polylog(n)$.
Hence it suffices to make $\polylog(n)$ copies of the input string by \Cref{cor:poly-log-fan}, then compute the corresponding $\EX_k$'s in parallel, and finally use $\OR$ or $\neg\OR$ to aggregate the values.
\end{proof}

\section{State-Unitary Duality in \texorpdfstring{$\QAC^0$}{QAC0}}

In this section, we explore the duality between certain states and unitaries.
Such a duality was observed by Moore \cite{moore:1999} between the cat state and the parity function. In particular, one can implement the unitary for parity using a circuit which prepares the cat state.

Here we exhibit another duality, which allows us to \emph{weakly} compute the exact Hamming weight functions using circuits preparing the $\ket W$ state:
$$
\ket W=\frac1{\sqrt n}\sum_{i\in[n]}\ket{e_i}
\quad\text{where $e_i=0^{i-1}10^{n-i}$.}
$$
Then we lift this weak computation to separate $\QAC^0$ from classical circuit classes.

\subsection{The \texorpdfstring{$\ket W$}{W} State and Hamming Weight Tests}\label{sec:construct_w}

We begin with a $\QAC^0$ circuit constructing the $\ket W$ state.
The idea is to use a simple product state to approximate $\ket W$, then use amplitude amplification to obtain $\ket W$ exactly.
This is formalized as the following \Cref{thm:exact_w_1}, which is exactly \Cref{thm:exact_w}.

\begin{theorem}\label{thm:exact_w_1}
There exists a constant-depth polynomial-size $\QAC$ circuit $U$ such that $U\ket{0^n}\ket{0^a}=\ket W\ket{0^a}$ where $a=\poly(n)$.
\end{theorem}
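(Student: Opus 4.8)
The plan is to obtain $\ket W$ as the target of the exact amplitude amplification of \Cref{thm:flag_grover}, starting from a trivial product-state ``warm start'' and using the exact computation of $\EX_1$ as the flag. First I would prepare, with a single layer of identical single-qubit rotations, the product state
\[
\ket{\phi_\delta}=\big(\delta\ket1+\sqrt{1-\delta^2}\,\ket0\big)^{\otimes n},
\]
where $\delta\in(0,1/\sqrt n)$ is a real parameter to be fixed. Expanding in the computational basis, the projection of $\ket{\phi_\delta}$ onto the Hamming-weight-one subspace is $\delta(1-\delta^2)^{(n-1)/2}\sum_{i\in[n]}\ket{e_i}=f(\delta)\,\ket W$, where $f(\delta)=\sqrt n\,\delta(1-\delta^2)^{(n-1)/2}$; in particular the normalized restriction of $\ket{\phi_\delta}$ to weight one is exactly $\ket W$. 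Since $f$ is continuous on $[0,1/\sqrt n]$ with $f(0)=0$ and maximum value $(1-1/n)^{(n-1)/2}\to e^{-1/2}>1/2$, for all sufficiently large $n$ the intermediate value theorem yields a $\delta^\star$ with $f(\delta^\star)=\sin(\pi/6)=1/2$; as arbitrary single-qubit gates are allowed, the rotation realizing $\delta^\star$ is available to us. (The finitely many remaining $n$ are handled directly, $\ket W$ then acting on $O(1)$ qubits.)

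Next I would attach a flag qubit initialized to $\ket0$ and apply the $\QAC^0$ circuit that computes $\neg\EX_1$ of the $n$ warm-start qubits into it. Such a circuit exists because $\EX_1=\Threshold_{\ge1}\wedge\neg\Threshold_{\ge2}\in\EQAC^0$ by \Cref{lem:polylog_threshold_in_qac0} (equivalently, $\EX_1\in\Sym\AC^0\subseteq\EQAC^0$ by \Cref{cor:sym_ac0_in_qac0}), and by the uncomputation convention for exact circuits it runs in constant depth and polynomial size on a single copy of the register, leaving its $a'=\poly(n)$ internal ancillas in the state $\ket{0^{a'}}$. Writing $V$ for the composite circuit (rotations, then $\neg\EX_1$ into the flag), we obtain
\[
V\,\ket{0^n}\ket{0^{a'}}\ket0=\tfrac12\,\ket W\ket{0^{a'}}\ket0+\tfrac{\sqrt3}{2}\,\ket{\omega}\ket{0^{a'}}\ket1,
\]
where $\ket\omega$ is the normalized superposition of the basis states of Hamming weight $\ne1$. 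Reading $\ket{0^n}\ket{0^{a'}}$ as the ``input'' register and $\ket W\ket{0^{a'}}$ as $\ket{\psi_0}$, this matches the hypothesis of \Cref{thm:flag_grover} with $\theta=\pi/6=\pi/(4k+2)$ for $k=1$.

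Applying \Cref{thm:flag_grover} with $k=1$ then produces a constant-depth polynomial-size $\QAC$ circuit $C$ with $C\ket{0^n}\ket{0^{a'}}\ket0=\ket W\ket{0^{a'}}\ket0$; taking $a=a'+1=\poly(n)$ and reordering qubits gives the circuit $U$ claimed in \Cref{thm:exact_w_1}. I do not anticipate a genuine obstacle: the only delicate point is the exact tuning $f(\delta^\star)=\sin(\pi/6)$, which is painless since $f$ continuously sweeps $[0,e^{-1/2})\supseteq[0,1/2]$ and arbitrary single-qubit gates are available; moreover there is no circularity, since the $\EX_1$ circuit of \Cref{lem:polylog_threshold_in_qac0} relies (through polylogarithmic fanout and \Cref{cor:exact_qac0}) only on exact nekomata, not on $\ket W$ itself. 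The real content is simply recognizing that ``an approximate $\ket W$ together with a flag for the weight-one strings'' fits the constant-depth exact-amplification template.
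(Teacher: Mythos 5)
Your proposal is correct and follows essentially the same route as the paper: a single-qubit-rotation warm start tuned by continuity so that the weight-one amplitude equals $\sin(\pi/6)$, an exact $\EX_1$ flag from \Cref{cor:sym_ac0_in_qac0} (with ancillas restored by uncomputation), and one round of exact amplitude amplification via \Cref{thm:flag_grover} with $k=1$. Your explicit remarks on the $\neg\EX_1$ flag convention and the absence of circularity only make explicit what the paper leaves implicit.
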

\begin{proof}
Let $\delta\in[0,1]$ be a parameter to be determined later.
Using a layer of single-qubit gates, we prepare 
\begin{align*}
\left(\sqrt\delta\ket0+\sqrt{1-\delta}\ket1\right)^{\otimes n}
=\sqrt{n\delta^{n-1}(1-\delta)}\cdot\ket W+\sket{W^\bot},
\end{align*}
where $\ket{W^\bot}$ is some unnormalized state only supported on computational basis states of Hamming weight not equal to $1$.
Then by the $\EQAC^0$ implementation of $\EX_1$ from \Cref{cor:sym_ac0_in_qac0}, we obtain the state
\begin{equation}\label{eq:thm:exact_w_1}
\sqrt{n\delta^{n-1}(1-\delta)}\cdot\ket W\ket0+\sket{W^\bot}\ket1.
\end{equation}
Observe that $h(\delta)=\sqrt{n\delta^{n-1}(1-\delta)}$ satisfies $h(0)=0$ and
$$
h\left(1-\frac1n\right)=\left(1-\frac1n\right)^{(n-1)/2}\ge\frac1{\sqrt e}\ge\frac12.
$$
Hence by continuity, we can pick $\delta$ such that $h(\delta)=1/2$ and \Cref{eq:thm:exact_w_1} becomes
$$
\frac12\ket W\ket0+\sket{W^\bot}\ket1=\sin\frac\pi6\ket W\ket0+\sket{W^\bot}\ket1.
$$
By \Cref{thm:flag_grover} with the above state preparation procedure, we obtain the desired circuit $U$ for exactly preparing $\ket W$.
\end{proof}

We now show how to weakly compute any fixed Hamming weight using $\ket W$ above.
We start with Hamming weight exactly $n/2$.

\begin{lemma}\label{lem:w_test}
There exist a constant-depth polynomial-size $\QAC$ circuit $C$ such that for every $x \in \{0, 1\}^n$ and $b\in\{0,1\}$, we have $C\ket{x,0^{n+a},b}=\ket x\ket{\psi_{x,b}}$ and
$$
\ket{\psi_{x,b}}=\frac{n-2|x|}n\ket{0^{n+a}}\ket{b}+\sqrt{1-\left(\frac{n-2|x|}n\right)^2}\ket{\phi_x}\ket{b\oplus1},
$$
where $a=\poly(n)$ and $\ket{\phi_x}$ is a normalized state orthogonal to $\ket{0^{n+a}}$ and depending on $x$.
\end{lemma}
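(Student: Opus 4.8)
The plan is to propagate the computational-basis input through the four-stage $W$-test circuit of \Cref{fig:w-test} and read off the resulting state. Organize the qubits into three registers: the input register holding $\ket x$ ($n$ qubits), a work register initialized to $\ket{0^{n+a}}$ on which the $\ket W$-preparation unitary $U$ of \Cref{thm:exact_w_1} acts, and the flag qubit $\ket b$. The circuit applies, in order: (i) $U$ to the work register; (ii) for each $j\in[n]$, a controlled-$Z$ with control $x_j$ and target the $j$-th work qubit; (iii) $U^{\dag}$ to the work register; (iv) the quantum $\OR$ gate with all $n+a$ work qubits as inputs and the flag as target. Each stage is constant-depth and polynomial-size in $\QAC$ — stages (i) and (iii) by \Cref{thm:exact_w_1}, and stages (ii), (iv) being standard (the $\OR$ gate is recalled in \Cref{sec:prelim}) — so $C$ has the required resources; moreover the controlled-$Z$ layer merely attaches a phase to the basis state $\ket x$, so the input register stays $\ket x$ and $C\ket{x,0^{n+a},b}=\ket x\ket{\psi_{x,b}}$.

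For the form of $\ket{\psi_{x,b}}$ I would track the work register. After stage (i) it is $\ket W\ket{0^a}$; after stage (ii) it becomes $\ket{W_x}\ket{0^a}$ where $\ket{W_x}:=\frac1{\sqrt n}\sum_{i\in[n]}(-1)^{x_i}\ket{e_i}$, since a $Z$ on the $j$-th coordinate multiplies $\ket{e_i}$ by $(-1)^{x_j (e_i)_j}$. The one computation that matters is the overlap $\langle W|W_x\rangle=\frac1n\sum_{i\in[n]}(-1)^{x_i}=\frac{n-2|x|}n=:\alpha_x\in[-1,1]$; since $\ket{W_x}$ is a unit vector I can write $\ket{W_x}=\alpha_x\ket W+\beta_x\ket{\widetilde W_x}$ with $\beta_x:=\sqrt{1-\alpha_x^2}\ge 0$ and $\ket{\widetilde W_x}$ a unit vector orthogonal to $\ket W$. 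Now stage (iii) applies $U^{\dag}$; using $U^{\dag}(\ket W\ket{0^a})=\ket{0^{n+a}}$ the work register becomes $\alpha_x\ket{0^{n+a}}+\beta_x\ket{\phi_x}$, where $\ket{\phi_x}:=U^{\dag}(\ket{\widetilde W_x}\ket{0^a})$ is a unit vector, and it is orthogonal to $\ket{0^{n+a}}$ because $\langle 0^{n+a}|U^{\dag}|\widetilde W_x,0^a\rangle=\langle W,0^a|\widetilde W_x,0^a\rangle=\langle W|\widetilde W_x\rangle=0$. Finally stage (iv): the $\OR$ gate fixes $\ket{0^{n+a}}\ket b$ and sends every other basis state $\ket y\ket b$ to $\ket y\ket{b\oplus 1}$; since $\ket{\phi_x}$ has zero amplitude on $\ket{0^{n+a}}$ it goes entirely to $\ket{\phi_x}\ket{b\oplus 1}$. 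Hence $\ket{\psi_{x,b}}=\alpha_x\ket{0^{n+a}}\ket b+\beta_x\ket{\phi_x}\ket{b\oplus 1}$, which is exactly the claimed expression, with $\ket{\phi_x}$ normalized, orthogonal to $\ket{0^{n+a}}$, and depending on $x$.

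I do not expect a genuine obstacle here — the argument is a direct state-tracking calculation. The single point that wants care is stage (iv): the $\OR$ must be taken over \emph{all} $n+a$ work qubits (not just the first $n$ carrying the $\ket W$ component), so that it is the orthogonality of $\ket{\phi_x}$ to $\ket{0^{n+a}}$ — rather than to $\ket{0^n}$ on the first sub-register, which need not hold — that makes the flag flip cleanly on the orthogonal branch. The remaining points (decoupling of the input register, taking $\beta_x$ nonnegative, and checking that each stage lies in constant-depth polynomial-size $\QAC$) are routine bookkeeping.
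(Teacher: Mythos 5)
Your proposal is correct and follows essentially the same route as the paper's proof: propagate the basis input through $U$, the controlled-$Z$ layer, $U^{\dagger}$, and the $\OR$, compute $\braket{W}{W_x}=\frac{n-2|x|}{n}$, decompose $\ket{W_x}$ into its $\ket W$ and orthogonal components, and use that $U^{\dagger}(\ket{W_x^{\bot}}\ket{0^a})$ is orthogonal to $\ket{0^{n+a}}$ so the flag flips exactly on that branch. Your explicit remark that the $\OR$ must range over all $n+a$ work qubits is the same (implicit) convention the paper's proof relies on, so there is no gap.
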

\begin{proof}
Let $U$ be the circuit which prepares $\ket W$ from \Cref{thm:exact_w} using $a$ ancilla qubits.
Consider the circuit $C$ in \Cref{fig:w-test}.
Observe that after the first two layers of $C$, the state is
\begin{align*}
 \ket{x}\ket{0^a}\ket{W_x}\ket b
\quad\text{where }\ket{W_x}=\frac{1}{\sqrt{n}}\sum_{i = 1}^n (-1)^{x_i}\ket{e_i}.
\end{align*}
Note that $\braket{W}{W_x}=\frac{n-2|x|}n$. Hence we express
\begin{equation}\label{eq:lem:w_test_1}
\ket{W_x}=\frac{n-2|x|}n\ket W+\sqrt{1-\left(\frac{n-2|x|}n\right)^2}\sket{W_x^\bot},
\end{equation}
where $\sket{W_x^\bot}$ is a normalized state depending on $x$ and is orthogonal to $\ket W$.
We now analyze the evolution of $\ket x\ket{0^a}\ket W\ket b$ and $\ket x\ket{0^a}\sket{W_x^\bot}\ket b$ separately.

For $\ket x\ket{0^a}\ket W\ket b$, the subsequent layers of $C$ have the following effect:
\begin{align*}
\ket x \ket{0^a} \ket W\ket b 
\xrightarrow{U^\dag}\ket x \ket{0^a} \ket{0^n} \ket b 
\xrightarrow{\OR}\ket x \ket{0^a} \ket{0^n} \ket b.
\end{align*}

For $\ket x\ket{0^a}\sket{W_x^\bot}\ket b$, since $\ket{0^a}\ket W$ is orthogonal to $\ket{0^a}\ket{W^\bot}$, $\ket{\phi_x}:=U^\dag\ket{0^a}\sket{W_x^\bot}$ is also orthogonal to $U^\dag\ket{0^a}\ket{W}=\ket{0^{n+a}}$. Hence the $\OR$ operation always flips $\ket b$, giving the overall state $\ket x\ket{\phi_x}\ket{b\oplus1}$.

By linearity and \Cref{eq:lem:w_test_1}, we combine the above two evolutions and obtain
$$
\ket{\psi_{x,b}}=\frac{n-2|x|}n\ket{0^{n+a}}\ket{b}+\sqrt{1-\left(\frac{n-2|x|}n\right)^2}\ket{\phi_x}\ket{b\oplus1}
$$
as claimed.
\end{proof}

\Cref{lem:w_test} shows how to weakly decide $\EX_{n/2}$: after measuring the final register (initialized as $b=0$), we always output $1$ if $\EX_{n/2}(x)=1$; and output $0$ with probability at least $1/n^2$ if $\EX_{n/2}(x)=0$.
By padding with sufficiently many $1$'s or $0$'s, we can modify this construction to detect any other fixed Hamming weight in a similar manner.

\begin{corollary}\label{cor:general_w_test}
For every integer $0\le k\le n$, the function $\EX_k$ can be weakly decided in $\QAC^0$. 

That is, there is a constant-depth polynomial-size $\QAC$ circuit that always outputs $1$ if $\EX_k(x)=1$; and outputs $0$ with probability at least $1/n^2$ if $\EX_k(x)=0$.
\end{corollary}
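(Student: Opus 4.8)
The plan is to reduce $\EX_k$ for a general $k$ to the Hamming-weight-$n/2$ case already handled by \Cref{lem:w_test}, via a padding argument. Given an input $x\in\{0,1\}^n$, I would form an extended string $x'$ of length $n'$ on which the two-sided gap test of \Cref{lem:w_test} can be applied, where $x'$ consists of the original $n$ bits together with a fixed block of constant bits (some $0$'s and some $1$'s) chosen so that $|x'| = n'/2$ precisely when $|x| = k$. Concretely, if $k \le n/2$, append $n - 2k$ ones and no zeros, giving $n' = 2n - 2k$ and $|x'| = |x| + (n-2k)$, which equals $n' /2 = n-k$ iff $|x| = k$; symmetrically, if $k > n/2$, append $2k - n$ zeros, giving $n' = 2k$ and $|x'| = |x|$, which equals $n'/2 = k$ iff $|x| = k$. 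In either case $n' \le 2n$, the padding bits are just freshly prepared ancillas set to constants by single-qubit gates, and no copies of the input are needed.

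Next I would invoke \Cref{lem:w_test} on the padded string $x'$ of length $n'$. Since the padding bits are hardwired constants, preparing the input register $\ket{x'}$ from $\ket{x}\ket{0^{\ast}}$ is a single layer of $\NOT$ gates (on the ancillas meant to be $1$), and the circuit $C$ from \Cref{lem:w_test} is constant-depth and polynomial-size in $n'$, hence in $n$. Measuring the final flag qubit $b$ (initialized to $0$): by the stated form of $\ket{\psi_{x',0}}$, if $|x'| = n'/2$ then the flag is $\ket{1}$ with certainty, so we always output $1$ when $\EX_k(x)=1$; if $|x'|\ne n'/2$ then the flag equals $\ket{0}$ with probability $\bigl(\tfrac{n'-2|x'|}{n'}\bigr)^2 \ge (1/n')^2 \ge 1/(2n)^2$, since $|n' - 2|x'||$ is a nonzero integer. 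This gives one-sided error with rejection probability $\Omega(1/n^2)$ when $\EX_k(x) = 0$, matching the claimed bound (up to the harmless constant, which can be absorbed by running a constant number of parallel independent copies on the freshly padded ancillas and $\OR$-ing — though strictly even the bare $1/(2n)^2$ already suffices after a trivial rescaling of what ``$n$'' means, so I would just state the $1/n^2$-type bound and note the constant is inessential).

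The main thing to be careful about — really the only subtlety — is the boundary/parity bookkeeping in the padding: one must check that $n'$ is even in both cases (it is: $2n-2k$ and $2k$ are both even) and that the target weight $n'/2$ is an integer strictly inside $[0,n']$, and that the ``$n$'' appearing in the $1/n^2$ soundness of \Cref{lem:w_test} gets replaced by $n' = \Theta(n)$ so the final bound is still $\Omega(1/n^2)$. There is no real obstacle here; the content is entirely in \Cref{lem:w_test} and \Cref{thm:exact_w_1}, and this corollary is a routine reduction. I would write it in two or three sentences inside the proof environment.
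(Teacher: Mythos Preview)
Your proposal is correct and follows essentially the same padding-reduction approach as the paper. The paper's version is marginally cleaner: it pads \emph{uniformly} with the fixed string $1^{n-k}0^k$ to length $m=2n$ (no case split), and since $m$ is even the quantity $m-2|y|$ is a nonzero \emph{even} integer, giving soundness $\ge(2/m)^2=1/n^2$ on the nose---so the constant-factor caveat you flagged actually disappears.
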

\begin{proof}
Given an $n$-bit input $x$, we pad it with an $n$-bit fixed string $1^{n-k}0^k$ as $y=x\circ1^{n-k}0^k$.
Let $m=2n$.
Then $\EX_k(x)=\EX_{m/2}(y)$, for which we can use \Cref{lem:w_test}.
\end{proof}

\Cref{cor:general_w_test} allows $\QAC^0$ to compute arbitrary symmetric function, assuming we have classical copies of the input strings. The consequence of this is discussed in the following \Cref{sec:sep_qac0_lightweight}.

\subsection{Separating \texorpdfstring{$\QAC^0$}{QAC0} from Classical Classes}\label{sec:sep_qac0_lightweight}

The one-sided error in \Cref{cor:general_w_test} is a very important feature that allows us to boost the success probability using $\AND$ instead of majority. This is crucial as it is unclear whether majority is in $\QAC^0$, but $\AND$ can be used by definition.

\begin{theorem}\label{thm:qac0_copy_sym}
Let $f\colon\{0,1\}^n\to\{0,1\}$ be an arbitrary symmetric function.
For every integer $r\ge1$, there exists a constant-depth polynomial-size $\QAC$ circuit such that, given $m=r\cdot(n+1)^3$ identical copies of $n$-bit string $x$, it computes $f(x)$ with completeness $1$ and soundness $2^{-\Omega(r)}$.

That is, the circuit, on input $\sket{\underbrace{x,x,\ldots,x}_{m\text{ copies}}}$, always outputs $1$ if $f(x)=1$; and outputs $0$ with probability at least $1-2^{-\Omega(r)}$ if $f(x)=0$.
\end{theorem}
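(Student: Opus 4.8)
The plan is to realize $f$ as an $\OR$ of exact-Hamming-weight indicators, weakly decide each indicator with the $W$ test, boost the resulting one-sided error by taking an $\AND$ over many fresh input copies, and finally $\OR$ the boosted answers. Write $f(x)=v_{|x|}$ and set $S=\{k\in\{0,\dots,n\}:v_k=1\}$, so that $f(x)=\bigvee_{k\in S}\EX_k(x)$ and $|S|\le n+1$. I would split the $m=r(n+1)^3$ given copies of $x$ into at most $n+1$ groups of size $t:=r(n+1)^2$, one group per $k\in S$, discarding any spare copies. Within the group for $k$, run the constant-depth polynomial-size $W$ test for $\EX_k$ from \Cref{cor:general_w_test} on each of the $t$ copies; these instances act on pairwise disjoint registers, so they fit in $O(1)$ depth. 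Then a single generalized Toffoli $\AND$s the $t$ output qubits of this group onto a fresh qubit $g_k$; the $\le n+1$ such Toffolis (over the various $k$) again act on disjoint qubits and form one layer. Finally one $U_\OR$ collects $(g_k)_{k\in S}$ into the designated output qubit (with a $\SWAP$ moving it into place). The whole circuit has constant depth, and its size is $m\cdot\poly(n)+\poly(n)$, which is polynomial in the $mn$-bit input.

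For the error analysis I would invoke the principle of deferred measurement, so that it suffices to treat the $t\cdot|S|$ $W$-test output qubits as classical bits $\{w_{k,j}\}$ obtained by measurement; since the tests live on disjoint registers these bits are mutually independent, and the measured output bit is exactly $\bigvee_{k\in S}\bigwedge_{j\in[t]}w_{k,j}$. By \Cref{lem:w_test} (the exact amplitude statement underlying \Cref{cor:general_w_test}), a single $W$ test for $\EX_k$ on input $x$ outputs $1$ with probability $1$ when $|x|=k$, and with probability exactly $1-(k-|x|)^2/n^2$ when $|x|\neq k$. Completeness $1$ follows instantly: if $f(x)=1$ then $k^\star:=|x|\in S$, every test in group $k^\star$ outputs $1$ deterministically, so $g_{k^\star}=1$ and the output is $1$ with certainty. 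For soundness, suppose $f(x)=0$, so $k':=|x|\notin S$; then for each $k\in S$ we have $k\neq k'$ and $\Pr[g_k=1]=(1-(k-k')^2/n^2)^t\le e^{-t(k-k')^2/n^2}\le e^{-r(k-k')^2}\le e^{-r|k-k'|}$, using $t\ge rn^2$ and $|k-k'|\ge 1$. Summing over $k\in S$, and using that at most two elements of $S$ lie at any fixed distance $d\ge 1$ from $k'$, we get $\Pr[\text{output}=1]\le\sum_{k\in S}e^{-r|k-k'|}\le 2\sum_{d\ge 1}e^{-rd}=\tfrac{2e^{-r}}{1-e^{-r}}\le 4e^{-r}=2^{-\Omega(r)}$, which is the claimed soundness.

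The conceptually important point is the one already flagged before the statement: because the $W$ test has one-sided error, amplification can be done with $\AND$, i.e. a generalized Toffoli, which is native to $\QAC$, rather than with $\Maj$, which is precisely the operation $\QAC^0$ is not known to support. The one place that needs a little care is the soundness bookkeeping: a crude union bound using only the $1/n^2$ lower bound on the rejection probability loses a factor of $n+1$, which is not absorbed by $2^{-\Omega(r)}$; using instead the exact rejection probability $(k-|x|)^2/n^2$ from \Cref{lem:w_test} makes the contributions of far-away Hamming weights decay geometrically, so the sum collapses to $O(e^{-r})$ with no dimensional factor. I do not expect a serious obstacle beyond keeping the $W$-test instances on disjoint registers (so that deferred measurement really does produce independent bits) and checking that the copy count $|S|\cdot t\le r(n+1)^3$ stays within budget.
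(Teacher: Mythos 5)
Your proposal is correct and follows essentially the same route as the paper's proof: write $f=\bigvee_{k\in S}\EX_k$, amplify each weak $W$ test (\Cref{cor:general_w_test}) by taking an $\AND$ over a block of fresh input copies, and combine the blocks with a single $\OR$. Your soundness bookkeeping via the exact rejection probability $(k-|x|)^2/n^2$ and the geometric sum over distances is in fact slightly sharper than the paper's argument, which uses only the uniform $1/n^2$ bound per trial and absorbs the union-bound factor over $|S|$ into the $2^{-\Omega(r)}$ notation.
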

\begin{proof}
Since $f$ is symmetric, we can express it as $f(x)=\bigvee_{k\in S}\EX_k(x)$ for some $S\subseteq\{0,1,\ldots,n\}$.
For each $\EX_k(x)$, we take the $\AND$ of $r\cdot n^2$ independent trials of \Cref{cor:general_w_test}. This requires $r\cdot n^2$ copies of $x$. Moreover, it has completeness $1$ and soundness $(1-n^{-2})^{r\cdot n^2}=2^{-\Omega(r)}$ to compute $\EX_k(x)$.
Then we take the $\OR$ of the above values to compute $f(x)$, which gives completeness $1$ and soundness $r\cdot2^{-\Omega(r)}=2^{-\Omega(r)}$.
The total number of copies of $x$ we need is $|S|\cdot r\cdot n^2\le r\cdot(n+1)^3$.
\end{proof}

Given classical copies of the input, \Cref{thm:qac0_copy_sym} allows us to replace any symmetric gate in a classical circuit with a $\QAC$ circuit of polynomial size. This classical class is exactly $\TC^0$ and the argument is similar to the standard way of converting circuits to formulas. The following \Cref{thm:tc0_in_qac0_copy} formalizes this and proves \Cref{thm:poly_copies}.

\begin{theorem}\label{thm:tc0_in_qac0_copy}
Assume $f\colon\{0,1\}^n\to\{0,1\}$ is exactly computed by a constant-depth polynomial-size classical circuit of symmetric gates.
Then there exists a constant-depth polynomial-size $\QAC$ circuit such that, given $m=\poly(n)$ identical copies of $n$-bit string $x$, it computes $f(x)$ with error $2^{-\poly(n)}$.

That is, the circuit, taking input $\sket{\underbrace{x,x,\ldots,x}_{m\text{ copies}}}$, outputs $f(x)$ with probability at least $1-2^{-\poly(n)}$.
\end{theorem}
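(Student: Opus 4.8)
The plan is to realize the classical symmetric-gate circuit gate by gate inside $\QAC^0$, using \Cref{thm:qac0_copy_sym} as a black-box replacement for each symmetric gate, and carrying enough classical copies of every intermediate wire value to feed the gadgets at the next level up. As a first step I would, without loss of generality, assume the classical circuit is a \emph{formula}: unfolding the underlying DAG into a tree multiplies the size by at most $s^{O(d)}$, which is still $\poly(n)$ because the depth $d$ is a fixed constant, and it leaves the depth unchanged. In a formula every gate has a unique parent, which keeps the accounting of how many copies of each wire are needed completely transparent.

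Next I would process the formula from its leaves to the root, maintaining at each wire $v$ a register holding $N_v$ identical computational-basis copies of the value that $v$ carries on input $x$. At a leaf labelled by a literal $x_i$ (respectively $\neg x_i$) the copies come straight from the $m$ supplied copies of $x$ (respectively from applying an $X$ gate to copies of $x_i$). At an internal gate $v$ computing a symmetric function on its $t_v$ inputs $u_1,\dots,u_{t_v}$, I would run $N_v$ independent instances of the circuit from \Cref{thm:qac0_copy_sym} with a parameter $r$ to be fixed, each instance consuming $r(t_v+1)^3$ fresh copies of each $u_j$ and producing one copy of the gate's output; this forces $N_{u_j} = N_v\cdot r(t_v+1)^3$. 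Starting from $N_{\mathrm{root}}=1$ and unwinding, every wire at tree-depth $j$ below the root has $N_v \le (r\cdot\poly(n))^j \le (r\cdot\poly(n))^d = \poly(n)$; summing over the $\poly(n)$ leaves shows $m=\poly(n)$ supplied copies of $x$ suffice, and summing over the $\poly(n)$ gates bounds the total number $N=\poly(n)$ of invocations of \Cref{thm:qac0_copy_sym}. Since each invocation is a constant-depth $\QAC$ circuit and the invocations fall into $d+O(1)$ sequential layers, the overall circuit is constant-depth and polynomial-size.

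For correctness I would list the $N$ gadget invocations in topological order and bound the probability that some invocation is \emph{incorrect}, where an invocation is correct if it outputs the true value (on input $x$) of the wire it computes. Conditioned on all earlier invocations being correct, the registers fed into a given invocation hold genuine identical copies of the correct input tuple, so \Cref{thm:qac0_copy_sym}, whose completeness is $1$ and soundness $2^{-\Omega(r)}$, guarantees that invocation errs with probability at most $2^{-\Omega(r)}$; multiplying these conditional bounds along the topological order gives total error at most $N\cdot 2^{-\Omega(r)}$, which is $2^{-\poly(n)}$ once $r$ is taken to be a sufficiently large polynomial in $n$ (e.g.\ $r=n$). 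Finally, since $\TC^0$ is exactly the class of languages decided by constant-depth polynomial-size circuits of symmetric (equivalently, threshold) gates, and an $\NC^0$ circuit can produce $\poly(n)$ copies of its input using fan-out alone, this yields $\TC^0 \subseteq \BQAC^0\circ\NC^0$ and hence \Cref{thm:poly_copies}.

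The step I expect to require the most care is this last correctness argument under quantum composition: I must be sure that conditioning on the earlier gadgets being correct genuinely collapses the input of the next gadget into the pure ``$m$ identical copies'' form demanded by \Cref{thm:qac0_copy_sym} --- which it does, since correctness of every feeding gadget means all of those copy registers carry the same basis value --- and that the $2^{-\Omega(r)}$ conditional error bound still applies, which follows after (deferred, or intermediate) measurement of the feeding registers since the gadget then acts on a classical mixture of valid inputs. The only other point to monitor is that the multiplicative blow-up $(r\cdot\poly(n))^d$ in the copy count and in the gadget count remains polynomial, which holds precisely because $d=O(1)$.
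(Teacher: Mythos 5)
Your proposal is correct and follows essentially the same route as the paper: replace each symmetric gate by the gadget of \Cref{thm:qac0_copy_sym}, recursively expand the required copies of each gate's fan-in values (which stays $\poly(n)$ because the classical circuit has constant depth and polynomial size), and union-bound the per-gate error $2^{-\Omega(r)}$ over the polynomially many invocations. Your extra care about the formula unfolding and about conditioning/deferred measurement when composing gadgets is a more explicit rendering of the step the paper dispatches with ``by union bound,'' not a different argument.
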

\begin{proof}
By \Cref{thm:qac0_copy_sym}, each symmetric gate of $f$ can be computed by a constant-depth polynomial-size $\QAC$ circuit with error $2^{-\poly(n)}$, given $\poly(n)$ identical copies of its fan-in values.
Hence we can convert the classical circuit of $f$ in a top-down fashion: each time we replace the current symmetric gate by a small $\QAC$ circuit with error $2^{-\poly(n)}$ that demands $\poly(n)$ identical copies of its fan-in values; then we recursively expand the copies of its fan-in gates in a same way.
Since $f$ has constant depth and polynomial size, in the end we just need $\poly(n)$ copies of the input string. In addition, the error is $\poly(n)\cdot2^{-\poly(n)}=2^{-\poly(n)}$ by union bound.
\end{proof}

\Cref{thm:tc0_in_qac0_copy} already allows us to obtain \emph{partial} functions that separate $\QAC^0$ from $\AC^0$ or even $\AC^0[p]$. This is done by choosing $f$ to be the majority function and defining a lifted version that takes polynomially many identical copies of the input string. Fortunately, \emph{checking} that a given string has this ``blocky'' form can be easily done with generalized Toffoli gates. This observation leads to a \emph{total} function which witnesses a separation. We make this formal below.

Let $f\colon\{0,1\}^n\to\{0,1\}$.
For each integer $m\ge1$, define $f^{\uparrow m}\colon(\{0,1\}^n)^m\to\{0,1\}$ by
\begin{equation}\label{eq:f_lift}
f^{\uparrow k}(x^{(1)},\ldots,x^{(m)})=\begin{cases}
f(x^{(1)}) & x^{(1)}=\cdots=x^{(m)},\\
0 & \text{otherwise.}
\end{cases}
\end{equation}
Note that this is the same as the lifting gadget $\Copy$ introduced in \Cref{sec:overview}. We use $\uparrow k$ here to highlight the number of input copies.

\begin{lemma}\label{lem:copy_lift}
Let $f\colon\{0,1\}^n\to\{0,1\}$ be a Boolean function that can be computed by a constant-depth polynomial-size $\QAC$ circuit with $m\le\poly(n)$ identical input copies and error $\epsilon$.
Then $f^{\uparrow m}\colon(\{0,1\}^n)^m\to\{0,1\}$ defined in \Cref{eq:f_lift} can be computed by a constant-depth polynomial-size $\QAC$ circuit with error $\epsilon$.
\end{lemma}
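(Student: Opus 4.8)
The plan is to take the given $\QAC$ circuit $\mathcal{C}$ for $f$ (which expects $m$ identical copies of an $n$-bit string), run it verbatim on $x^{(1)}, \ldots, x^{(m)}$, and then AND its output bit with a single ``consistency'' bit that checks whether all $m$ input blocks are actually equal. First I would build the consistency check: for each coordinate $j \in [n]$, the $m$ bits $x^{(1)}_j, \ldots, x^{(m)}_j$ are all equal iff their AND equals their OR, equivalently iff $\AND(x^{(1)}_j, \ldots, x^{(m)}_j) \oplus \OR(x^{(1)}_j, \ldots, x^{(m)}_j) = 0$. Each such test is one generalized Toffoli gate plus one $U_{\OR}$ gate (recall $U_{\OR} \in \QAC^0$ by the preliminaries) writing into a fresh ancilla, and all $n$ coordinate tests run in parallel since they act on disjoint qubit sets. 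A final $\OR$ over the $n$ disagreement-indicator ancillas (one more $U_{\OR}$ gate) produces a bit $d$ with $d = 0$ iff $x^{(1)} = \cdots = x^{(m)}$. This whole gadget is depth $O(1)$ and size $\poly(n)$.

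Next I would compose: run $\mathcal{C}$ on the $m$ blocks to get, in its designated output qubit, a bit equal to $f(x^{(1)})$ with probability $\ge 1-\epsilon$; then apply a single Toffoli-type gate (or a controlled gate) that writes $\overline{d} \wedge (\text{output of }\mathcal{C})$ into the final answer qubit. When the inputs are consistent, $\overline d = 1$ and the answer equals $\mathcal{C}$'s output, which equals $f(x^{(1)}) = f^{\uparrow m}(x^{(1)},\ldots,x^{(m)})$ except with probability $\epsilon$; when the inputs are inconsistent, $\overline d = 0$, so the answer is deterministically $0 = f^{\uparrow m}(x^{(1)},\ldots,x^{(m)})$ with zero error. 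Hence the overall error is at most $\epsilon$ in every case. Depth and size remain $O(1)$ and $\poly(n)$ respectively since we only prepended and appended constant-depth $\poly(n)$-size gadgets to $\mathcal{C}$.

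One small technical point to handle cleanly is that $\mathcal{C}$ as given is a circuit that consumes its input copies — it may entangle or disturb the $x^{(j)}$ qubits — so the consistency gadget should either act on the input registers before $\mathcal{C}$ runs (computing $d$ into an ancilla first, while the inputs are still in computational-basis product form), or we should note that since the overall input to $f^{\uparrow m}$ is a classical basis state we may freely fan out copies of it with constant-depth generalized Toffoli gates and feed one fresh copy to the consistency gadget and another to $\mathcal{C}$. I would take the former route: compute $d$ first, then run $\mathcal{C}$, then combine. There is no real obstacle here — the lemma is essentially a packaging statement — but the one place to be careful is making sure the consistency check is genuinely $\QAC^0$, which it is precisely because equality-of-a-column reduces to $\AND$ and $\OR$ on many bits, both of which are native (or trivially constructible) in $\QAC^0$, and because the $n$ columns are tested in parallel rather than sequentially.
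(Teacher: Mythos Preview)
Your proof is correct and follows essentially the same approach as the paper: compute a per-coordinate equality check (your $\AND \oplus \OR = 0$ is exactly the paper's $\EX_0 \lor \EX_m$), run the given circuit $\mathcal{C}$, and take the $\AND$ of the two results. You are in fact more careful than the paper about the sequencing issue (computing $d$ before running $\mathcal{C}$ so as not to disturb the inputs), which the paper glosses over with the phrase ``in parallel.''
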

\begin{proof}
We identify the $nm$-bit input as $\ket{x^{(1)}}\cdots\ket{x^{(m)}}$ where each $x^{(i)}$ has $n$ bits.
We can check the equality of $x^{(1)},\ldots,x^{(m)}$ by checking $x^{(1)}_j=\cdots=x^{(m)}_j$ for every coordinate $j\in[n]$. This is achieved by computing $\EX_0\lor\EX_m$ in parallel for each coordinate, then combining the coordinate-checks with $\AND$, which has constant depth and size $\poly(m)=\poly(n)$.

In parallel with the above check, we assume the inputs are equal and use the $\QAC$ circuit (given $m$ identical input copies) of $f$ to compute its value.

The final outcome is an $\AND$ of the two computations above. Since the first equality check does not make error, the error only comes from the second part which is exactly $\epsilon$ by assumption.
\end{proof}

Then we have the following corollary.

\begin{corollary}\label{cor:tc_0_copy_in_qac0}
Assume $f\colon\{0,1\}^n\to\{0,1\}$ is exactly computed by a constant-depth polynomial-size classical circuit of symmetric gates.
Then there exists some $m\le\poly(n)$ such that $f^{\uparrow m}$ can be computed by a constant-depth polynomial-size $\QAC$ circuit with error $2^{-\poly(n)}$.
\end{corollary}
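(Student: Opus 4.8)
The plan is to combine the two ingredients already assembled: \Cref{thm:tc0_in_qac0_copy}, which shows that any function computed by a constant-depth polynomial-size circuit of symmetric gates can be computed in $\QAC^0$ given $\poly(n)$ identical copies of the input with error $2^{-\poly(n)}$, and \Cref{lem:copy_lift}, which lifts such a copy-dependent circuit to a genuine $\QAC^0$ circuit for $f^{\uparrow m}$ at no cost to the error. So the proof is essentially a two-line composition.

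Concretely, first I would invoke \Cref{thm:tc0_in_qac0_copy} on the given $f$: since $f$ is exactly computed by a constant-depth polynomial-size classical circuit of symmetric gates, there is some $m = \poly(n)$ and a constant-depth polynomial-size $\QAC$ circuit that, on input $\sket{x,x,\ldots,x}$ ($m$ copies), outputs $f(x)$ with error $2^{-\poly(n)}$. Then I would feed exactly this circuit, with this same value of $m$, into \Cref{lem:copy_lift}: the hypothesis of the lemma (that $f$ is computable in constant-depth polynomial-size $\QAC$ with $m \le \poly(n)$ identical input copies and error $\epsilon = 2^{-\poly(n)}$) is precisely what we just established, so the lemma yields a constant-depth polynomial-size $\QAC$ circuit computing $f^{\uparrow m}\colon(\{0,1\}^n)^m\to\{0,1\}$ with the same error $2^{-\poly(n)}$. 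That is the claimed statement.

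There is no real obstacle here — the two cited results were engineered precisely so that this corollary falls out immediately; the only thing to be careful about is bookkeeping, namely that the $m$ produced by \Cref{thm:tc0_in_qac0_copy} is the same $m$ appearing in $f^{\uparrow m}$, and that "constant-depth polynomial-size" is preserved because \Cref{lem:copy_lift} only adds a constant-depth, $\poly(n)$-size equality check (computing $\EX_0 \lor \EX_m$ per coordinate, then combining with $\AND$) in parallel with the copy-dependent circuit for $f$. I would also remark, as the surrounding text does, that taking $f$ to be majority (or any $\TC^0$-complete function) makes $f^{\uparrow m}$ a total function witnessing $\BQAC^0 \not\subset \AC^0[p]$, since $\AC^0[p]$ circuits may freely fan out the input to check equality and then would need to compute $f$ itself, which is impossible by \cite{razborov,smolensky} — but that is the content of the next theorem rather than this corollary, so the proof here stops at the composition.

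\begin{proof}[Proof of \Cref{cor:tc_0_copy_in_qac0}]
By \Cref{thm:tc0_in_qac0_copy}, there exists $m\le\poly(n)$ such that $f$ can be computed by a constant-depth polynomial-size $\QAC$ circuit with $m$ identical input copies and error $2^{-\poly(n)}$. Applying \Cref{lem:copy_lift} with this $f$, this $m$, and $\epsilon=2^{-\poly(n)}$, we conclude that $f^{\uparrow m}$ can be computed by a constant-depth polynomial-size $\QAC$ circuit with error $2^{-\poly(n)}$.
\end{proof}
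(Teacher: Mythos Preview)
The proposal is correct and takes essentially the same approach as the paper, which simply states that the corollary follows by combining \Cref{thm:tc0_in_qac0_copy} and \Cref{lem:copy_lift}. Your version is just a slightly more detailed spelling-out of that one-line composition.
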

\begin{proof}
We simply combine \Cref{thm:tc0_in_qac0_copy} and \Cref{lem:copy_lift}.
\end{proof}
By known separations in classical complexity theory, we obtain the following theorem that separates $\QAC^0$ from $\AC^0[p]\supsetneq\AC^0$, which proves \Cref{thm:intro_sep_thm}.

\begin{theorem}\label{thm:qac0_sep}
$\BQAC^0 \not \subset \AC^0[p]$.
Moreover if $\ACC^0 \not = \TC^0$, then $\BQAC^0 \not \subset \ACC^0$.
\end{theorem}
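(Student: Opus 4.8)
The plan is to derive both statements of \Cref{thm:qac0_sep} directly from \Cref{cor:tc_0_copy_in_qac0} together with the known classical lower bounds. For the first claim, I would take $f = \Maj_n$, the majority function on $n$ bits, which is computed by a single symmetric gate and hence trivially by a constant-depth polynomial-size circuit of symmetric gates. By \Cref{cor:tc_0_copy_in_qac0}, there is some $m \le \poly(n)$ such that the lifted total function $\Maj_n^{\uparrow m}$ is in $\BQAC^0$ (indeed with error $2^{-\poly(n)}$, matching the completeness/soundness guarantees stated in \Cref{thm:intro_sep_thm}). It then remains to argue $\Maj_n^{\uparrow m} \notin \AC^0[p]$ for every prime $p$.

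The key step is this classical lower bound, and I expect it to be the only place requiring real care. The point is that $\AC^0[p]$ circuits enjoy unbounded fanout, so from a single $n$-bit input $x$ they can freely fabricate the $m$ identical copies $(x,\ldots,x)$ with a constant-depth, polynomial-size prefix. Hence a hypothetical $\poly$-size $\AC^0[p]$ circuit for $\Maj_n^{\uparrow m}$ would yield a $\poly$-size $\AC^0[p]$ circuit for $x \mapsto \Maj_n^{\uparrow m}(x,\ldots,x) = \Maj_n(x)$, contradicting the Razborov–Smolensky theorem \cite{razborov,smolensky} that $\Maj_n \notin \AC^0[p]$. (One can equally run this through $\EX_{n/2}$ as in the technical overview; either symmetric hard function works.) This gives $\BQAC^0 \not\subset \AC^0[p]$, and since $\AC^0 \subsetneq \AC^0[p]$ \cite{ajtai_parity,fss_ac0,hastad_thesis} it also recovers $\BQAC^0 \not\subset \AC^0$.

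For the second claim, I would replace $\Maj_n$ by any total Boolean function $g_n$ witnessing $\ACC^0 \ne \TC^0$ — i.e., $g_n \in \TC^0 \setminus \ACC^0$. Since $\TC^0$ is exactly the class of languages decided by constant-depth polynomial-size circuits of symmetric (threshold) gates, \Cref{cor:tc_0_copy_in_qac0} applies to $g_n$ and places $g_n^{\uparrow m} \in \BQAC^0$ for some $m \le \poly(n)$. The same unbounded-fanout copying argument as above shows that an $\ACC^0$ circuit for $g_n^{\uparrow m}$ would give an $\ACC^0$ circuit for $g_n$, a contradiction; note that $\ACC^0 = \bigcup_{m \ge 2} \AC^0[m]$ is closed under composition with the constant-depth fanout prefix. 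Hence $\BQAC^0 \not\subset \ACC^0$. The main obstacle is simply making the "lift then un-lift via fanout" reduction airtight — in particular checking that restricting $g_n^{\uparrow m}$ to the diagonal is a legitimate $\AC^0[p]$/$\ACC^0$ reduction (it is, since it only wires each input bit to $m$ many gates), and that the $\poly(n)$ bound on $m$ from \Cref{cor:tc_0_copy_in_qac0} keeps the blown-up input length polynomial so the classical lower bounds still bite.
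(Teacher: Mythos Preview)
Your proposal is correct and follows essentially the same approach as the paper's proof: apply \Cref{cor:tc_0_copy_in_qac0} to $\Maj$ (resp., to an arbitrary $g\in\TC^0\setminus\ACC^0$) to place the lifted function in $\BQAC^0$, and then use the fact that $\AC^0[p]$ and $\ACC^0$ have unbounded fanout---so copies are free---to transfer the Razborov--Smolensky lower bound (resp., the assumed $\ACC^0$ lower bound) to the lifted function. The paper's proof is just a terser version of what you wrote.
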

\begin{proof}
Recall that the majority function $\Maj$ is not contained in $\AC^0[p]$ \cite{razborov,smolensky}.
Hence $\Maj^{\uparrow\poly(n)}\notin\AC^0[p]$ since $\AC^0[p]$ can make copies of the input string for free.
On the other hand by \Cref{cor:tc_0_copy_in_qac0} and since $\Maj\in\TC^0$, we know $\Maj^{\uparrow\poly(n)}\in\BQAC^0$. This shows $\BQAC^0 \not \subset \AC^0[p]$.

The second result also follows from \Cref{cor:tc_0_copy_in_qac0} by taking any language in $\TC^0\setminus\ACC^0$ and noticing that classical copies are free for $\ACC^0$ as well.
\end{proof}

\section*{Acknowledgements}
JM thanks Farzan Byramji, Sabee Grewal, Dale Jacobs, Kunal Marwaha, and Gregory Rosenthal for inspiring discussions.
KW is supported by the National Science Foundation under Grant No. DMS-2424441, and by the IAS School of Mathematics.

\bibliography{bibli}
\bibliographystyle{alphaurl}

\appendix

\section{Missing Proofs From Section \ref{sec:exact_aa}}\label{app:sec:exact-aa}

\thmflaggrover*
\begin{proof}
Let $R_1 = \mathbb{I}_{n}\otimes Z$ and $R_2 =  \mathbb{I}_{n + 1} - 2\proj{\alpha_1}$ be two reflections.
Note that $R_2$ has depth $O(d)$ and size $O(s)$ since
    \[
        R_2=\mathbb{I}_{n + 1} - 2\proj{\alpha_1} = V(\mathbb{I}_{n + 1} - 2\proj{0^{n + 1}})V^{\dag}.
    \]
    For $t\in\mathbb N$, define 
    \[
        \ket{\alpha_t} = \sin{t\theta}\ket{\psi_0}\ket{0} + \cos{t\theta}\sket{\psi_1}\ket{1}
    \]
    and observe that
    \begin{align*}
        R_2R_1\ket{\alpha_t} 
        &= (\mathbb{I}_{n + 1} - 2\proj{\alpha_1}) \left( \sin{t\theta}\ket{\psi_0}\ket{0} - \cos{t\theta}\sket{\psi_1}\ket{1} \right) \\
        &= (-2\sin^2{\theta}\sin{t\theta} +\sin{2\theta}\cos{t\theta} + \sin{t\theta})\ket{\psi_0}\ket{0}\\
        &\qquad\qquad+ (-\sin{2\theta}\sin{t\theta} + 2\cos{t\theta}\cos^2{\theta} - \cos{t\theta})\sket{\psi_1}\ket{1}\\
        &= \sin{((t + 2)\theta)}\ket{\psi_0}\ket{0} + \cos{((t + 2)\theta)}\sket{\psi_1}\ket{1}\\
        &= \ket{\alpha_{t + 2}}.
    \end{align*}
    Define $C=(R_2R_1)^kV$.
    Since $\theta=\frac\pi{4k+2}$, we have $C\ket{0^{n+1}}=\ket{\alpha_{2k + 1}} = \ket{\psi_0}\ket{0}$ as desired.
\end{proof}

\fctmoddethash*
\begin{proof}
Let $m\ge2$ be the smallest integer such that $i\not\equiv j\pmod m$ for all $i\ne j\in S$.
Note that $m$ exists and in particular $m\le n+1$.
In addition, $m\ge|S|$ since different $i\in S$ needs to occupy a different residue modulo $m$.

By the choice of $m$, any integer $m'<m$ divides $|i-j|$ for some $i\ne j\in S$.
This implies that the least common multiple of $1,2,\ldots,m-1$, denoted $\mathsf{lcm}(1,2,\ldots,m-1)$, divides $\prod_{i\ne j\in S}|i-j|$.
In particular,
$$
\mathsf{lcm}(1,2,\ldots,m-1)\le\prod_{i\ne j\in S}|i-j|\le n^{|S|^2}
$$
as $i,j\in S\subseteq[n]$.
On the other hand, it is known that the second Chebyshev function $\psi(m-1):=\ln\mathsf{lcm}(1,2,\ldots,m-1)$ has the asymptotic behavior $\psi(x)=\Theta(x)$ (see e.g., \cite{wiki:chebyfunc}).
Hence
$$
m\le2(m-1)=\Theta(\psi(m-1))\le O\left(|S|^2\log(n)\right)
$$
as desired.
\end{proof}

\fctpolylogsyminacpolylog*

\begin{proof}
Recall that $\EX_k$ is the exact Hamming weight function with weight $k$. We first construct the $\QAC$ circuit for each $\EX_k$.
For the general $f$, observe that $f(x)=\bigvee_{k\in S}\EX_k(x)$ for some $S\subseteq\{0,1,\ldots,m\}$.
Hence, to compute $f(x)$, we first make $|S|\le\polylog(n)$ copies of $x$ using $\Fanout_{|S|}$ from \Cref{cor:poly-log-fan}.
Then we compute $\EX_k(x)$ in parallel separately using those copies and finally obtain $f(x)$ by taking an $\OR$ of the results.

Now we focus on $\EX_k$.
Define $r=\left\lfloor\log(m)\right\rfloor$ and $|x|$ as the Hamming weight of $x$ and
$$
\ket{\phi_t}=\frac{\ket{0}+e^{i\pi(|x|-k)/2^t}\ket1}{\sqrt2}
\quad\text{for each }t=0,1,\ldots,r.
$$
Note that $\ket{\phi_0}\ket{\phi_1}\cdots\ket{\phi_r}$ can be construct in constant depth and polynomial size as follows.
\begin{itemize}
\item We first make $r+1\le\polylog(n)$ copies of $x$ using $\Fanout_{r+1}$ from \Cref{cor:poly-log-fan}.
\item In parallel for each $t$, we use the $t$th copy of $x$ to construct $\ket{\phi_t}$:
\begin{align*}
\ket{x}\ket{0^m}
&\to\ket{x}\frac{\ket0+e^{-i\pi k/2^t}\ket1}{\sqrt2}\ket{0^{m-1}}
\tag{a single-qubit rotation gate}\\
&\to\ket{x}\frac{\ket{0^m}+e^{-i\pi k/2^t}\ket{1^m}}{\sqrt2}
\tag{a $\Fanout_{m-1}$ gate using \Cref{cor:poly-log-fan}}\\
&\to\ket{x}\frac{\ket{0^m}+e^{-i\pi k/2^t}\cdot\prod_{j\in[m]}e^{i\pi x_j/2^t}\ket{1^m}}{\sqrt2}
\tag{a layer of controlled $(\pi/2^t)$-phase gate}\\
&\to\ket{x}\frac{\ket{0}+e^{-i\pi k/2^t}\cdot\prod_{j\in[m]}e^{i\pi x_j/2^t}\ket{1}}{\sqrt2}\ket{0^{n-1}}
\tag{a $\Fanout_{m-1}$ gate using \Cref{cor:poly-log-fan}}\\
&=\ket{x}\ket{\phi_t}\ket{0^{m-1}}.
\end{align*}
Note that $m=\polylog(n)$ and thus the above operations are constant depth and $\poly(n)$ size.
\end{itemize}

By a layer of Hadamard and $X$ gates, we obtain $\ket{\psi_0}\ket{\psi_1}\cdots\ket{\psi_r}$ where $\ket{\psi_t}=XH\ket{\phi_t}$.
Then we apply the generalized Toffoli gate $\AND$ on $\ket{\psi_0}\ket{\psi_1}\cdots\ket{\psi_r}$ and store the value in another register.
Now we prove that this value equals $\EX_k(x)$.
\begin{itemize}
\item If $\EX_k(x)=1$, then $|x|=1$ and $\ket{\phi_t}=\ket+$ for all $t=0,1,\ldots,r$. Hence $\ket{\psi_t}=XH\ket+=\ket1$ for all $t$ and the final $\AND$ value equals $1$ as well.
\item If $\EX_k(x)=0$, then $|x|-k=2^{t^*}v$ for some $t^*=0,1,\ldots,r$ and odd integer $v$.
Therefore $\ket{\phi_{t^*}}=\frac{\ket0+e^{i\pi v}\ket1}{\sqrt2}=\ket-$.
Hence $\ket{\psi_{t^*}}=XH\ket-=\ket0$ and the final $\AND$ value equals $0$.
\end{itemize}

Finally we clean up the workspace by undoing the above gates except the last $\AND$. This gives the desired circuit for $\EX_k$ and also general $f$.
\end{proof}

\section{Indexing Requires Large Classical Fanout}\label{sec:index_fanout}

In this section, we show that juntas and the indexing function requires large fanout, even in $\TC^0$. This is a simple counting argument that we provide for completeness.

We start with juntas of logarithmic size.

\begin{theorem}\label{thm:junta_needs_fanout}
There exists a function $f\colon\{0,1\}^{\log(n)}\to\{0,1\}$ such that, if $f$ can be computed by a constant-depth circuit of gate set $\{\neg,(\Threshold_{\ge k})_{k\ge0})\}$, then the circuit has fanout $n^{\Omega(1)}$.
\end{theorem}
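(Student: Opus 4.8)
The plan is to use a counting (dimension) argument comparing the number of Boolean functions on $\log(n)$ bits with the number of small-fanout constant-depth threshold circuits. A function $f\colon\{0,1\}^{\log(n)}\to\{0,1\}$ is specified by its truth table of length $2^{\log(n)}=n$, so there are exactly $2^n$ such functions, and in particular there exists one that is not computed by any ``cheap'' circuit provided we can show the number of cheap circuits is subdoubly-exponential in $n$, i.e. $2^{o(n)}$. So I would fix parameters: suppose $f$ is computed by a circuit $\mathcal C$ of depth $d=O(1)$, gate set $\{\neg,(\Threshold_{\ge k})_{k\ge 0}\}$, and fanout bounded by $F$. The goal is to show that if $F=n^{o(1)}$ then only $2^{o(n)}$ functions are reachable, a contradiction.

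\textbf{Key steps.} First I would bound the total number of wires (and hence gates) in $\mathcal C$ in terms of $F$ and $d$. The circuit has only $\log(n)$ input variables; each input has fanout at most $F$, so at most $\log(n)\cdot F$ wires leave the input layer. Since every gate's fanout is at most $F$, the number of wires entering layer $i+1$ is at most $F$ times the number of gates in layer $i$; iterating over the constant depth $d$ shows the total wire count, and thus the number of gates, is at most $(\log n)\cdot F^{O(d)} = F^{O(1)}\cdot\mathrm{polylog}(n)$. Second, I would bound the number of circuits with this many gates: each gate is either $\neg$ or $\Threshold_{\ge k}$ for some $k$ bounded by its fanin (hence by the total size), and its inputs are chosen among the previously-computed wires; a crude but sufficient count gives at most $(\text{size})^{O(\text{size})}\cdot(\text{size})$ choices, i.e. $2^{O(\text{size}\log\text{size})}$ distinct circuits. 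Third, combine: if $F\le n^{\varepsilon}$ for small enough constant $\varepsilon$ (depending on $d$), then $\text{size}\le n^{O(\varepsilon)}$, so the number of realizable functions is at most $2^{n^{O(\varepsilon)}\cdot\log n}=2^{o(n)}<2^n$. Hence some $f$ requires fanout $F\ge n^{\Omega(1)}$, where the hidden constant in the exponent depends on $d$ — but since $d$ is a fixed constant, this is $n^{\Omega(1)}$ as claimed.

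\textbf{Main obstacle.} The delicate point is the bookkeeping in the size bound: one must be careful that ``fanout $F$'' genuinely forces the layer-by-layer blowup to be only $F^{O(d)}$ and does not accidentally permit an exponential number of gates (e.g. gates with huge fanin but the circuit still being small). Since fanin is not a priori bounded, a gate could read from all previously computed wires; the argument must therefore track wires rather than just gates, and note that the number of wires is controlled because each wire is an \emph{output} edge of some gate or input, and output-degree is exactly what the fanout restriction caps. Once this is set up correctly — total wires $\le (\log n) F^{O(d)}$ — everything else is routine counting. A secondary subtlety is that threshold gates come in $n^{O(1)}$ flavors (one per threshold value up to the fanin), but since the fanin is bounded by the size $n^{o(1)}$, this contributes only a $2^{o(n)}$ factor and is harmless. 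The indexing-function version presumably follows by observing that $\mathsf{Index}_{\log n}$ restricted appropriately realizes an arbitrary $\log n$-junta on its address bits, or by the same direct counting applied to the indexing function's known structure; I would state it as a corollary of the junta lower bound.
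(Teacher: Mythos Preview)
Your proposal is correct and takes essentially the same approach as the paper: a counting argument comparing the $2^n$ Boolean functions on $\log n$ bits against the number of depth-$d$, fanout-$F$ threshold circuits, yielding $F\ge n^{\Omega(1)}$. The paper's proof is a three-line version of exactly this idea, and your anticipated corollary for the indexing function (restrict the data bits to hard-wire an arbitrary $\log n$-bit function onto the address bits) is precisely what the paper does as well. Your discussion of the wire-versus-gate subtlety---that unbounded fanin forces one to bound the number of \emph{wires}, which is what the fanout cap actually controls---is more explicit than the paper's terse write-up, but the argument is the same.
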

\begin{proof}
Any circuit of depth $d$ and fanout $r$ has at most $O(r^d)$ gates.
Since each gate has $O(r^d)+\log(n)$ possibilities, there are at most $(r^d+\log(n))^{O(r^d)}$ many such circuits.
On the other hand, there are $2^n$ many distinct Boolean functions on $\log(n)$ bits, which requires distinct circuits.
This means $r=n^{\Omega(1)}$ as $d=O(1)$.
\end{proof}

Recall the definition of the indexing function $\Ind \colon \{0, 1\}^n \times \{0,1\}^{\log(n)} \to \{0, 1\}$ that
\begin{align*}
    \Ind(x_1,\dots x_n, i_1, \dots, i_{\log(n)}) = x_{i_1 \cdots i_{\log(n)}}.
\end{align*}

\begin{corollary}\label{cor:indexing_needs_fanout}
If $\Ind$ can be computed by a constant-depth circuit of gate set $\{\neg,(\Threshold_{\ge k})_{k\ge0}\}$, then the circuit has fanout $n^{\Omega(1)}$.
\end{corollary}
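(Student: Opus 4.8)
The plan is to reduce the hardness of $\Ind$ to the hardness of arbitrary logarithmic-size juntas established in \Cref{thm:junta_needs_fanout}. The key observation is that $\Ind$ is a \emph{universal} gadget for $\log(n)$-bit functions: by hardwiring the $n$ data bits $x_1, \ldots, x_n$ to the truth table of any desired function $f \colon \{0,1\}^{\log(n)} \to \{0,1\}$, the restricted circuit on the remaining $\log(n)$ address bits $i_1, \ldots, i_{\log(n)}$ computes exactly $f$. So first I would take the function $f$ from \Cref{thm:junta_needs_fanout} and let $T \in \{0,1\}^n$ be its truth table, i.e., $T_j = f(j_1 \cdots j_{\log n})$ where $j$ ranges over $[n]$ identified with $\{0,1\}^{\log n}$.

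Next, suppose for contradiction that $\Ind$ is computed by a constant-depth circuit $D$ over the gate set $\{\neg, (\Threshold_{\ge k})_{k \ge 0}\}$ with fanout $r = n^{o(1)}$. Substituting the constants $x_j = T_j$ into the input wires of $D$ yields a circuit $D'$ on $\log(n)$ input bits that computes $f$. Fixing input bits to constants does not increase depth or fanout — constant-$0$ and constant-$1$ wires can be propagated and simplified away (a threshold gate with some constant inputs is still a threshold gate on the remaining inputs, possibly with a shifted threshold, and possibly degenerating to a constant), and $\neg$ of a constant is a constant. Hence $D'$ is a constant-depth circuit of the same gate set computing $f$ with fanout at most $r = n^{o(1)}$. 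This contradicts \Cref{thm:junta_needs_fanout}, which says any such circuit for $f$ must have fanout $n^{\Omega(1)}$. Therefore $D$ must have fanout $n^{\Omega(1)}$.

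The only mildly delicate point — and the main thing to be careful about rather than a genuine obstacle — is the bookkeeping in the substitution step: one must confirm that plugging constants into a $\Threshold_{\ge k}$ gate keeps it within the allowed gate set (it becomes $\Threshold_{\ge k'}$ on the surviving inputs for an appropriate $k'$, or a constant), and that the notion of ``fanout'' used in \Cref{thm:junta_needs_fanout} (max number of output wires per gate) is not affected by deleting wires or gates. Since deletions only decrease fanout and depth, this is immediate. Thus the whole argument is a one-line restriction reduction, and I would present it essentially as above, citing \Cref{thm:junta_needs_fanout} for the punchline.
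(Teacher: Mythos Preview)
Your proposal is correct and follows exactly the paper's approach: hardwire the data bits of $\Ind$ to the truth table of the hard $\log(n)$-bit function from \Cref{thm:junta_needs_fanout}, observe that the restricted circuit computes that function with no increase in depth or fanout, and derive a contradiction. Your extra care about threshold gates remaining threshold gates under constant substitution is a nice touch that the paper leaves implicit.
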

\begin{proof}
For every $x\in\{0,1\}^n$, we define $\Ind_x\colon\{0,1\}^{\log(n)}\to\{0,1\}$ by 
$$
\Ind_x(i_1,\ldots,i_{\log(n)})=\Ind(x,i_1,\ldots,i_{\log(n)}).
$$
Then $\Ind_x$ enumerates all Boolean functions on $\log(n)$ bits, including the one from \Cref{thm:junta_needs_fanout}.
This completes the proof.
\end{proof}

\section{Towards Better Copy Complexity}\label{sec:better_copy_complexity}

In this section, we present some additional techniques to improve the copy complexity of various symmetric functions.

\subsection{Truncated Parallel Repetition}

The computation of $\EX_{n/2}(x)$ in \Cref{sec:construct_w} is rather weak in that it may only be correct with probability $1/n^2$ if $|x|\ne n/2$ (see \Cref{cor:general_w_test}).
In this part, we show how to moderately improve this to $\polylog(n)/n$.

In general, this error reduction is achieved by repeating the test in parallel, which is the key idea in \Cref{sec:sep_qac0_lightweight}, i.e., there we simply consume more copies of the input string in order to perform more runs the $W$ test.

However in $\QAC^0$, we only know how to make $\polylog(n)$ copies via \Cref{cor:poly-log-fan}, which translates to $\polylog(n)$ parallel runs and a success rate of $\polylog(n)/n^2$.

We show in \Cref{thm:parallel_rep_trunc} below how to achieve $\polylog(n)/n$ success rate, which is comparable to per $n\cdot\polylog(n)$ parallel runs.

\begin{theorem}\label{thm:parallel_rep_trunc}
For every $n$, there exists a constant-depth polynomial-size $\QAC$ circuit such that 
\begin{itemize} 
\item it outputs $1$ with probability at least $1-2^{-\polylog(n)}$ if $\EX_{n/2}(x)=1$;
\item and outputs $0$ with probability at least $\polylog(n)/n$ if $\EX_{n/2}(x)=0$.
\end{itemize}
\end{theorem}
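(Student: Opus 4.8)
The goal is to boost the soundness of the $W$ test from $1/n^2$ to $\polylog(n)/n$ while only using the $\polylog(n)$-size fanout available to us via \Cref{cor:poly-log-fan}. The key observation is that a single run of the $W$ test has a ``structured'' failure probability: when $|x|\ne n/2$, the rejection probability is exactly $1-\left(\frac{n-2|x|}{n}\right)^2=\frac{(n-2|x|)(2|x|)}{(n/2)^2}\cdot\frac14+\dots$, which is only as small as $1/n^2$ in the worst case $|x|=n/2\pm1$, but is $\Omega(1/n)$ or larger for most values of $|x|$, and in fact is $\Omega(k^2/n^2)$ when $\bigl||x|-n/2\bigr|=k$. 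So the plan is: first, use the $\polylog(n)$ copies of the input (from \Cref{cor:poly-log-fan}) to run $\polylog(n)$ independent $W$ tests in parallel and take the $\AND$; this already gives soundness roughly $(1-1/n^2)^{\polylog(n)}$, which is too weak. Instead, I want to \emph{shift} the target Hamming weight before amplifying.

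The main idea is \textbf{truncated parallel repetition via padding}. Given the input $x$ with $|x|=n/2+k$ for some unknown $k\ne 0$, I cannot distinguish $k$ from $0$ well in one shot, but I \emph{can} run, in parallel, a family of $W$-tests with different padding offsets: for each offset $j$ in some range $J$ of size $|J|=\polylog(n)$, apply \Cref{lem:w_test} (via the padding trick of \Cref{cor:general_w_test}) to detect Hamming weight $n/2+j$ rather than $n/2$. Concretely, pad $x$ with a fixed string so that ``$|x|=n/2$'' becomes ``$|\text{padded}|=m/2$'' for $m=n+2|j|$ and the appropriate padding pattern. Each such test has the property: if $|x|=n/2$ (so $f(x)=1$) and $j=0$, the $j=0$ test accepts with certainty; and for $j\ne 0$ the $j$-th test rejects $x$ with probability $\Omega(j^2/n^2)$. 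Crucially, if $|x|=n/2$, \emph{all} of these shifted tests behave predictably — the $j=0$ test always accepts, so taking an $\AND$ over $j\in J$ together with $\polylog(n)$ independent repetitions of each is still fine for completeness as long as we handle the $j\ne 0$ tests correctly (e.g., by OR-ing in their complements, or by only using the $j=0$ coordinate for the yes-case and using the spread of offsets only to catch no-instances). The combinatorial heart is: for \emph{any} $x$ with $|x|\ne n/2$, there exists some offset $j\in J$ (depending on $x$) for which the $j$-th shifted $W$-test rejects with probability $\Omega(1/(n/|J|)) = \Omega(\polylog(n)/n)$ — because the ``bad'' values of $k=|x|-n/2$ for which every offset-$j$ test is near-certain to accept form a set of size $o(n/\polylog(n))$, since a test shifted by $j$ is bad only when $|x|-n/2\approx j$, i.e., each offset only ``fails'' on an $O(1)$-width window.

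Let me restate the construction more carefully. Set $L=\polylog(n)$. For $j=0,1,\dots,L-1$, let $C_j$ be the circuit from \Cref{cor:general_w_test} tuned (by padding) to detect Hamming weight exactly $n/2$ when the relevant ``shifted'' symmetric function is applied, but arranged so that $C_j$'s soundness error on an input of weight $n/2+k$ is $1-\left(\frac{m-2|y|}{m}\right)^2$ where $y$ is $x$ padded to force the gap to be $|k-j|$-ish; pick the padding so this quantity is $\Omega((|k-j|)^2/n^2)$ whenever $k\ne j$, hence $\Omega(1/n^2)\cdot(|k-j|)^2$. Run $r=\polylog(n)$ independent copies of each $C_j$ (total $rL=\polylog(n)$ copies of $x$, within budget), and let $z_j$ be the $\AND$ of those $r$ runs. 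Then take the final answer to be... the subtle point: for the yes-case $k=0$ we need $z_0=1$ with certainty, but for $j\ne0$, $z_j$ may be $0$ even though $f(x)=1$. So the output should be something like $z_0 \wedge \bigwedge_{j\ne 0}(z_j \vee g_j)$ where $g_j$ is a predicate that is $1$ exactly when the input ``looks like'' weight $n/2$ to test $j$ — but that is circular. A cleaner fix: output $\bigvee_{j} \bigl(z_j \wedge (\text{the }j\text{-th test's target is consistent})\bigr)$; or simplest of all, since we only need a \emph{one-sided} guarantee matching the theorem statement's asymmetry, output $\bigwedge_{j\in J} w_j$ where $w_j$ is designed so that $w_j=1$ whenever $|x|=n/2$ (achieved because the $j$-th test, being a shifted exact-weight detector, returns a \emph{known deterministic value} on weight $n/2$, which we can XOR-correct to $1$), and for $|x|\ne n/2$ there is at least one $j$ with $\Pr[w_j=0]\ge \polylog(n)/n$. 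The point is that on a weight-$n/2$ input, every shifted test has a \emph{deterministic} outcome (the state from \Cref{lem:w_test} collapses to $\ket{0^{n+a}}\ket b$ with the amplitude $\frac{m-2|y|}{m}$, which equals a fixed rational number we can compute and correct), so completeness $1-2^{-\polylog(n)}$ — in fact exactly $1$ for the deterministic part, with the $2^{-\polylog(n)}$ slack coming from any remaining use of approximate subroutines — is maintained. I would set up the bookkeeping so that the final circuit's output is $1$ with certainty on yes-instances.

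\textbf{Expected main obstacle.} The technical crux, and where I expect to spend the most effort, is the \emph{quantitative packing argument}: showing that the $O(1)$-width ``blind spots'' of the $L=\polylog(n)$ shifted tests, even after accounting for the fact that the rejection probability of the $j$-th test on weight $n/2+k$ degrades quadratically in $|k-j|$, still cover enough of the relevant range $\{k : |k|\le n/2\}$ that every $k\ne 0$ is within distance $O(\sqrt{n/L}\cdot\text{const})$ of some offset $j$, yielding rejection probability $\Omega((\text{that distance})^2/n^2)=\Omega(1/(nL)\cdot L)=\Omega(1/n)$ — wait, I need the offsets spaced $O(\sqrt{n/L})$ apart, so $L$ offsets cover a range of $L\cdot\sqrt{n/L}=\sqrt{nL}$, which is $o(n)$, not all of $[n]$. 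This is the real difficulty: a naive spacing does not cover the whole range. The resolution must be that we \emph{don't} need to catch \emph{every} $k$ with the $\polylog(n)/n$ guarantee uniformly — we need that for each $k\ne 0$, \emph{some} test rejects with probability $\ge \polylog(n)/n$; and the $j=0$ test alone already rejects with probability $\ge 1-\left(\frac{n-2|x|}{n}\right)^2 \ge \frac{4|k|(n/2)}{n^2}\cdot(\dots) = \Omega(|k|/n)$ for $|k|\ge 1$, which is $\ge 1/n$ always and $\ge \polylog(n)/n$ once $|k|\ge\polylog(n)$! So the \emph{only} problematic inputs are those with $1\le |k| < \polylog(n)$, i.e., $|x|$ within a $\polylog(n)$-window of $n/2$ — and \emph{that} finite window is exactly what the $\polylog(n)$ shifted tests, spaced one apart, cover perfectly: for $|k|<L$, the test with offset $j=k$ is tuned to reject everything of gap exactly... no, to reject when the gap is \emph{nonzero}, and here we've shifted so the effective gap is $|k - k| $... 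I need offset $j=k$ to make $x$ look like a no-instance, which requires a test that rejects weight-$(n/2+k)$ strings, i.e., a shifted exact-$n/2$ detector. Anyway: the honest statement of the obstacle is making this two-regime analysis airtight — the ``far'' regime $|k|\ge\polylog(n)$ handled by the unshifted test's $\Omega(|k|/n)$ rejection, and the ``near'' regime $1\le|k|<\polylog(n)$ handled by a dedicated shifted test for each such $k$ — while simultaneously verifying that all $\polylog(n)$ shifted tests have \emph{deterministic} (hence correctable-to-$1$) behavior on the true yes-instance $|x|=n/2$, so that taking their conjunction preserves perfect (or near-perfect) completeness. I would also need to double-check that the padding in \Cref{cor:general_w_test} only blows up $n$ by a constant factor for each offset, so $\poly(n)$ size is preserved.
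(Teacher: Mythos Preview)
Your proposal has a genuine gap that breaks the argument. The central error is in your ``far regime'' analysis: you claim that the unshifted $W$ test rejects an input of weight $n/2+k$ with probability $\Omega(|k|/n)$, but this is wrong. From \Cref{lem:w_test}, the rejection probability (output $0$) is exactly $\left(\frac{n-2|x|}{n}\right)^2 = \frac{4k^2}{n^2}$, which is $\Theta(k^2/n^2)$, not $\Theta(|k|/n)$. (Earlier in your writeup you even state the correct $\Omega(k^2/n^2)$ bound, so the $\Omega(|k|/n)$ claim seems to come from confusing the accept and reject probabilities.) With the correct formula, the far regime where a single test already rejects with probability $\ge\polylog(n)/n$ is $|k|\gtrsim\sqrt{n\cdot\polylog(n)}$, not $|k|\ge\polylog(n)$. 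That leaves a ``near regime'' of width $\sim\sqrt n$, which your $\polylog(n)$ shifted tests cannot cover.

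In fact no arrangement of $L=\polylog(n)$ offsets can work. Writing $p_k=\sum_i((j_i-k)/n)^2$ for the (approximate) probability that the $\AND$ of your shifted tests rejects, one has $p_1-p_0=(L-2\sum_ij_i)/n^2$; with any symmetric choice of offsets this is $L/n^2=\polylog(n)/n^2$. So you cannot simultaneously have $p_0$ small (needed for completeness) and $p_1\ge\polylog(n)/n$. The hardest no-instance $k=\pm1$ is simply indistinguishable from $k=0$ at the level you need using only $\polylog(n)$ independent $W$-type tests. Relatedly, your ``deterministic correction'' idea for completeness does not work: on a weight-$n/2$ input the $j$-th shifted test produces a genuine superposition $\frac{j}{n}\ket{0^{n+a}}\ket{0}+\sqrt{1-(j/n)^2}\ket{\phi}\ket{1}$, and the output qubit is entangled with the ancillas, so knowing the amplitude does not let you rotate it to $\ket1$.

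The paper's proof takes a completely different route. Rather than running $\polylog(n)$ tests, it effectively runs $m=n\cdot\polylog(n)$ parallel $W$ tests --- which is exactly what is needed, since $1-(1-1/n^2)^m=\polylog(n)/n$. The trick is to prepare (a $2^{-\polylog(n)}$-approximation of) $\ket{W_x}^{\otimes m}$ from a \emph{single} copy of $x$: first prepare $\ket{W}^{\otimes m}$, expand it in the basis $\ket{A_a}$ for $a\in[n]^m$, and observe that the phase needed to turn this into $\ket{W_x}^{\otimes m}$ is $(-1)^{\sum_i x_i\cdot\Parity(A_a[i])}$. Since each row $A_a[i]$ has expected weight $m/n=\polylog(n)$, one can replace $\Parity$ by a truncated version $f$ that agrees with parity on weights $\le\polylog(n)$ and is $0$ otherwise; this $f$ is a symmetric $\AC^0$ function and hence in $\EQAC^0$ by \Cref{cor:sym_ac0_in_qac0}. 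Applying $U_f$ once per row (using each $x_i$ exactly once) and a controlled-phase gives the desired approximate $\ket{W_x}^{\otimes m}$, after which the standard uncompute-and-$\OR$ analysis of \Cref{lem:w_test} applies $m$ times in parallel.
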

\begin{proof}
We assume without loss of generality that $n$ is an even number and the overall circuit is depicted in \Cref{fig:parallel_w_test}.

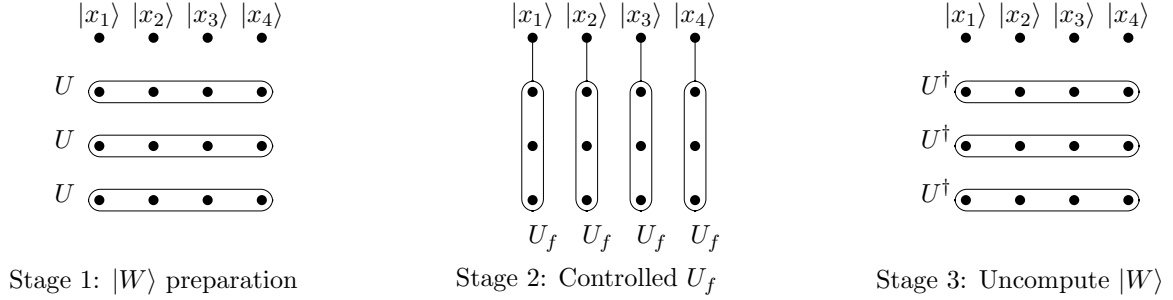
\begin{figure}[H]
\centering
\scalebox{0.9}{\newcommand{\xmax}{4}
\newcommand{\ymax}{4}
\newcommand{\tikzscale}{.8}
\newcommand{\xpicshift}{4}
\newcommand{\stageoffset}{.5}

\begin{tikzpicture}[scale=\tikzscale]
  \begin{scope}[]
  \foreach \x in {1,...,\xmax} {
      \foreach \y in {1,...,\ymax} {
          \node[circle, fill, inner sep=1.5pt] (N\x\y) at (\x,\y) {};
      }
      \node[anchor=south] at (\x,\ymax) {$\ket{x_{\x}}$};
  }
  \foreach \y in {1,..., 3} {
      \draw[rounded corners=5pt] (.8, \y-.2) rectangle (\xmax+0.2, \y+.2);
      \node[anchor=south west] at (0, \y - .2) {$U$};
}
\end{scope}

\begin{scope}[shift={(\xmax + \xpicshift,0)}]

  \foreach \x in {1,...,\xmax} {
    \node[anchor=south] at (\x,\ymax) {$\ket{x_{\x}}$};
      \foreach \y in {1,...,\ymax} {
          \node[circle, fill, inner sep=1.5pt] (N\x\y) at (\x,\y) {};
      }
  }

    \foreach \x in {1,...,\xmax} {
      \draw[rounded corners=5pt] (\x-0.2, .8) rectangle (\x+0.2, \ymax -.8);
      \node[anchor=north] at (\x + .2, .7) {$U_f$};
        }
    \foreach \x in {1,...,\xmax} {
      \draw (\x, \ymax - .8) -- (\x, \ymax);
        }
\end{scope}
\begin{scope}[shift={(2*\xmax + 2*\xpicshift,0)}]
  \foreach \x in {1,...,\xmax} {
      \foreach \y in {1,...,\ymax} {
          \node[circle, fill, inner sep=1.5pt] (N\x\y) at (\x,\y) {};
      }
      \node[anchor=south] at (\x,\ymax) {$\ket{x_{\x}}$};
  }
  \foreach \y in {1,..., 3} {
      \draw[rounded corners=5pt] (.8, \y-.2) rectangle (\xmax+0.2, \y+.2);
      \node[anchor=south west] at (0, \y - .2) {$U^{\dag}$};
}
\end{scope}
\node at (\xmax/2,-\stageoffset) {\shortstack{Stage 1: $\ket{W}$ preparation}};
\node at (\xmax/2+\xmax+\xpicshift,-\stageoffset) {\shortstack{Stage 2: Controlled $U_f$}};
\node at (\xmax/2+2*\xmax+2*\xpicshift,-\stageoffset) {\shortstack{Stage 3: Uncompute $\ket{W}$}};
\end{tikzpicture}}
\caption{The circuit of truncated parallel repetition for $n=4$.}\label{fig:parallel_w_test}
\end{figure}

Let $m=n\cdot\polylog(n)$.
We will show that there exists a constant-depth polynomial-size $\QAC$ circuit that, on $\ket x$, prepares a state is extremely close to
\begin{align}\label{eq:thm:parallel_rep_trunc_1}
\ket{\phi_x} = \ket{x}\otimes \bigg{(}\underbrace{\frac{1}{\sqrt{n}}\sum_{i = 1}^n (-1)^{x_i}\ket{e_i}}_{\ket{W_x}}\bigg{)}^{\otimes m},
\end{align}
given which we can finish the proof with the following \Cref{clm:thm:parallel_rep_trunc_1}.

\begin{claim}\label{clm:thm:parallel_rep_trunc_1}
There exists a constant-depth polynomial-size $\QAC$ circuit such that, on $\ket{\phi_x}$, it always outputs $1$ if $\EX_{n/2}(x)=1$; and outputs $0$ with probability at least $1-(1-1/n^2)^{m}=\polylog(n)/n$ if $\EX_{n/2}(x)=0$.
\end{claim}
\begin{proof}
Let $U$ be the circuit in \Cref{thm:exact_w} preparing $\ket W$.
The proof of \Cref{lem:w_test} shows $\OR_n\circ (U^\dag\otimes\mathbb{I}_1)(\ket{W_x}\ket0)=\frac{n-2|x|}n\ket{0^n}\ket0+\sqrt{1-\left(\frac{n-2|x|}n\right)^2}\ket{\tau_x}\ket 1$ where $\ket{\tau_x}$ is orthogonal to $\ket{0^n}$ and $\OR_n$ puts the $\OR$ outcome of the first $n$ qubits in the $(n+1)$th qubit.
This means
\begin{equation}\label{eq:clm:thm:parallel_rep_trunc_1_1}
(\OR_n\circ(U^\dag\otimes\mathbb{I}_1))^{\otimes m}(\ket{W_x}\ket0)^{\otimes m}
=\left(1-\left(\frac{n-2|x|}n\right)^{2}\right)^{m/2}\ket{\tau_x}^{\otimes m}\ket{1^m}+\sum_{b\in\{0,1\}^m\setminus\{1^m\}}\ket{\star_b}\ket{b},
\end{equation}
where each $\ket{\star_b}$ is an unnormalized $nm$-qubit state.
Now we apply an $\AND$ gate of the last $m$ qubits and store the answer in an additional ancilla.
\begin{itemize}
\item If $\EX_{n/2}(x)=1$, then \Cref{eq:clm:thm:parallel_rep_trunc_1_1} is simply $\ket{\tau_x}^{\otimes m}\ket{1^m}$ and the $\AND$ outcome is a deterministic $1$;
\item otherwise $\EX_{n/2}(x)=0$, then $|x|\ne n/2$ and the amplitude of $\ket{\tau_x}^{\otimes m}\ket{1^m}$ in \Cref{eq:clm:thm:parallel_rep_trunc_1_1} is at most $(1-1/n^2)^{m/2}$. Thus the $\AND$ outcome is $0$ with probability at least $1-(1-1/n^2)^{m}$.
\end{itemize}
This completes the proof of \Cref{clm:thm:parallel_rep_trunc_1} by setting the circuit to be $(\OR_n\circ(U^\dag\otimes\mathbb{I}_1))^{\otimes m}$ followed with an $\AND_m$ gate.
\end{proof}

Now we turn to approximating \Cref{eq:thm:parallel_rep_trunc_1}.
For convenience, we use $A\in\{0,1\}^{n\times m}$ to denote a binary matrix of $n$ rows and $m$ columns.
For each $i\in[n]$, we use $A[i]$ to denote the $i$th row of $A$ and use $|A[i]|$ to denote the Hamming weight of $A[i]$.
For $a\in[n]^m$, we associate it with a matrix $A_a\in\{0,1\}^{n\times m}$ by setting the $j$th column as the indicator vector $e_{a_j}$.
Then
\begin{align}
\ket{W_x}^{\otimes m}
&=\left(\frac1{\sqrt n}\sum_i(-1)^{x_i}\ket{e_i}\right)^{\otimes m}
=\frac1{\sqrt{n^m}}\sum_{a\in[n]^m}(-1)^{x_1s_1(a)+\cdots+x_ns_n(a)}\ket{A_a}
\tag{$s_i(a)$ is the number of $i$'s appearance in $a$}\\
&=\frac1{\sqrt{n^m}}\sum_{a\in[n]^m}(-1)^{x_1|A_a[1]|+\cdots+x_n|A_a[n]|}\ket{A_a}
\tag{since $s_i(a)=|A_a[i]|$}\\
&=\frac1{\sqrt{n^m}}\sum_{a\in[n]^m}(-1)^{x_1\cdot\Parity(A_a[1])+\cdots+x_n\cdot\Parity(A_a[n])}\ket{A_a}.
\label{eq:thm:parallel_rep_trunc_2}
\end{align}
Let $h=\polylog(n)$ and define $t=h\cdot m/n=\polylog(n)$.
Now we define symmetric function $f\colon\{0,1\}^m\to\{0,1\}$ by
$$
f(x)=\begin{cases}
\Parity(x) & |x|\le t,\\
0 & \text{otherwise,}
\end{cases}
$$
and define 
\begin{equation}\label{eq:thm:parallel_rep_trunc_3}
\ket{\rho_x}=\frac1{\sqrt{n^m}}\sum_{a\in[n]^m}(-1)^{x_1\cdot f(A_a[1])+\cdots+x_n\cdot f(A_a[n])}\ket{A_a}.
\end{equation}

Then by a simple balls-into-bins calculation, it approximates $\ket{W_x}^{\otimes m}$.

\begin{claim}\label{clm:thm:parallel_rep_trunc_2}
The $\ell_2$ distance between $\ket{\rho_x}$ and $\ket{W_x}^{\otimes m}$ is $2^{-\polylog(n)}$.
\end{claim}
\begin{proof}
Comparing \Cref{eq:thm:parallel_rep_trunc_2} and \Cref{eq:thm:parallel_rep_trunc_3}, we have
\begin{align*}
\left\|\ket{\rho_x}-\ket{W_x}^{\otimes m}\right\|_2^2
&\le4\cdot\Pr_{a\sim[n]^m}\left[|A_a[i]|>t\text{ for some }i\in[n]\right]\\
&\le4n\cdot\Pr_{a\sim[n]^m}\left[|A_a[1]|>t\right]
\tag{by union bound and symmetry}\\
&=4n\cdot\Pr\left[\textsf{binom}(m,1/n)>t\right],
\end{align*}
where $\textsf{binom}(m,1/n)$ is the binomial distribution of $m$ coins with bias $1/n$.
Since $t=h\cdot m/n$ and $h=\polylog(n)$, standard concentration implies that the above probability is $2^{-\polylog(n)}$.
\end{proof}

We also recall from \Cref{thm:symAC0characterization} and \Cref{cor:sym_ac0_in_qac0} that $f$ can be exactly computed with a constant-depth polynomial-size $\QAC$ circuit.
At this point, it suffices to construct $\ket x\ket{\rho_x}$ as follows:
\begin{align*}
\ket x
&\xrightarrow{U^{\otimes m}}\ket{x}\ket{W}^{\otimes m}
\tag{$U$ is the circuit in \Cref{thm:exact_w} preparing $\ket W$}\\
&=\frac1{\sqrt{n^m}}\sum_{a\in[n]^m}\ket x\ket{A_a}
\tag{by the definition of $\ket W$ and $A_a$}\\
&=\frac1{\sqrt{n^m}}\sum_{a\in[n]^m}\bigotimes_{i\in[n]}\left(\ket{x_i}\ket{A_a[i]}\right)
\tag{separating rows of $A_a$}\\
&\xrightarrow{U_f^{\otimes n}}\frac1{\sqrt{n^m}}\sum_{a\in[n]^m}\bigotimes_{i\in[n]}\left(\ket{x_i}\ket{A_a[i]}\ket{f(A_a[i])}\right)
\tag{each $U_f$ evaluates $f(A_a(i))$ and by \Cref{cor:sym_ac0_in_qac0}}\\
&\xrightarrow{R^{\otimes n}}
\frac1{\sqrt{n^m}}\sum_{a\in[n]^m}\bigotimes_{i\in[n]}\left((-1)^{x_i\cdot f(A_a[i])}\ket{x_i}\ket{A_a[i]}\ket{f(A_a[i])}\right)
\tag{$R\colon\ket{u,v}\to(-1)^{u\cdot v}\ket{u,v}$ is a two-qubit unitary}\\
&\xrightarrow{U_f^{\otimes n}}\frac1{\sqrt{n^m}}\sum_{a\in[n]^m}\bigotimes_{i\in[n]}\left((-1)^{x_i\cdot f(A_a[i])}\ket{x_i}\ket{A_a[i]}\right)
\tag{uncompute $U_f^{\otimes n}$}\\
&=\frac1{\sqrt{n^m}}\sum_{a\in[n]^m}(-1)^{x_1\cdot f(A_a[1])+\cdots+x_n\cdot f(A_a[n])}\ket x\ket{A_a}=\ket x\ket{\rho_x}.
\end{align*}
This, combined with \Cref{clm:thm:parallel_rep_trunc_2} and \Cref{clm:thm:parallel_rep_trunc_1}, completes the proof of \Cref{thm:parallel_rep_trunc}.
\end{proof}

At this point, we remark that \emph{if} one can improve the soundness bound $\polylog(n)/n$ in \Cref{thm:parallel_rep_trunc} to constant (or intuitively, achieving $\sim n^2$ parallel runs in $\QAC^0$), then we have $\Parity\in\QAC^0$. 
A direct corollary of \Cref{thm:parallel_rep_trunc} is an improvement of \Cref{cor:general_w_test}, at a negligible sacrifice on the completeness.

\begin{corollary}\label{cor:parallel_rep_trunc}
For every $n$ and $0\le k\le n$, there is a constant-depth polynomial-size $\QAC$ circuit that outputs $1$ with probability at least $1-2^{-\polylog(n)}$ if $\EX_k(x)=1$; and outputs $0$ with probability at least $\polylog(n)/n$ if $\EX_k(x)=0$.

As a consequence, $m=n/\polylog(n)$ input copies suffice for constant-depth polynomial-size $\QAC$ circuits to decide $\EX_k$ with completeness $1-2^{-\polylog(n)}$ and soundness $2^{-\polylog(n)}$.
\end{corollary}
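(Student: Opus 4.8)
The plan is to derive both parts from the truncated parallel repetition of \Cref{thm:parallel_rep_trunc}, together with the padding trick already used in \Cref{cor:general_w_test} and a final round of $\AND$-amplification across input copies.

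\emph{First claim.} Given an $n$-bit input $x$, I would pad it with the fixed $n$-bit string $1^{n-k}0^k$ to form $y = x\circ 1^{n-k}0^k\in\{0,1\}^{2n}$, so that $\EX_k(x) = \EX_{n}(y) = \EX_{(2n)/2}(y)$. Since $2n$ is even, \Cref{thm:parallel_rep_trunc} supplies a constant-depth polynomial-size $\QAC$ circuit which, on $\ket y$, accepts with probability at least $1-2^{-\polylog(2n)}$ when $|y| = n$, and rejects with probability at least $\polylog(2n)/(2n)$ when $|y|\ne n$. Hard-wiring the padded constants costs only a layer of $X$ gates and leaves depth and size unchanged up to constants; rewriting $\polylog(2n) = \polylog(n)$ then yields the claimed circuit for $\EX_k$ on $n$-bit inputs.

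\emph{Copy-complexity claim.} I would amplify the soundness of the circuit above by ordinary $\AND$-repetition, spending one fresh copy of $x$ per run. The point is that the internal polylog parameter $h$ in \Cref{thm:parallel_rep_trunc} is free, so I can arrange the single-run circuit to reject a no-instance with probability at least $h(n)/n$ for a polylog $h$ of my choosing, say $h(n) = \log^3(n)$. Running this circuit independently on $r$ copies of $x$ and taking the $\AND$ of the $r$ outcomes, a no-instance is accepted with probability at most $(1-h(n)/n)^r\le\exp(-r\,h(n)/n)$, while a yes-instance is accepted with probability at least $1 - r\cdot 2^{-\polylog(n)}$ by a union bound. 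Taking $r = n/\log(n)$ makes $r\,h(n)/n = \log^2(n)$, so the no-instance acceptance probability is at most $2^{-\log^2(n)} = 2^{-\polylog(n)}$; and since $r\le n$, the union bound keeps the yes-instance acceptance probability at least $1-2^{-\polylog(n)}$. As $r = n/\log(n) = n/\polylog(n)$, this circuit has constant depth, polynomial size, and uses $m = n/\polylog(n)$ input copies.

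The substantive content lives entirely inside \Cref{thm:parallel_rep_trunc}; here the only point needing care is the bookkeeping of the two polylog parameters --- one must choose the internal $h$ large enough and the repetition count $r$ small enough relative to it that $r\,h(n)/n$ is itself a growing polylog while $r$ stays $n/\polylog(n)$, which is precisely the tradeoff between copy count and error. Note also that since each of the $r$ parallel runs reads its own copy of $x$, no quantum fanout of $x$ beyond the $\polylog(n)$ already used inside a single run (via \Cref{cor:poly-log-fan}) is required, so the construction remains constant depth.
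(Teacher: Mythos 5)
Your proposal is correct and follows the route the paper intends (it states this corollary as a direct consequence without writing out details): pad $x$ with the fixed string $1^{n-k}0^k$ to reduce $\EX_k$ to $\EX_{(2n)/2}$ exactly as in \Cref{cor:general_w_test}, apply \Cref{thm:parallel_rep_trunc}, and then amplify soundness by taking the $\AND$ of independent runs on $n/\polylog(n)$ fresh input copies. Your bookkeeping of the two polylog parameters (choosing the internal $h$ larger than the repetition-count savings so that $r\,h(n)/n$ grows while $r=n/\polylog(n)$) is exactly the intended tradeoff, and the union bound for completeness is handled correctly.
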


\Cref{cor:parallel_rep_trunc} shows that the copy complexity of every $n$-bit symmetric function is at most $n^2/\polylog(n)$.

\subsection{Probabilistic Computation}

To further reduce copy complexity, we consider probabilistic computation, i.e., a random $\QAC^0$ circuit that correctly computes the target function with high probability.

We illustrate the idea with the parity function.
We use $\Parity^{\uparrow m}\colon(\{0,1\}^n)^m\to\{0,1\}$ to denote the $m$-copy version of the parity function, defined by
$$
\Parity^{\uparrow m}(y_1,\ldots,y_m)=\Parity(y_1)\cdot1_{y_1=\cdots=y_m}
\quad\text{for all $y_1,\ldots,y_c\in(\{0,1\}^n)^m$.}
$$
Recall that \Cref{cor:parallel_rep_trunc} shows $\Parity^{\uparrow n^2/\polylog(n)}\in\QAC^0$. Below we show probabilistic computation yields another square-root saving.

\begin{theorem}
Let $m=n^{1.5}/\polylog(n)$.
There is an ensemble of constant-depth polynomial-size $\QAC$ circuits $\{C_r\}_r$ such that 
$$
\Pr_r\left[C_r(y_1,\ldots,y_m)=\Parity^{\uparrow m}(y_1,\ldots,y_m)\right]\ge1-2^{-\polylog(n)}
\quad\text{holds for every $y_1,\ldots,y_m\in\{0,1\}^n$.}
$$
\end{theorem}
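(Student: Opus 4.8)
The plan is to first discharge the ``all copies equal'' promise exactly as in \Cref{lem:copy_lift} (a parallel layer of $\EX_0\lor\EX_m$ tests over the $n$ coordinates, combined by an $\AND$; this adds no error at all), so that it suffices to build a randomized family computing $\Parity(x)$ from $m=n^{1.5}/\polylog(n)$ copies of $x$ with failure probability $2^{-\polylog(n)}$. The reason one can hope to beat the $n^2/\polylog(n)$ bound of \Cref{cor:parallel_rep_trunc} is that that bound pays a factor $n$ for ranging over all Hamming weight classes in $\Parity(x)=\bigvee_{k\text{ odd}}\EX_k(x)$; randomness should let us cut this to roughly $\sqrt n$ classes while only paying $n/\polylog(n)$ copies per surviving class via the truncated-parallel-repetition circuit of \Cref{cor:parallel_rep_trunc}.

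The main device is a \emph{random XOR fold}. The randomness $r$ selects a hash $h\colon[n]\to[B]$, and we replace $x$ by the $B$-bit string $z$ with $z_b=\Parity(x\mid h^{-1}(b))$. Three facts drive the argument: (i) $\Parity(z)=\Parity(x)$ \emph{deterministically}, for every hash; (ii) since the buckets are disjoint, one copy of $x$ yields one copy of $z$, and when each bucket has size $\polylog(n)$ the map $x\mapsto z$ is a $\QAC^0$ circuit using only polylogarithmic fanout (\Cref{cor:poly-log-fan}); (iii) a balls-into-bins calculation shows that over a random $h$, provided $|x|$ is not too small, $|z|$ concentrates: with probability $1-2^{-\polylog(n)}$ it lies in a window of width $O(\sqrt B\,\polylog(n))$ around a value depending only on $|x|$ and $B$ — in particular around $B/2$ once $|x|\gg B\log B$. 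Fixing $B=n/\polylog(n)$ (so $z$ has $n/\polylog(n)$ bits, obtained for free from $x$) and choosing the polylog factors with care, inputs with $|x|\ge n/\polylog(n)$ have $|z|$ in a \emph{known} interval of length $\sqrt n/\polylog(n)$; we then compute $\Parity(z)=\bigvee_{k\text{ odd in the window}}\EX_k(z)$, running for each such $k$ the circuit of \Cref{cor:parallel_rep_trunc} on the $(n/\polylog(n))$-bit string $z$ and boosting its one-sided error with an $\AND$ — the one-sidedness is exactly what lets us amplify with $\AND/\OR$ rather than majority. Each $\EX_k(z)$ test costs $n/\polylog(n)$ copies of $z$, hence of $x$; multiplying by the $\sqrt n/\polylog(n)$ values of $k$ gives $m=n^{1.5}/\polylog(n)$, and a union bound over the concentration event and the $\EX_k$ tests keeps the failure probability at $2^{-\polylog(n)}$.

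It remains to handle Hamming weights too small (or, by complementation, too large) for the fold to concentrate. When $|x|\le\sqrt n$ we do not fold: $\Parity(x)=\bigvee_{k\text{ odd},\,k\le\sqrt n}\EX_k(x)$ is an $\OR$ of only $\sqrt n/2$ of the tests of \Cref{cor:parallel_rep_trunc}, again costing $n^{1.5}/\polylog(n)$ copies; the regime $|x|\ge n-\sqrt n$ is symmetric. These branches are run in parallel and the fold branch is ``trusted'' only when the cheap side-computations $\Threshold_{\ge\sqrt n}(x)$ (an $\OR$ of $\sqrt n$ many $\EX_k$ via complementation) and $\Threshold_{\ge n-\sqrt n}(x)$ certify that $|x|$ is heavy.

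The step I expect to be the main obstacle is the \emph{intermediate weight regime} $\sqrt n<|x|<n/\polylog(n)$: here $|x|$ is already too large for the direct $\OR$-over-$\EX_k$ approach, yet too small for a polylog-bucket fold to concentrate $|z|$ at a location the circuit can name in advance, and folding into larger buckets reintroduces a quadratic bucket-parity cost that pushes the copy count back above $n^{1.5}$ (one naively lands at $n^{7/4}$); moreover it is not clear how to even \emph{detect} this regime cheaply, since $\Threshold_{\ge n/\polylog(n)}$ is expensive. Resolving it seems to demand a \emph{multi-scale} fold — running folds at a geometric sequence of bucket granularities in parallel and arguing that, for the true $|x|$, some scale is simultaneously coarse enough to place the $|z|$-window near the fixed point $B/2$ and fine enough to keep bucket parities cheap — together with a tight accounting of how the polylogarithmic slacks in the window width, the per-test copy cost, and the bucket size interact. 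Making all three balance at the exponent $3/2$ is the delicate part.
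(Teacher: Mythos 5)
There is a genuine gap, and you have named it yourself: the intermediate regime $\sqrt n<|x|<n/\polylog(n)$ is simply not handled, and neither is the problem of \emph{detecting} which regime the input lies in (your fold branch is only ``trusted'' under a certificate $\Threshold_{\ge n/\polylog(n)}$-type condition that you have no cheap way to compute, and $\Threshold_{\ge\sqrt n}$ alone does not separate the fold-friendly weights from the intermediate ones). A proof that ends with ``resolving this seems to demand a multi-scale fold whose parameters must be balanced delicately'' is a research plan, not a proof, so as written the argument does not establish the theorem. The folding idea also carries avoidable overhead: the concentration of $|z|$ around $B/2$ degrades exactly when $|x|/B$ is only polylogarithmic, which is what forces your case analysis in the first place.

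The paper's proof avoids all of this with a much simpler use of the randomness: instead of hashing coordinates into buckets, it XORs the input with a uniformly random string $r$ of \emph{even} Hamming weight, hardwired into $C_r$ (so copies of $y$ become copies of $y\oplus r$ via one layer of $X$ gates). Since $r$ has even weight, $\Parity(y\oplus r)=\Parity(y)$ deterministically; and for \emph{every} fixed $y$, the weight $|y\oplus r|$ of the shifted string concentrates in the window $\frac n2\pm\sqrt{n\cdot\polylog(n)}$ with probability $1-2^{-\polylog(n)}$, because $y\oplus r$ is (essentially) a uniformly random string. One then computes the truncated parity $f(y\oplus r)=\bigvee_{k\text{ odd in the window}}\EX_k(y\oplus r)$, which is a disjunction of only $O(\sqrt{n\cdot\polylog(n)})$ exact-weight tests, each costing $n/\polylog(n)$ copies by \Cref{cor:parallel_rep_trunc}, giving $m=n^{1.5}/\polylog(n)$ total and error $2^{-\polylog(n)}$ by a union bound. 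Because the random shift recenters the weight of \emph{every} input near $n/2$, there is no case analysis on $|x|$, no regime detection, and no folding; your consistency-check layer (as in \Cref{lem:copy_lift}) and your use of \Cref{cor:parallel_rep_trunc} for the window tests are the parts of your plan that do align with the paper. If you replace the random hash fold by this random even-weight shift, your argument closes.
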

\begin{proof}
Define $f\colon\{0,1\}^n\to\{0,1\}$ by
$$
f(y)=\begin{cases}
\Parity(y) & |y|\in\frac n2\pm\sqrt{n\cdot\polylog(n)},\\
0 & \text{otherwise.}
\end{cases}
$$
Then $f$ is a disjunction of $\sqrt{n\cdot\polylog(n)}$ many exact threshold functions.
By \Cref{cor:parallel_rep_trunc}, with $m=n^{1.5}/\polylog(n)$ copies, constant-depth polynomial-size $\QAC$ circuits can compute $f$ with error $2^{-\polylog(n)}$.

Now for every $r\in\{0,1\}^n$ with even Hamming weight, we use $C_r'$ to denote the $\QAC$ circuit, on $n^{1.5}/\polylog(n)$ copies of $y$, computing $f(y\oplus r)$ with error $2^{-\polylog(n)}$.
We remark that copies of $y$ can be converted into copies of $y\oplus r$ by a layer of single-qubit gates, since $r$ is hardwired into $C_r'$.
In addition, $\Parity(y)=\Parity(y\oplus r)$ since $r$ is an even string, which equals $f(y\oplus r)$ if $|y\oplus r|\in\frac n2\pm\sqrt{n\cdot\polylog(n)}$.

Define $\QAC$ circuit $C_r$ to be $C_r'$ with an additional constant-depth layer to check that the input copies are identical.
Fix arbitrary $y_1,\ldots,y_m\in\{0,1\}^n$ and let $y=y_1$.
If $C_r(y_1,\ldots,y_m)\ne\Parity^{\uparrow m}(y_1,\ldots,y_m)$, then we have the following two cases.
\begin{itemize}
\item $|y\oplus r|\notin\frac n2\pm\sqrt{n\cdot\polylog(n)}$. This happens with probability $2^{-\polylog(n)}$ since $r$ is a uniformly random even string.
\item $|y\oplus r|\in\frac n2\pm\sqrt{n\cdot\polylog(n)}$ but $C_r'$ does not compute $f(y\oplus r)=\Parity(y)$. This happens with probability $2^{-\polylog(n)}$ by \Cref{cor:parallel_rep_trunc}.
\end{itemize}
This completes the proof with a union bound.
\end{proof}

The above construction can be generalized to other symmetric functions (see e.g., \cite{srinivasan2021probabilistic}).
\end{document}